\documentclass{article}

\usepackage[round]{natbib}
\usepackage{times}
\usepackage[margin=1in]{geometry}

\usepackage{graphicx}
\usepackage{amsmath,amsfonts,amssymb}
\usepackage{amsthm}
\usepackage{thmtools}
\usepackage{thm-restate}
\usepackage{enumitem}
\usepackage{doi}
\usepackage{hyperref}
\usepackage[capitalise]{cleveref}
\usepackage{comment}
\usepackage{multirow}

\usepackage{booktabs}
\usepackage{array}
\usepackage{tabularx}

\declaretheorem[name=Theorem,numberwithin=section]{theorem}
\declaretheorem[name=Lemma,sibling=theorem]{lemma}
\declaretheorem[name=Claim,sibling=theorem]{claim}
\declaretheorem[name=Corollary,sibling=theorem]{corollary}
\declaretheorem[name=Definition,style=definition,sibling=theorem]{definition}

\newcommand{\RR}{\mathbb{R}}
\newcommand{\Rd}{\RR^d}
\newcommand{\norm}[1]{\left\lVert#1\right\rVert}
\newcommand{\abs}[1]{\left\lvert#1\right\rvert}
\newcommand{\ip}[2]{\left\langle #1,\, #2 \right\rangle}
\newcommand{\Xspace}{\mathcal{H}}
\newcommand{\mean}{\mu}           % population mean / optimal center
\newcommand{\smean}{\hat{\mu}}    % sample mean
\providecommand{\eqdef}{:=}
\DeclareMathOperator{\poly}{poly}
\DeclareMathOperator{\dist}{dist}

\DeclareMathOperator{\cost}{cost}

\DeclareMathOperator{\costm}{cost^m}     % matched cost
\DeclareMathOperator{\OPT}{OPT}          % optimal k-mean cost
\DeclareMathOperator{\OPTQ}{\widehat{OPT}}          % empirical k-mean cost for Q

\newcommand{\Vset}{\mathcal{V}}
\newcommand{\lift}{u}
\newcommand{\Gspace}{\mathcal{G}}

\DeclareMathOperator{\Var}{Var}
\newcommand{\E}{\mathbb E}

\newcommand{\indic}[1]{ \mathbf{1}_{\{#1\}} } % notation for indicator variables

\newcommand{\sol}{\mathcal{S}_\gamma}
\newcommand{\union}{\sol(Q) \cup \sol(P)}
\newcommand{\event}{\mathcal{E}_Q}
\newcommand{\Gams}{\Gamma_\gamma}

\title{Coresets for $k$-Means: When Does Uniform Sampling Suffice?}

\author{Amir Carmel%
    \thanks{Email: \texttt{amir6423@gmail.com}
    }
    \qquad
    Robert Krauthgamer%
        \thanks{The Harry Weinrebe Professorial Chair of Computer Science.
            Work partially supported by the Israel Science Foundation grant \#1336/23.
            Email: \texttt{robert.krauthgamer@weizmann.ac.il}
        }
    \\  Weizmann Institute of Science
}
\date{}

\begin{document}
\maketitle

\begin{abstract}
Data summarization techniques, such as coresets, are essential for scaling clustering algorithms to massive datasets. 
We study uniform sampling, which is an extremely simple and ultra-efficient method to construct coresets,
in the context of weak, stable, and strong coresets for the $k$-means problem in Hilbert space.
We begin with the fundamental case $k=1$, 
and establish matching upper and lower bounds on the sample complexity of stable and weak coresets, 
finding that the two notions require different sample complexities. 
As a consequence, we obtain tight bounds for stable $1$-median coresets in $\ell_1$ and $\ell_2$ (and thus also Kendall--tau and Jaccard metrics), removing the dimension dependence of Carmel and Krauthgamer (ICLR 2026). 
Our analysis is substantially simpler than prior work and reduces to bounding the deviation of the sample mean.
For the general $k$-means problem on $\rho$-separated instances, a well-studied structural assumption on the input,
we show that uniform sampling yields a stable coreset for all $O(1)$-approximate solutions, 
and a strong coreset if the instance additionally has bounded kurtosis.
In all these results, the sample complexity is independent of the dimension and the instance size.
\end{abstract}

% ==================================================================
\section{Introduction}\label{section:introduction}

Center-based clustering seeks to partition a set of points into groups of similar
points and further assign a single representative center to each partition. Formally, in the $(k,z)$-clustering
problem, the input is a set $P$ of $n$ points in a metric space $(\mathcal{X},\dist)$ and an
integer $k\ge 1$, and the goal is to find a set of $k$ centers $C\subseteq\mathcal{X}$
minimizing $\cost_z(C,P):=\frac{1}{n}\sum_{p\in P}\min_{c\in C}\dist(p,c)^z$. We focus on $z\in [1,2]$ where the cases $z=1$
and $z=2$ are the classical $k$-median and $k$-means problems, respectively. The
special case $k=1$, namely the $1$-median and $1$-mean problems, is interesting in
its own right and serves as a basic building block for the general problem.

Modern datasets are massive, and this motivates data summarization techniques that let
downstream algorithms run on a smaller input rather than the original one. A coreset
is a small, reweighted subset of the input that provably preserves the relevant
properties of the clustering objective,
so that solving the problem (optimally or
near optimally) on the coreset gives a near-optimal solution for the original input.
Coresets have been the subject of an extensive line of work, offering tunable
tradeoffs between size and accuracy, and are the central tool we study here.
(Since our results focus on coresets constructed by uniform sampling, we present the coreset definitions without weights.)

The prevailing coreset notion is the \emph{strong $\epsilon$-coreset}, a subset $Q\subseteq P$ such that for every
candidate center set $C \subseteq \mathcal{X}$ of size $k$,
\begin{equation}\label{eq:strong}
   \cost_z(C,Q)\in(1\pm\epsilon)\cost_z(C,P) .
\end{equation}
It is well known that constructing a strong coreset requires reading
the entire input $P$, as a single far point (outlier) can dominate the cost,
and not including it in $Q$ would violate the guarantee. 
At the scale of
modern datasets, reading the entire input is prohibitive and thus other notions were suggested. 
A \emph{weak $(\epsilon,\eta)$-coreset} is such a notion, that in contrast, only
requires that near-optimal solutions on $Q$ remain near-optimal on $P$. Writing
$\OPT_z(Q):=\min_{C'\subseteq\mathcal{X},\abs{C'}=k}\cost_z(C',Q)$, and similarly for
$P$,
it formally requires that for all $C\subseteq\mathcal{X}$ of size $k$,
\begin{equation}\label{eq:weak}
  \text{if }
  \cost_z(C, Q) \leq (1+\epsilon)\,\OPT_z(Q), \quad \text{then }\cost_z(C, P) \leq (1+\eta)\,\OPT_z(P).
\end{equation}
While this guarantees the most basic property one would expect from a coreset, it says
nothing about candidates $C$ that are not near-optimal on $Q$. This is a serious
limitation whenever the centers are constrained to lie in some subset
$X\subset\mathcal{X}$, since an optimal solution over $X$ need not be near-optimal over
all of $\mathcal{X}$ (we give concrete examples of such constraints below).

Recently,~\citet{carmel2025stablecoresetsunleashingpower}
introduced an intermediate notion, the \emph{stable $(\epsilon,\eta)$-coreset}, which strengthens the weak coreset guarantee by approximately preserving the relative order of costs between \emph{every} pair of candidate solutions. Formally it requires that, for all $C_1,C_2\subseteq\mathcal{X}$ of size $k$,
\begin{equation}\label{eq:stable}
  \text{if }
  \cost_z(C_1, Q) \leq (1+\epsilon)\cost_z(C_2,Q), \quad \text{then }\cost_z(C_1, P) \leq (1+\eta) \cost_z(C_2,P).
\end{equation}
This guarantee transfers to every submetric of $\mathcal{X}$, and hence to every variant
of the problem in which the centers are restricted to a subset
$X\subset\mathcal{X}$, without even needing to know $X$ in advance, yet it
is permissive enough to be achieved by uniform
sampling without reading the entire input.
Concretely, it is shown in~\citep{carmel2025stablecoresetsunleashingpower} 
that uniform sampling produces a stable coreset for $1$-median in $\ell_1$, 
and since tree metrics, Hamming, Kendall-tau,
and Jaccard distance are all submetrics of $\ell_1$, this transfer-to-submetric property
immediately unlocks coreset results for all of these, which were not previously studied.
This property also captures constraints on the solution itself, 
such as the discrete centers (medoids) variant, where centers must be input points, 
or kernel-based clustering, where the data points are images of a feature map 
and also the centers are constrained to lie in that map's image~\citep{FeldmanMS07, DBLP:journals/ml/JiangKLZ24, carmel2026optimalstablecoresetsgeometric}.

Even though $(k,z)$-clustering is NP-hard in general (in fact
APX-hard~\citep{DBLP:conf/soda/Cohen-AddadSL22}), it is solved routinely in practice,
and this apparent tension has motivated a large body of work isolating structural
assumptions under which the problem becomes
tractable (see~\citep{BenDavid2015,DBLP:books/cu/20/R2020} for surveys). One
recurring assumption is $\rho$-separation, introduced by~\citet{rabani}, which requires that the optimal cost with
$k$ centers is substantially smaller than with $k-1$ centers, 
formalizing the intuition behind the elbow method used in practice to choose $k$. 
Another common assumption is that each cluster has bounded dispersion, in the sense of not having a heavy tail,
for instance because it arises from Gaussian or sub-exponential distributions;
we formalize this via a bound on the kurtosis, which is the fourth moment of each cluster relative to the square of its variance.

All our coreset results are obtained by sampling points from $P$ uniformly. Uniform sampling is a cheap, data-oblivious approach that requires no preprocessing of the data, and is therefore particularly suitable for streaming and distributed settings. This distinguishes it from sensitivity sampling and other strong coreset constructions, which inherently require processing the entire input. 
Our goal is to understand which guarantees can be obtained without processing the entire
input. This stands in contrast to much of the literature, which has generally dismissed
uniform sampling as a viable construction for (strong) coresets, in both theory and
practice~\citep[Appendix~B]{DBLP:conf/esa/SchwiegelshohnS22Evaluation}.
Thus, the structural assumptions we use below for general $k$ should be viewed as investigating the conditions under which uniform sampling succeeds.
As discussed in~\citep{mop04,DBLP:conf/icml/HuangJL23},
one structural requirement for such success is some form of balancedness of the optimal clusters, 
since a cluster of small size is likely to be completely missed by the sample. 
This is in line with practice, where a very small
fraction of points is routinely treated as outliers rather than as an extra cluster. 
We adopt a standard formalization called \emph{$\beta$-balancedness}, 
which requires that every cluster in the optimal solution contains at least a $\beta/k$-fraction of the points.
Our objective is to find what uniform sampling can really achieve despite these apparent limitations.

\subsection{Our contribution}\label{section:contribution}

Our results are obtained in Hilbert space, which from now on we denote by $\Xspace = \mathcal{X}$.
Our first technical result shows that uniform sampling yields a stable coreset for the $1$-mean problem in $\Xspace$, with a sample size that is independent of the dimension.
Previously, no bound was known for this problem. 

\begin{restatable}[Stable coreset for $1$-mean]{theorem}{mainthm} 
\label{thm:stable}
Let $\Xspace$ be a Hilbert space and let $P \subseteq \Xspace$ be finite. 
Then for every $\epsilon \in (0,1/10)$ and $\delta >0$,
a uniform sample of (sufficiently large) size $O(\epsilon^{-2}\delta^{-1})$ from $P$ is, with probability at least $1-\delta$,
a stable $(\delta\epsilon,\, 8\epsilon)$-coreset for the $1$-mean problem.
\end{restatable}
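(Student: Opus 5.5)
The plan is to reduce everything to the bias--variance identity for squared Euclidean distance. For any finite multiset $S \subseteq \Xspace$ with mean $\mu_S$ and any $c \in \Xspace$, we have $\cost(c,S) = \OPT(S) + \norm{c-\mu_S}^2$. Write $\mean$ for the mean of $P$ and $\smean$ for the mean of the sample $Q$, which consists of $m$ points drawn i.i.d.\ uniformly from $P$. Write $\sigma^2 \eqdef \OPT(P)$, $s^2 \eqdef \OPT(Q)$ and $D \eqdef \norm{\mean-\smean}$. Then for any center $c$ we get $\cost(c,P)=\sigma^2+\norm{c-\mean}^2$ and $\cost(c,Q)=s^2+\norm{c-\smean}^2$. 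The whole argument then only needs two random quantities: the deviation $D$ of the sample mean and the empirical optimum $s^2$.

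\textbf{Step 1 (two good events).} First, $\E[D^2]=\sigma^2/m$ by independence. By Markov's inequality, with probability at least $1-\delta/2$ we have $D^2 \le 2\sigma^2/(\delta m)$. Choosing $m = K\epsilon^{-2}\delta^{-1}$ for a large constant $K$ gives $D^2 \le (2/K)\epsilon^2\sigma^2$. Second, $s^2 \le \cost(\mean,Q)$, and $\E[\cost(\mean,Q)]=\sigma^2$. So by Markov again, with probability at least $1-\delta/2$ we have $s^2 \le 2\sigma^2/\delta$. A union bound gives both events with probability at least $1-\delta$. From here on the argument is deterministic.

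\textbf{Step 2 (transfer the ordering).} Fix centers $c_1,c_2$ and set $a_i=\norm{c_i-\smean}^2$ and $b_i=\norm{c_i-\mean}^2$. The hypothesis $\cost(c_1,Q)\le(1+\delta\epsilon)\cost(c_2,Q)$ rearranges to
\[ a_1 \le (1+\delta\epsilon)a_2 + \delta\epsilon s^2 \le (1+\epsilon)a_2 + 2\epsilon\sigma^2 . \]
The second inequality uses $\delta\epsilon\le\epsilon$, which holds for $\delta \le 1$ (the only interesting case), together with $s^2\le 2\sigma^2/\delta$. This is exactly where the factor $\delta$ in the first parameter pays for the weak Markov bound on $s^2$.

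Next, the triangle inequality together with $(x+y)^2\le(1+\epsilon)x^2+(1+1/\epsilon)y^2$ gives $b_1 \le (1+\epsilon)a_1+(1+1/\epsilon)D^2$ and $a_2\le(1+\epsilon)b_2+(1+1/\epsilon)D^2$. Since $D^2\le(2/K)\epsilon^2\sigma^2$, each additive term $(1+1/\epsilon)D^2$ is at most $(4/K)\epsilon\sigma^2$. Chaining the three inequalities yields
\[ b_1 \le (1+\epsilon)^3 b_2 + O(\epsilon)\sigma^2 . \]
Adding $\sigma^2$ to both sides then gives $\cost(c_1,P)\le(1+O(\epsilon))\cost(c_2,P)$.

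\textbf{Main obstacle.} There is no real conceptual obstacle; the work is in the constant bookkeeping. One must check that, with $\epsilon<1/10$ and $K$ large, the accumulated multiplicative factor $(1+\epsilon)^3$ together with the additive $(2+O(1/K))\epsilon\sigma^2$ is bounded by $1+8\epsilon$ times $\sigma^2+b_2$. This works because $(1+\epsilon)^3\le 1+3.31\epsilon$ in this range, and the additive $\sigma^2$-terms are absorbed against the $\sigma^2$ summand of $\cost(c_2,P)$.

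A minor point is that the sample may be drawn with replacement. If it is drawn without replacement instead, the bounds $\E[D^2]\le\sigma^2/m$ and $\E[\cost(\mean,Q)]=\sigma^2$ still hold, because the finite-population correction only decreases the variance.
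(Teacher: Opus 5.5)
Your proof is correct. Its probabilistic core is the same as the paper's: the same two Markov events, namely $\norm{\Delta}^2\le O(\epsilon^2)V$ from $\E\norm{\Delta}^2=V/s$, and $\cost(\mean,Q)\le 2V/\delta$. The second event is exactly what pays for the factor $\delta$ in the first parameter.

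Where you differ is in how the deterministic transfer is organized. The paper first computes the RCDA error exactly, $[\cost(c,P)-\cost(\mean,P)]-[\cost(c,Q)-\cost(\mean,Q)]=2\ip{c-\mean}{\Delta}$ (\cref{cl:rcda-ip}). It bounds this by Young's inequality to obtain an $\epsilon$-RCDA, and then invokes the generic \cref{thm:rcda-to-stable} with $\alpha=2/\delta$. You instead inline everything, chaining the relaxed triangle inequality $(x+y)^2\le(1+\epsilon)x^2+(1+1/\epsilon)y^2$ through $\smean$ for the specific pair $c_1,c_2$.

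Underneath, the two are essentially the same estimate. Your relaxed triangle inequality is Young's inequality applied to the cross term in $\norm{c-\smean}^2=\norm{c-\mean}^2-2\ip{c-\mean}{\Delta}+\norm{\Delta}^2$. So the difference is one of packaging:
\begin{itemize}
\item Your version is self-contained and avoids the RCDA machinery. It is in the same style as the paper's own proof of \cref{thm:weak}. It even gives a second parameter of roughly $3.3\epsilon$ (for $K$ large) rather than $8\epsilon$.
\item The paper's version isolates a per-center additive guarantee that does not depend on the pair $(c_1,c_2)$, and the conversion to stability (\cref{thm:rcda-to-stable}) holds for any $k$. This is the modular template reused in \cref{sec:kmean}. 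There, $\cost(C,Q)$ no longer has a closed form in terms of a single sample mean, so your pairwise chaining through $\smean$ has no direct analogue.
\end{itemize}
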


Since for every $p\in[1,2]$ the space $\ell_p^d$ ($\Rd$ endowed with the $\ell_p$ norm) embeds
isometrically into $\ell_2$-squared~\citep{schoenberg1938,BDCK67}, 
\Cref{thm:stable} immediately yields the same guarantee also for $1$-median in $\ell_p^d$.

\begin{restatable}[Stable coreset for $1$-median]{corollary}{maincorollary}\label{cor:median}
Fix $p \in [1, 2]$ and let $P \subset \Rd$ be finite. 
Then for every $\epsilon \in (0,1/10)$ and $\delta >0$, 
a uniform sample of (sufficiently large) size $O(\epsilon^{-2}\delta^{-1})$ from $P$ is,
with probability at least $1-\delta$, a stable
$(\delta\epsilon,\, 8\epsilon)$-coreset for $1$-median in $\ell_p^d$.
\end{restatable}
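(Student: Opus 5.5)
The corollary follows from \Cref{thm:stable} by an isometric embedding. The plan is to map the $1$-median instance in $\ell_p^d$ to a $1$-mean instance in a Hilbert space. The embedding must carry point-to-center distances to squared Hilbert distances, and it must not disturb uniform sampling.

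First, for $p\in[1,2]$ the metric space $(\Rd,\norm{\cdot}_p)$ embeds isometrically into $\ell_2$-squared \citep{schoenberg1938,BDCK67}. Concretely, there is a map $f:\Rd\to\Xspace$ into a (possibly infinite-dimensional, separable) Hilbert space with
\[
\norm{x-y}_p=\norm{f(x)-f(y)}_2^2 \qquad \text{for all } x,y\in\Rd .
\]
Given $P\subset\Rd$, I consider the finite multiset $f(P)\subseteq\Xspace$. Uniform sampling commutes with $f$. If $Q$ is a uniform sample of size $m$ from $P$, then $f(Q)$ is distributed exactly as a uniform sample of size $m$ from $f(P)$, with the same index choices. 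So \Cref{thm:stable}, applied to $f(P)$ with the same $\epsilon,\delta$ and $m=O(\epsilon^{-2}\delta^{-1})$, says that with probability at least $1-\delta$ the set $f(Q)$ is a stable $(\delta\epsilon,8\epsilon)$-coreset for $1$-mean on $f(P)$. That guarantee covers all centers $c\in\Xspace$.

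Second, I transfer the guarantee back to $\ell_p^d$, using the transfer-to-submetric property of stable coresets. Take any candidate centers $c_1,c_2\in\Rd$. For every finite $A\subseteq\Rd$,
\[
\cost_1(c_i,A)=\tfrac{1}{|A|}\sum_{a\in A}\norm{a-c_i}_p=\tfrac{1}{|A|}\sum_{a\in A}\norm{f(a)-f(c_i)}_2^2=\cost_2(f(c_i),f(A)).
\]
Suppose $\cost_1(c_1,Q)\le(1+\delta\epsilon)\cost_1(c_2,Q)$. Then the same inequality holds for the $1$-mean costs of $f(c_1),f(c_2)$ on $f(Q)$. Since $f(c_1),f(c_2)\in\Xspace$ are valid $1$-mean centers, stability gives $\cost_2(f(c_1),f(P))\le(1+8\epsilon)\cost_2(f(c_2),f(P))$. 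This is exactly $\cost_1(c_1,P)\le(1+8\epsilon)\cost_1(c_2,P)$.

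The only delicate step is the embedding itself. One needs the cited fact that $\ell_p$-distances with $p\in[1,2]$ form a metric of negative type, i.e., $\norm{x-y}_p^{1/2}$ embeds isometrically into $\ell_2$. In addition, \Cref{thm:stable} must hold for an infinite-dimensional Hilbert space. It does, since it is stated for a general Hilbert space and its sample size is dimension-free. Because the image of the finite set $P\cup\{c_1,c_2\}$ spans a finite-dimensional subspace, one could also work pair by pair in finite dimension. Everything else is bookkeeping.
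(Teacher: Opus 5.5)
Your proposal is correct and is essentially the paper's own argument. You embed $\ell_p^d$ for $p\in[1,2]$ isometrically into $\ell_2$-squared via Schoenberg, apply \Cref{thm:stable} to the image of $P$ in a general (possibly infinite-dimensional) Hilbert space, and restrict the stability guarantee to centers in the image of $\Rd$, which is the transfer-to-submetric property. Your closing aside about working pair by pair in finite dimension is not needed; used as a black box it would need a single ambient space for all pairs, so that one probability-$(1-\delta)$ event covers them all, but your main argument does not depend on it.
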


This corollary is perhaps our most important contribution for $k=1$,
as it improves the two known bounds on stable coresets for $1$-median and has several
applications (see also Appendix~\ref{sec:applications} for additional implications of
Theorem~\ref{thm:stable}).
The endpoints $p=1$ and $p=2$ recover $1$-median in $\ell_1^d$ and $\ell_2^d$,
and the same conclusion holds for every metric that embeds isometrically into $\ell_1$,
including Hamming, Kendall-tau, Spearman footrule, Jaccard, and tree metrics.
These two prior results are as follows. The first~\citep{carmel2025stablecoresetsunleashingpower}
gives a bound for $\ell_1^d$ that depends logarithmically on the dimension $d$, proved
using VC-dimension machinery. The second~\citep{carmel2026optimalstablecoresetsgeometric}
employs an iterative size reduction while exploiting the construction
of~\citep{carmel2025stablecoresetsunleashingpower}, and its bound for $\ell_2^d$ carries
a $\log(1/\epsilon)$ gap that we close.
\Cref{tab:comparison} summarizes these and related works on $1$-median coresets.

\begin{table}[t]
\centering
\begin{tabular}{|l|l|l|l|}
\hline
Metric & Coreset type & Coreset size & Reference \\
\hline
\multirow{3}{*}{$\ell_2^d$} & \multirow{3}{*}{weak}
  & $O(\epsilon^{-4})$              & \citet{DBLP:conf/nips/Cohen-AddadSS21} \\
  &  & $O(\epsilon^{-3})$              & \citet{DBLP:conf/icml/HuangJL23}$^\dagger$ \\
  &  & $O(\epsilon^{-2}\log \epsilon^{-1})$ & \citet{WY24} \\
\hline
\multirow{2}{*}{$\ell_2^d$} & \multirow{2}{*}{stable}
  & $O(\epsilon^{-2}\log (d/\epsilon))$  & \citet{carmel2025stablecoresetsunleashingpower} \\
  &  & $O(\epsilon^{-2}\log \epsilon^{-1})$ & \citet{carmel2026optimalstablecoresetsgeometric} \\
\hline
$\ell_1^d$ & stable & $O(\epsilon^{-2}\log d)$        & \citet{carmel2025stablecoresetsunleashingpower} \\
\hline
$\ell_p^d,\ p\in[1,2]$ & stable & $O(\epsilon^{-2})$              & \Cref{cor:median} \\
\hline
\end{tabular}
\caption{Uniform-sampling coreset sizes for stable and weak $(\epsilon, O(\epsilon))$-coresets
for the $1$-median problem.
$^\dagger$The cited work is primarily concerned with the more general $k$-median problem.}
\label{tab:comparison}
\end{table}

Our key idea is to aim for a stronger result,
namely, a coreset for $1$-mean in Hilbert space, 
which implies a coreset for $1$-median in $\ell_1^d$ or $\ell_2^d$,
by embedding these spaces isometrically into $\ell_2$-squared
and applying the submetric property of stable coresets.
Surprisingly, our proof is much simpler
as it avoids the VC-dimension machinery entirely,
and it furthermore identifies the actual source of difficulty in 
analyzing stable and weak coresets constructed via uniform sampling.
Since every stable coreset is also a weak coreset, 
our results immediately improve upon all previous weak coreset constructions summarized in Table~\ref{tab:comparison}, with substantially simpler analysis.

To place our stable coreset results in context, 
the next theorem states the corresponding bound for weak coresets. 
It was essentially proved in~\citep{DBLP:conf/compgeom/InabaKI94, DBLP:journals/corr/abs-2111-03046},
albeit only for weak $(0,\epsilon)$-coresets,
and for completeness we provide a short proof of the extended bound in Appendix~\ref{appendix:weak}. 
We remark that weak $(\epsilon,O(\epsilon))$-coresets were established in~\citep{DBLP:conf/nips/Cohen-AddadSS21} 
also for a more general problem called Euclidean power mean, and were later improved in~\citep{WY24}.

\begin{restatable}[Weak coreset for $1$-mean]{theorem}{weakthm}\label{thm:weak}
Let $\Xspace$ be a Hilbert space and let $P \subseteq \Xspace$ be finite. 
Then for every $\epsilon \in (0,1/10)$ and $\delta >0$, 
a uniform sample of (sufficiently large) size $O(\epsilon^{-1}\delta^{-1})$ from $P$ is, with probability at least $1-\delta$,
a weak $(\delta\epsilon,\, 8\epsilon)$-coreset for the $1$-mean problem.
\end{restatable}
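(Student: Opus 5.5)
We use the bias--variance identity for $1$-mean in Hilbert space. Write $\mean$ for the mean of $P$, $\smean$ for the mean of the sample $Q$ (of size $m$, drawn i.i.d.\ uniformly from $P$), and $\OPT(P)=\frac1n\sum_{p\in P}\norm{p-\mean}^2$. Then for every $c\in\Xspace$,
\[
\cost(c,P)=\OPT(P)+\norm{c-\mean}^2, \qquad \cost(c,Q)=\OPTQ(Q)+\norm{c-\smean}^2 .
\]
So the whole question reduces to the deviation $\norm{\smean-\mean}^2$, which has expectation $\OPT(P)/m$. By Markov, with probability at least $1-\delta/2$ we have $\norm{\smean-\mean}^2\le 2\OPT(P)/(\delta m)$. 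With $m=C\epsilon^{-1}\delta^{-1}$ this gives $\norm{\smean-\mean}^2\le (2/C)\,\epsilon\,\OPT(P)$, a small constant times $\epsilon\OPT(P)$.

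We also need $\OPTQ(Q)$ not to be much larger than $\OPT(P)$. Note $\OPTQ(Q)\le \cost(\mean,Q)$, and $\E[\cost(\mean,Q)]=\OPT(P)$. By Markov, with probability at least $1-\delta/2$ we get $\cost(\mean,Q)\le (2/\delta)\OPT(P)$, hence $\OPTQ(Q)\le 2\OPT(P)/\delta$. Now take any $c$ with $\cost(c,Q)\le(1+\delta\epsilon)\OPTQ(Q)$. Then
\[
\norm{c-\smean}^2\le \delta\epsilon\,\OPTQ(Q)\le 2\epsilon\OPT(P).
\]
This is exactly why the weak parameter is $\delta\epsilon$: the $\delta$ cancels the Markov loss $1/\delta$. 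By the triangle inequality and $(a+b)^2\le 2a^2+2b^2$,
\[
\norm{c-\mean}^2\le 2\norm{c-\smean}^2+2\norm{\smean-\mean}^2\le 4\epsilon\OPT(P)+(4/C)\epsilon\OPT(P)\le 8\epsilon\OPT(P)
\]
for $C\ge 1$. Therefore $\cost(c,P)=\OPT(P)+\norm{c-\mean}^2\le(1+8\epsilon)\OPT(P)$. A union bound over the two bad events gives failure probability at most $\delta$.

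The main obstacle is controlling $\OPTQ(Q)$ from above, since only its expectation is comparable to $\OPT(P)$ and heavy tails preclude concentration. The plan addresses this by using only one-sided Markov bounds and absorbing the resulting $1/\delta$ factor into the weak parameter $\delta\epsilon$. With this, no second-moment or kurtosis assumption is needed. We would also double-check whether sampling is with or without replacement. Without replacement, the variance of $\smean$ is only smaller, so the same Markov bound applies.
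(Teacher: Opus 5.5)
Your proposal is correct and follows essentially the same route as the paper's proof: two one-sided Markov bounds (on $\norm{\smean-\mean}^2$, and on $\cost(\mean,Q)$, which upper-bounds the sample optimum), followed by $\norm{c-\mean}^2\le 2\norm{c-\smean}^2+2\norm{\smean-\mean}^2$, with the factor $\delta$ in the weak parameter cancelling the $1/\delta$ loss from Markov. The only difference is how the constants are handled: the paper uses the threshold $\delta\epsilon/2$ and $s\ge 2/(\epsilon\delta)$ to get $4\epsilon$ and then rescales, whereas you reach $8\epsilon$ directly.
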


We complement \Cref{thm:stable} and \Cref{thm:weak} with matching lower bounds. 
For weak coresets, there is already a lower bound for every power $z \geq 2$ in~\citep{DBLP:conf/nips/Cohen-AddadSS21}, and thus item \ref{item:weak} below was already known. 

\begin{restatable}{theorem}{lowerbounds}\label{thm:lowerbounds}
For every $\epsilon,\delta\in(0,1/20)$, there exists a finite set $P\subset\RR$ such that if a uniform sample $Q\subseteq P$ of size $s$ is, with probability at least $1-\delta$,
\begin{enumerate}[label=(\roman*),nosep]
    \item\label{item:weak} a weak $(0,\epsilon)$-coreset for the $1$-mean problem, then $s=\Omega(\epsilon^{-1}\delta^{-1})$;
    \item\label{item:stable} a stable $(0,\epsilon)$-coreset for the $1$-mean problem, then $s=\Omega(\epsilon^{-2}\delta^{-1})$.
\end{enumerate}
\end{restatable}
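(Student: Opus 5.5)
Both parts use the same one-dimensional instance: a set with a few rare outliers, so that with probability larger than $\delta$ the sample mean lands far from the true mean. Everything reduces to the bias--variance identity. For $P\subset\RR$ with mean $\mean$ and variance $\sigma^2=\OPT(P)$, we have $\cost(c,P)=\sigma^2+(c-\mean)^2$. Likewise $\cost(c,Q)=\hat\sigma^2+(c-\smean)^2$, where $\smean$ and $\hat\sigma^2$ are the sample mean and sample variance. Write $\Delta:=\smean-\mean$.

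\emph{Reducing each coreset property to a bound on $\Delta$.}
For weak $(0,\epsilon)$-coresets, the unique optimum on $Q$ is $\smean$. Its cost on $P$ is $\sigma^2+\Delta^2$, so the weak guarantee is equivalent to $\Delta^2\le\epsilon\sigma^2$.

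For stable $(0,\epsilon)$-coresets, take $c_{1,2}=\smean\pm t$. These have equal cost on $Q$, so the guarantee must hold in both directions. Hence
\[
\frac{\sigma^2+(t+\abs{\Delta})^2}{\sigma^2+(t-\abs{\Delta})^2}\le 1+\epsilon .
\]
Choosing $t=\sigma$, and assuming $\abs{\Delta}\le\sigma$ (otherwise the violation is immediate), the ratio is at least $1+\abs{\Delta}/\sigma$. So the stable guarantee forces $\abs{\Delta}\le\epsilon\sigma$, i.e.\ $\Delta^2\le\epsilon^2\sigma^2$. This is why the stable bound is the weak bound with $\epsilon$ replaced by $\epsilon^2$. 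It therefore suffices to show the following: if $s\le c/(\theta\delta)$ for a small constant $c$, then $\Pr[\Delta^2>\theta\sigma^2]>\delta$. We then apply this with $\theta=\epsilon$ and with $\theta=\epsilon^2$.

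\emph{The instance.} Let $P$ consist of $n$ points (with $n$ huge, so that sampling with or without replacement are essentially the same). A $q$-fraction of the points sit at $1$ and the rest sit at $0$, with $q=2\delta/s$. Then $\mean=q$ and $\sigma^2=q(1-q)\le q$.

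The probability that $Q$ contains at least one point at $1$ is at least $1-(1-q)^s\ge 1-e^{-2\delta}\ge 2\delta-2\delta^2>\delta$. On this event $\smean\ge 1/s$, so $\Delta\ge 1/s-q\ge 1/(2s)$, using $\delta<1/20$. Hence $\Delta^2\ge 1/(4s^2)$, and this exceeds $\theta\sigma^2$ provided $1/(4s^2)>2\theta\delta/s$, i.e.\ $s<1/(8\theta\delta)$. This gives $s=\Omega(\epsilon^{-1}\delta^{-1})$ for part~\ref{item:weak} and $s=\Omega(\epsilon^{-2}\delta^{-1})$ for part~\ref{item:stable}.

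\emph{Main obstacle.} As written, $q$ depends on $s$, whereas the theorem quantifies $P$ before $s$. I would first note that the argument only needs, for each $s$ below the threshold $s_0=1/(8\theta\delta)$, some outlier scale with $q\approx 2\delta/s$. To obtain a single $P$, I would place outliers at geometrically many scales $q_j=2^{-j}$ with correspondingly scaled positions, so that each scale contributes comparably to $\sigma^2$. For a given $s$, the scale with $q_j\approx\delta/s$ then gets hit with probability $>\delta$ and by itself shifts $\smean$ by $\Omega(\sigma\sqrt{\theta})$. The remaining scales only add mean-zero fluctuations, and a symmetry or anti-concentration argument (e.g.\ conditioning on the other scales and choosing the sign of the outlier) shows they do not cancel the shift with constant probability.

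Balancing these constants, and checking that adding $O(\log s_0)$ scales inflates $\sigma^2$ by at most a constant, is the delicate step. If it proves messy, I would fall back to reading the theorem as allowing $P$ to depend on the candidate sample size, which is how such sampling lower bounds are typically stated.
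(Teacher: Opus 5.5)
Your proposal is correct and is essentially the paper's argument. It uses the same two-point instance with a rare outlier whose frequency is tuned to $s$. The paper also lets $P$ depend on $s$ (e.g.\ $n=\tfrac{10}{3}s(\epsilon s+1)$, or $n=s^2\epsilon^2$), so your fallback reading is exactly the paper's and the multi-scale construction is unnecessary. It uses the same symmetric pair $\smean\pm t$ and the same reduction to comparing $\Delta^2$ with $\epsilon V$ or $\epsilon^2 V$. Your unified choice $q=2\delta/s$ with the event ``at least one outlier sampled'' is slightly cleaner bookkeeping than the paper's conditioning on $X=1$ (plus $X=0$ in a separate small-$s$ case).

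One small repair is needed. With $t=\sigma$ the cost ratio tends to $1$ as $|\Delta|/\sigma\to\infty$, so that pair does not certify your ``immediate'' case $|\Delta|>\sigma$. That case does occur on your event when $s<1/(8\delta)$. Either take $t=\sqrt{\sigma^2+\Delta^2}$ as the paper does, which gives a ratio of at least $1+2|\Delta|/\sigma$ for every $\Delta$, or use the pair $(\smean,\mean)$, which already forces $\Delta^2\le\epsilon\sigma^2<\sigma^2$.
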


Together, \Cref{thm:stable,thm:weak,thm:lowerbounds} provide a direct comparison between the two notions for the $1$-mean problem in Hilbert space, revealing a clear separation: stable coresets require $\Theta(\epsilon^{-2}\delta^{-1})$ samples, whereas weak coresets require only $\Theta(\epsilon^{-1}\delta^{-1})$. 
In contrast, there is no such separation for the $1$-median problem, as both notions are known to require, 
up to polylogarithmic factors, $\Theta(\epsilon^{-2})$ samples for constant success probability~\citep{carmel2026optimalstablecoresetsgeometric}.

We then consider the case of general $k \geq 2$, a regime where uniform-sampling constructions are scarce. 
Prior stable-coreset results are restricted to $k=1$~\citep{carmel2025stablecoresetsunleashingpower, carmel2026optimalstablecoresetsgeometric},
and the closest prior work for $k\ge2$ is a weak $k$-median coreset of~\citep{DBLP:conf/icml/HuangJL23},
and we compare to it further below. 
For $\gamma\ge1$, we consider the candidate center sets
$\sol(P) := \{C :\ \cost(C,P)\le\gamma\cdot\OPT_2(P)\}$ and similarly
$\sol(Q) := \{C :\ \cost(C,Q)\le\gamma\cdot\OPT_2(Q)\}$.
We think of $\gamma$ as a small constant, since constant-factor approximations for $(k,z)$-clustering 
are readily available.
We first show that under $\beta$-balancedness and $\rho$-separation assumptions, 
a uniform sample already preserves the cost order (in the stable sense) 
of all solutions whose cost on $P$ or on $Q$ is within factor $\gamma$ of optimal.

\begin{restatable}{theorem}{kmeansstable}\label{thm:kmeans-stable}
Let $\Xspace$ be a Hilbert space and let $k\ge 2$, $\gamma\ge 1$, $\epsilon\in(0,1/75)$, $\delta\in(0,1)$. 
Then for every finite $P\subset \Xspace$ that is $\beta$-balanced for $\beta\in(0,1]$
and $\rho$-separated with $\rho^2\le\beta \delta/(576\gamma k)$,
a uniform sample $Q$ of (sufficiently large) size 
$O\big( {k^2 \gamma}/(\beta\epsilon^2\delta^2) \big)$ 
is, with probability at least $1-\delta$, a stable
$(\delta\epsilon,\,O(\epsilon))$-coreset of $P$ over solutions in $\union$ for the $k$-means problem.
\end{restatable}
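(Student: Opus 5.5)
Under $\rho$-separation with small $\rho$, every near-optimal solution must induce a partition close to the optimal one, so I would reduce the $k$-means problem to $k$ independent $1$-mean problems on the optimal clusters $P_1,\dots,P_k$ and apply the $1$-mean machinery behind \Cref{thm:stable} to each cluster.

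\textbf{Structure of good solutions.} Write $\mu_i$ for the mean of $P_i$ and $\OPT=\OPT_2(P)$, so $\sum_i \frac{|P_i|}{|P|}\Var(P_i)=\OPT$. The separation assumption makes every $\mu_i$ far from the other clusters: the squared distance between $\mu_i$ and $\mu_j$ is at least of order $\OPT/(\rho^2\,|P_i|/|P|)$, since otherwise merging two clusters would give a $(k-1)$-clustering of cost close to $\OPT$. From this I would show that for every $C\in\sol(P)$ there is a bijection matching each $\mu_i$ to a center $c_{\pi(i)}$ with
\[
\norm{c_{\pi(i)}-\mu_i}^2\le O(\gamma)\,\OPT\cdot |P|/|P_i| ,
\]
because a cluster left without a close center alone pays more than $\gamma\OPT$. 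Using $\beta$-balancedness ($|P_i|\ge\beta|P|/k$), the condition $\rho^2\le\beta\delta/(576\gamma k)$ is what makes these matched balls much smaller than the inter-mean distances. Consequently only a negligible mass of points is assigned to an unmatched center, and
\[
\cost(C,P)\in(1\pm O(\epsilon))\cdot\frac1{|P|}\sum_i |P_i|\bigl(\Var(P_i)+\norm{c_{\pi(i)}-\mu_i}^2\bigr).
\]
To treat $\sol(Q)$ the same way, I would first show that w.h.p.\ $Q$ inherits balancedness and separation with slightly worse constants, and that $\OPT_2(Q)=\Theta(\OPT)$.

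\textbf{Concentration on $Q$.} Let $Q_i=Q\cap P_i$, with sample mean $\smean_i$. I need three facts with probability $1-\delta$:
\begin{itemize}
\item $|Q_i|/|Q|=(1\pm\epsilon)|P_i|/|P|$ for all $i$, which follows from the sample size by a Chernoff bound and a union bound over the $k$ clusters;
\item $\Var(Q_i)=(1\pm\epsilon)\Var(P_i)$ in aggregate;
\item $\sum_i \frac{|P_i|}{|P|}\norm{\smean_i-\mu_i}^2\le \epsilon^2\delta\,\OPT$, by Markov's inequality, since $\E\norm{\smean_i-\mu_i}^2=\Var(P_i)/|Q_i|$.
\end{itemize}
This last fact is exactly the sample-mean deviation underlying \Cref{thm:stable}, and it explains the $\epsilon^{-2}$ dependence. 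With these facts, $\cost(C,Q)$ has the same decomposition as on $P$, with $\mu_i$ replaced by $\smean_i$, the weights $|Q_i|/|Q|$, and $\Var(Q_i)$.

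\textbf{Comparing two solutions.} Fix $C_1,C_2\in\union$ with $\cost(C_1,Q)\le(1+\delta\epsilon)\cost(C_2,Q)$. I would rewrite the matched distances using
\[
\norm{c-\smean_i}^2=\norm{c-\mu_i}^2+\norm{\smean_i-\mu_i}^2-2\ip{c-\mu_i}{\smean_i-\mu_i}
\]
and bound the cross term by Cauchy--Schwarz, $\le \epsilon\norm{c-\mu_i}^2+\epsilon^{-1}\norm{\smean_i-\mu_i}^2$. The second contribution is at most $\epsilon\delta\OPT$ by the concentration fact above, and $\OPT\le\cost(C_2,P)$. Summing over clusters, the $Q$-inequality transfers to $\cost(C_1,P)\le(1+O(\epsilon))\cost(C_2,P)$, mirroring the $k=1$ proof.

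\textbf{Main obstacle.} The hard step is the first one: showing that the matching structure holds uniformly for all solutions in $\sol(Q)$, which depend on the sample, and that the mass of misassigned points, together with its effect on cost, is $O(\epsilon\,\OPT)$. This needs a careful accounting with the constant $576$, and it is where the extra $k^2/(\beta\delta)$ factor in the sample size is consumed. Since $Q$ is random, I would argue deterministically conditioned on the concentration event, so that no union bound over the infinitely many candidate solutions is needed.
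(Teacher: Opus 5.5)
\textbf{There is a genuine gap in your first step.} You claim that for every $C\in\union$ the cost equals the matched cost up to a factor $1\pm O(\epsilon)$, i.e.\ that points served by a center other than the one matched to their optimal cluster change the cost by only $O(\epsilon\OPT)$. This cannot follow from the hypotheses, because the separation condition $\rho^2\le\beta\delta/(576\gamma k)$ does not involve $\epsilon$. A small \emph{mass} of reassigned points does not give a small \emph{cost} effect. Such points are far ($a(p)\ge D/4$), and each one gains roughly $D$ times the displacement of the center.

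Already for $k=2$, equal cluster sizes and $\gamma=2$, the claim fails:
\begin{enumerate}[nosep]
\item Put a constant fraction of $\OPT$ on points of $X_1^*$ at distance about $D/2$ from $\mu_1$, just on $\mu_1$'s side of the bisector. Their mass is $\Theta(\OPT/D^2)$.
\item Move the second center by $\Theta(\sqrt{\OPT})$ towards $\mu_1$. The solution stays in $\sol(P)$, these points switch centers, and each gains $\Theta(D\sqrt{\OPT})$.
\item Hence $\cost(C,P)$ differs from the matched cost by $\Theta(\OPT^{3/2}/D)=\Theta(\rho)\,\OPT$.
\end{enumerate}
At the largest admissible $\rho$ this is $\Theta(\sqrt{\delta})\,\OPT$, however small $\epsilon$ is. No ``careful accounting with the constant $576$'' can make it $O(\epsilon\OPT)$.

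\textbf{What the proof actually needs.} The boundary gain $b_v$ need not be small; what is needed is that the sample estimates it: $\sup_{v}\abs{B_v(Q)-B_v(P)}\le\epsilon\OPT$ with good probability, uniformly over all displacements $v$ of candidate solutions. This is an empirical-process statement over infinitely many $v$. It is not implied by per-cluster statistics (weights, means, variances), since $B_v(Q)$ depends on exactly where the few sampled far points lie relative to the moving bisectors. So ``arguing deterministically on the concentration event'' fails for this term.

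The paper makes it tractable in two stages. First, it localizes all of $\union$ into a displacement set $\Vset$ determined by $P$ alone; your matching step corresponds to this, and your Cauchy--Schwarz treatment of the cross term corresponds to the matched-cost bound. Second, it bounds $\E\sup_{v\in\Vset}\abs{B_v(Q)-B_v(P)}$ using:
\begin{itemize}[nosep]
\item symmetrization;
\item Maurer's vector contraction inequality, to remove the maximum over centers inside $b_v$;
\item the lifting $v_r\mapsto(v_r,\norm{v_r}^2/(2\Gams))$, which makes the process linear;
\item Cauchy--Schwarz, followed by Efron--Stein and Markov.
\end{itemize}
This is where the $k^2\gamma/(\beta\epsilon^2\delta^2)$ sample size is spent.

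\textbf{A secondary error.} Without a fourth-moment assumption, $\Var(Q_i)=(1\pm\epsilon)\Var(P_i)$ and $\OPT_2(Q)=\Theta(\OPT)$ are false. A cluster whose variance comes from a rare far point is typically missed by the sample. Ruling this out is exactly the role of the kurtosis hypothesis in \Cref{thm:kmeans-strong}. For the stable result, work with the RCDA error, in which the variance terms cancel exactly. Then use only the one-sided Markov bound $\cost(C^*,Q)\le(6/\delta)\OPT$. Through \Cref{thm:rcda-to-stable} with $\alpha=6/\delta$, this bound is the source of the $\delta\epsilon$ in the conclusion.
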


Observe that by setting $\gamma=1+\epsilon$, we immediately obtain a weak $(\epsilon,O(\epsilon))$-coreset guarantee. 
The closest prior result, obtained by \citet{DBLP:conf/icml/HuangJL23}, is a weak coreset of size
$O\left(\frac{k^2}{\beta\epsilon^3}\log^2\frac{k}{\beta\epsilon}\log^2\frac1\epsilon\right)$
for $k$-median in $\ell_2^d$, with constant success probability. 
They too require that $P$ is $\beta$-balanced, but their weak coreset guarantee 
is further restricted to candidate solutions that are balanced on the sample $Q$. 
We require instead that the input $P$ is $\rho$-separated, and for constant success
probability our sample complexity matches their bound in $k$ and improves the dependence
on $\epsilon$ from $\epsilon^{-3}$ to $\epsilon^{-2}$, while also giving the stronger
stable guarantee over all $O(1)$-approximate solutions.
Neither their result nor their technique extends to the $k$-means problem, 
whereas ours, stated above for squared-distance objective in Hilbert space, 
implies the Euclidean $k$-median setting as a special case.
Another advantage of our result is that if the input $P$ has in addition bounded kurtosis, 
then the stable guarantee strengthens to a strong one, i.e., it preserves the cost of every candidate center set in $\union$.
This improves over the strong coreset result of~\citep{carmel2025stablecoresetsunleashingpower}, 
which is restricted to $k=1$ and assumes bounded dispersion 
(each cluster diameter relative to its average distance).
Bounded kurtosis is clearly a more relaxed requirement and arguably more typical in practice.

\begin{restatable}{theorem}{kmeansstrong}\label{thm:kmeans-strong}
Let $\Xspace$ be a Hilbert space and let $k\ge 2$, $\gamma\ge 1$, $\epsilon\in(0,1/75)$, $\delta\in(0,1)$.
Then for every finite $P\subset \Xspace$ that has kurtosis bounded by $M>0$, 
is $\beta$-balanced for $\beta\in(0,1]$,
and is $\rho$-separated with $\rho^2\le\beta/(120\gamma k)$, 
a uniform sample $Q$ of (sufficiently large) size $O\big( k^2M/(\beta\epsilon^2\delta) \big)$ satisfies, 
with probability at least $1-\delta$, 
\[
  \forall C\in\union,
  \qquad
  \cost(C,Q) \in (1 \pm \epsilon)\,\cost(C,P).
\]
\end{restatable}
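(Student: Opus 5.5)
We prove \Cref{thm:kmeans-strong} by reducing to a per-cluster, two-moment concentration statement. Let $P_1,\dots,P_k$ be the optimal clusters with centers $\mean_1,\dots,\mean_k$, sizes $n_i\ge \beta n/k$, and variances $\sigma_i^2=\frac1{n_i}\sum_{p\in P_i}\norm{p-\mean_i}^2$. Then $\OPT_2(P)=\sum_i \frac{n_i}{n}\sigma_i^2$. Kurtosis bounded by $M$ means $\frac1{n_i}\sum_{p\in P_i}\norm{p-\mean_i}^4\le M\sigma_i^4$. Write $Q_i=Q\cap P_i$ and $s=|Q|$.

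\textbf{Step 1 (concentration of cluster statistics).} With $s=O(k^2M/(\beta\epsilon^2\delta))$ and Chebyshev plus a union bound over $k$ clusters (each with failure probability $\delta/(3k)$), we get simultaneously:
\begin{itemize}
\item $|Q_i|/s\in(1\pm\epsilon)n_i/n$;
\item the empirical variance $\frac1{|Q_i|}\sum_{q\in Q_i}\norm{q-\mean_i}^2\in(1\pm\epsilon)\sigma_i^2$, which uses kurtosis: the variance of $\norm{q-\mean_i}^2$ is at most $M\sigma_i^4$, and $|Q_i|\gtrsim s\beta/k$;
\item the sample mean deviation satisfies $\norm{\smean_i-\mean_i}^2\le \epsilon^2\sigma_i^2$, via $\E\norm{\smean_i-\mean_i}^2=\sigma_i^2/|Q_i|$ and Markov. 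Here I would actually need $\epsilon\sigma_i$ scaled appropriately; the $k/\delta$ in $s$ covers it.
\end{itemize}

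\textbf{Step 2 (structure of good solutions).} Using $\rho$-separation with $\rho^2\le\beta/(120\gamma k)$, show that every $C\in\union$ has, for each $i$, a distinct center $c_{\pi(i)}$ with $\norm{c_{\pi(i)}-\mean_i}^2\le O(\gamma k\,\OPT/(\beta n_i/n\cdot))$, i.e. small relative to the inter-cluster separation. The argument: if some cluster has no nearby center, its points pay roughly $\frac{n_i}{n}\Delta^2$, where $\Delta$ is the separation scale. Separation forces $\OPT_{k-1}\ge \OPT/\rho^2$, and balancedness makes this exceed $\gamma\OPT$. For $C\in\sol(Q)$, run the same argument on $Q$ using Step 1 to transfer $\OPT_2(Q)\le(1+O(\epsilon))\OPT_2(P)$, since evaluating $\{\mean_i\}$ on $Q$ is controlled by the empirical variances.

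\textbf{Step 3 (cost decomposition).} For such $C$, the matched cost of cluster $i$ is
\[
\frac{1}{n}\sum_{p\in P_i}\norm{p-c_{\pi(i)}}^2=\frac{n_i}{n}\big(\sigma_i^2+\norm{\mean_i-c_{\pi(i)}}^2\big),
\]
and similarly on $Q$:
\[
\frac{|Q_i|}{s}\Big(\widehat{\sigma}_i^2+\norm{\mean_i-c}^2+2\ip{\smean_i-\mean_i}{\mean_i-c}\Big).
\]
The cross term is bounded by $2\epsilon\sigma_i\norm{\mean_i-c}\le\epsilon(\sigma_i^2+\norm{\mean_i-c}^2)$, so the matched costs agree within $1\pm O(\epsilon)$.

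\textbf{Step 4 (main obstacle).} The actual cost uses $\min_{c\in C}$ rather than the matched center. Points of $P_i$ closer to another center contribute savings, and these savings must be comparable on $P$ and $Q$. I would bound the total mismatch: a point $p\in P_i$ switches only if $\norm{p-\mean_i}\gtrsim\Delta/2$, where $\Delta$ is the distance from $\mean_i$ to the other centers. By the fourth-moment bound, the contribution of such far points to the cost is at most $M\sigma_i^4/\Delta^2$, and separation makes $\sigma_i^2/\Delta^2\le O(\rho^2 k/\beta)$. Their mass in $Q$ is controlled by the same Chebyshev estimate, using the fourth moment so the bound holds uniformly over all $C$. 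Then both reassignment errors are an $O(\epsilon)$ fraction of the cost, after rescaling $\epsilon$.

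The uniformity over infinitely many $C$ is the delicate point. It is handled because all $C$-dependence enters only through $\norm{\mean_i-c}$ and the fixed sample statistics $\widehat{\sigma}_i$, $\smean_i$, and $\frac1{|Q_i|}\sum_{q\in Q_i}\norm{q-\mean_i}^4$.
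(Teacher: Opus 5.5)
Your Steps 1--3 are sound in substance and parallel the paper's good event, localization lemma, and matched-cost bound. The gap is in Step 4.

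\textbf{The reassignment terms cannot be bounded separately.} You bound $B_v(P)$ and $B_v(Q)$ \emph{separately} by the cost of far points and conclude that each is an $O(\epsilon)$ fraction of the cost. Your own estimate gives only $B_v(P)\lesssim M\rho^2(k/\beta)\OPT$. The hypothesis $\rho^2\le\beta/(120\gamma k)$ does not involve $\epsilon$, so this is $O(M/\gamma)\OPT$, a constant fraction. Rescaling $\epsilon$ cannot help, because $\rho$ is not rescaled with it.

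This is not slack in the estimate. Take $k=2$ equal clusters on a line with $D^2\approx 960\gamma\OPT$ and the bulk of each cluster at distance $\sigma$ from its center. Let cluster~$1$ place a kurtosis-admissible mass $\theta\approx\sigma^4/x^4$ at $\pm x$, with $x$ slightly below $D/2$. Shifting $\mu_2$ toward $\mu_1$ by $\sqrt{2\OPT}$ gives a $C\in\sol(P)$ for $\gamma=2$. Yet $B_v(P)$ is then a fixed positive multiple of $\OPT$, independent of $\epsilon$. So what must be small is the \emph{difference} $B_v(Q)-B_v(P)$, uniformly over all admissible displacements.

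Your route does work if you additionally assume an $\epsilon$-dependent separation such as $\rho^2\lesssim\epsilon\beta/(Mk)$. Then the far-point cost $T$ is $O(\epsilon)\OPT$, and one Chebyshev bound on $\frac1s\sum_{q\in Q\cap F}a(q)^2$ suffices. That is a weaker theorem than the one stated.

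\textbf{Uniformity over $C$ is not justified.} Your last paragraph asserts the needed uniformity without proof, and the assertion is false for this term. Which far points switch, and by how much, depends on the directions of all $k$ displacements relative to those points, i.e.\ on the Voronoi geometry of $C$. It is not determined by $\norm{\mean_i-c}$ and a few fixed sample statistics. A fixed-$C$ Chebyshev bound does not transfer to all $C$. A net over the displacement set would bring in the dimension, which is unbounded in a Hilbert space.

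\textbf{What the paper does instead.} The paper supplies the missing empirical-process bound in four steps.
\begin{enumerate}
\item It symmetrizes $\sup_{v\in\Vset}|B_v(Q)-B_v(P)|$ into a Rademacher process indexed by the sample-independent set $\Vset$.
\item It removes the $\max_j$ inside $b_v$ via Maurer's vector contraction inequality.
\item It lifts $v_r\mapsto(v_r,\norm{v_r}^2/(2\Gams))$, which makes every $h_j$ linear.
\item Cauchy--Schwarz then gives the dimension-free bound $\E Z\le 63\sqrt{k}\,\Gams\sqrt{T/s}$.
\end{enumerate}
Separation and kurtosis enter only through $T\le16\rho^2Mk\OPT/\beta$. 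Its role is to cancel the large factor $\Gams^2$, giving $\Gams^2T\le kM\OPT^2/\beta$, not to make the boundary gain itself small. Efron--Stein, with $b_v\le4a^2$ and the fourth-moment bound, controls $\Var(Z)$. Markov on $Z^2$ then yields $s=O(k^2M/(\beta\epsilon^2\delta))$.
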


We complement our theoretical results with an experimental evaluation (\Cref{section:experiments}) on real-world datasets, 
comparing uniform sampling against importance-sampling constructions of coreset for both $1$-median and $k$-means. 
We find that uniform sampling matches the empirical accuracy of these more expensive methods, despite requiring several orders of magnitude less construction time and being oblivious to the input.

\subsection{Technical overview}

\paragraph{Upper bound for \texorpdfstring{$k=1$}{k=1} (Theorem~\ref{thm:stable}).}
The key idea is to analyze $1$-mean in Hilbert space, 
rather than $1$-median in $\ell_1^d$ or $\ell_2^d$.
We use the sufficient condition of~\citep{carmel2025stablecoresetsunleashingpower} for a stable coreset, called \emph{relative cost-difference approximation} (RCDA) property: 
for every candidate center set $C$, the difference between its cost and the optimal cost is approximately the same whether measured on $P$ or on $Q$ (see \Cref{def:rcda}).

Now for the $1$-mean problem in Hilbert space $\Xspace$, 
we prove that the RCDA error at a center $c$ factors into 
a center-dependent term and a sample-dependent term $\Delta$, 
which is the deviation of the sample mean from the true mean. 
Since this $\Delta$ does not depend on $c$, a single concentration bound controls it, 
and hence also the RCDA error, simultaneously for all centers. 
Once we establish that a uniform sample $Q$ has the RCDA property (with high probability), 
we just invoke the framework of~\citep{carmel2025stablecoresetsunleashingpower} 
(see \Cref{thm:rcda-to-stable}) to conclude that $Q$ is a stable coreset.

\paragraph{Lower bounds for $k=1$ (Theorem~\ref{thm:lowerbounds}).}
Both bounds use the same one-dimensional instance, where $\frac1n$-fraction of the mass is at a point $D>0$ 
and the rest is at the origin, so the sample sees the rare point only through the number of times it is drawn. 
The separation between the two notions comes from how accurately the sample must reproduce the mean. A weak coreset only needs the sample mean to be within $O\big(\sqrt{\epsilon \OPT_2(P)}\big)$ of the true mean. A stable coreset must preserve the cost order of \emph{every} pair of centers, and we detect the pair that is hardest to preserve; namely, two centers placed symmetrically on different sides of the sample mean $\smean$. 
They have identical cost on $Q$, 
yet on $P$ their cost ratio grows linearly with the deviation of $\smean$ from the true mean. 
Any nonzero deviation therefore turns a tie on $Q$ into a strict order on $P$, and forcing this deviation below $\epsilon$ demands the sample mean to be a $\sqrt\epsilon$ factor closer than in the weak coreset case. 
This quadratic gap in the tolerated deviation is exactly the $\epsilon^{-1}$ versus $\epsilon^{-2}$ separation, pinpointing an inherent difference between the two notions, weak and stable coresets.
The proof appears in \Cref{sec:lower_bound_appendix}.

\paragraph{Upper bounds for $k > 1$ (\Cref{thm:kmeans-stable,thm:kmeans-strong}).}
The difficulty in moving from $k=1$ to general $k$ is that the clean factorization of the RCDA error no longer holds. For a single cluster, every candidate center sees the same sample mean, so a single concentration bound controls all candidates simultaneously. For general $k$, however, each candidate solution (set of $k$ centers) induces its own Voronoi partition.
When the candidate centers vary (``move''), points near cluster boundaries may switch their assigned center, producing an additional error term that depends on the assignment itself. Consequently, unlike the $1$-mean case, there is no single empirical quantity whose concentration controls the entire family of candidate solutions.
The standard approach to handling such infinite families relies on VC-dimension or covering-number arguments~\citep{DBLP:conf/icml/HuangJL23}. We take a different route.

Fix an optimal solution $C^*$ and parametrize every candidate solution by its \emph{displacement} $v=(v_1,\dots,v_k)$ from the optimal centers, i.e.\ the candidate is $\{\mu_1+v_1,\dots,\mu_k+v_k\}$. In this parametrization, applying the K\"onig--Huygens identity separately within each optimal cluster of $P$ and of the sample $Q$ expresses both costs in closed form, and their difference splits into two terms.
The first is the \emph{matched-cost error}, in which every point is assigned to the center displaced from its own optimal center, regardless of which candidate center is actually nearest.
This term depends only on the empirical weights and centroids of the optimal clusters, and is controlled by the same per-cluster concentration as in the $1$-mean case.
The second is the \emph{boundary-gain error}, the correction for points whose nearest candidate center is displaced from a different optimal center.
Bounding this term is the core technical challenge.

Before turning to it, we exploit the geometry of well-separated instances to show that every candidate of constant-factor approximation is \emph{localized} around $C^*$: each of its centers lies within a small radius $\Gamma$ of a distinct optimal center, so the displacements lie in a fixed set $\Vset\subset\Xspace^k$. Crucially, the set $\Vset$ is determined by the input alone and does not depend on the sample. 

The boundary-gain error depends on the minimum squared distance from each point to the candidate centers.
This minimum is taken over all $k$ centers at once, making the error non-linear and non-separable across clusters.
To handle it, we first symmetrize the deviation, replacing the comparison between the sample and the population by a comparison between two independent samples, which yields a Rademacher process indexed by the displacements in $\Vset$.
The boundary gain at a point is a maximum of functions of $v$ that are affine except for a quadratic term, and we remove the two obstructions in turn.
Maurer's vector contraction inequality~\citep{DBLP:conf/alt/Maurer16} removes the maximum; it bounds a Rademacher process indexed by vector-valued Lipschitz functions, up to a constant factor, by a Rademacher process indexed by the coordinates of those functions.
The quadratic term is then removed by \emph{lifting} each displacement $v_r$ to $(v_r,\norm{v_r}^2/2\Gamma)$ in one extra dimension, which makes every coordinate function linear in the lifted variable.
The entire process thereby collapses to a single inner product in a Hilbert space between a bounded lifted displacement and one Rademacher vector, and Cauchy--Schwarz reduces the problem to the second moment of that vector, which we bound directly, without any tail assumption on the dataset $P$.
Overall, we obtain a uniform bound on the \emph{expected} boundary-gain error.

The final step is to convert this expectation bound into a high-probability guarantee, and this is exactly where our two results (\Cref{thm:kmeans-stable,thm:kmeans-strong}) diverge. For the strong coreset guarantee, we bound the variance of the boundary-gain error directly. Since changing one sampled point may have a large effect when that point lies far from its optimal cluster center, the resulting variance bound depends on fourth moments, which we control by the bounded-kurtosis assumption.
For the stable coreset guarantee, rather than approximating the cost itself, we first establish the $\epsilon$-RCDA guarantee. The RCDA decomposition eliminates the term responsible for the fourth-moment dependence before concentration is applied, leaving only second-moment quantities to control.
The trade-off is that this requires a bound on $\cost(C^*,Q)$, which follows easily from Markov's inequality, leading to the weaker dependence on the failure probability.
The full proof appears in \Cref{sec:kmean} due to space constraints.

\subsection{Related work}\label{sec:related}
A large literature studies algorithms for Euclidean $(k,z)$-clustering under structural assumptions on the input, including
perturbation resilience~\citep{DBLP:journals/cpc/BiluL12,BalcanLiang,AMM2017,BHW2016,DBLP:conf/innovations/MakarychevM20}, various
notions of separation and stability~\citep{ABS2010,ABS2012,rabani,CAS2017} and more~\citep{KumarKannan, BBG2013, 7606826}.

Another line of work studies sublinear-time clustering algorithms based on uniform random sampling~\citep{DBLP:conf/soda/MishraOP01, mop04,DBLP:journals/rsa/CzumajS07, DBLP:journals/ml/Ben-David07}. These works show that, under various settings and assumptions, approximate solutions for $k$-median and $k$-means can be obtained from samples whose size is independent of the input size.

Coresets have proven to be a valuable algorithmic tool for the $(k,z)$-clustering problem and have been studied extensively for over two decades~\citep{DBLP:conf/stoc/Har-PeledM04, DBLP:journals/siamcomp/Chen09,feldman2011unified, DBLP:conf/focs/SohlerW18,DBLP:conf/stoc/HuangV20, DBLP:conf/soda/BravermanJKW21, DBLP:conf/stoc/Cohen-AddadSS21}.
The dominant notion throughout this literature is that of a strong coreset, and the current state-of-the-art construction is essentially optimal, matching the lower bound up to lower-order
terms~\citep{DBLP:conf/stoc/Cohen-AddadLSS22,DBLP:conf/stoc/Huang0024}.
Strong coreset construction based on uniform sampling was originally proposed by~\citet{DBLP:journals/siamcomp/Chen09}, who suggested partitioning the metric space into rings and sample uniformly from each ring. The analysis was later improved in~\citep{DBLP:conf/nips/Cohen-AddadSS21,DBLP:conf/focs/BravermanCJKST022}.
Yet, the algorithm still requires a linear scan of the dataset and hence the overall sampling distribution is non-uniform.
For kernel clustering~\citet{DBLP:journals/ml/JiangKLZ24} give strong coresets of size $\poly(k/\epsilon)$, which as a special case yields strong coresets in $\ell_1^d$ (and every submetric of it).

Another related problem is Euclidean $1$-center (the minimum enclosing
ball), for which strong coresets require size exponential in the
dimension~\citep{AHV05}, whereas weak coresets are dimension-free~\citep{BHI02,BC08}. In the $\ell_1$ metric, however,
\emph{both} strong and weak coresets for $1$-center must depend on the dimension~\citep{DBLP:conf/innovations/CarmelGJK25}.

A related but conceptually different coreset notion, which is also intermediate between weak and strong coresets, is the \emph{approximation-preserving coreset} of~\citep{prabhu2026approximationpreservingcoresets}, which guarantees that an $\alpha$-approximate solution on $Q$ can be post-processed to an $\alpha\beta$-approximate solution on $P$, for every $\alpha,\beta\ge1$, via an explicit algorithm.

\section{Preliminaries}\label{section:preliminaries}

Let $\Xspace$ be a Hilbert space equipped with inner product $\ip{\cdot}{\cdot}$. Throughout the remainder of the paper, we focus on the $k$-means problem, corresponding to the special case $z=2$ of $(k,z)$-clustering. Accordingly, we omit the subscript $z$ from the notation.
In the \emph{$k$-means problem} the input is a set $P \subseteq \Xspace$
of size $|P| = n$ and an integer $k \geq 1$, 
and the goal is to find a $k$-center set $C \subseteq \Xspace$ minimizing
\[
  \cost(C, P) \eqdef \frac{1}{n} \sum_{p \in P} \min_{c \in C} \norm{p - c}^2.
\]
We denote by $\OPT \eqdef \min_{C \subseteq \Xspace, \abs{C}=k} \cost(C,P)$ the cost of an optimal solution.
Each center set $C$ induces a (Voronoi) partition of $P$ into
clusters $\{X_1,\ldots,X_k\}$, where $p \in X_i$ if $c_i$ is a nearest
center to $p$ in $C$ (ties broken by smallest label).
We fix an optimal solution $C^* = \{\mu_1,\ldots,\mu_k\}$ for the $k$-means problem with induced partition $\{X_1^*,\ldots,X_k^*\}$, for $k=1$ we simply denote $C^*=\mu \in \Xspace$. 
Define the per-cluster quantities 
\[
  w_i \eqdef \frac{|X_i^*|}{n} ,
  \quad
  V_i \eqdef \frac{1}{|X_i^*|}\sum_{p \in X_i^*}\norm{p - \mean_i}^2,
  \quad \text{ then }
  \cost(C^*, P) = \sum_{i=1}^k w_i V_i.
\]

The following identity decomposes the cost of a single cluster 
into a bias term and a variance term (also known as the K\"{o}nig--Huygens identity).

\begin{claim}\label{cl:kh}
Let $X \subseteq P$ of size $m$ and $\mu = \frac{1}{m}\sum_{p \in X} p$. 
Then, 
\begin{equation}
  \forall c \in \Xspace,
  \qquad
  \cost(c,X)
  =  \norm{c - \mean}^2 + \cost(\mu,X).
\end{equation}
\end{claim}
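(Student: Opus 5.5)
The plan is a direct expansion of the squared norm around the centroid $\mean$. For each $p \in X$ we write $p - c = (p - \mean) + (\mean - c)$ and use bilinearity of the inner product in the Hilbert space $\Xspace$:
\[
  \norm{p-c}^2 = \norm{p-\mean}^2 + 2\ip{p-\mean}{\mean-c} + \norm{\mean-c}^2 .
\]
Since $C=\{c\}$ is a single center, the minimum in the definition of $\cost$ is trivial, so $\cost(c,X) = \frac1m\sum_{p\in X}\norm{p-c}^2$. We therefore average the displayed identity over $p \in X$.

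Averaging produces three terms. The first is $\frac1m\sum_{p\in X}\norm{p-\mean}^2 = \cost(\mean,X)$. The last is constant in $p$, so it contributes exactly $\norm{c-\mean}^2$. For the middle term, linearity of the inner product in its first argument gives
\[
  \frac1m\sum_{p\in X} 2\ip{p-\mean}{\mean-c} = 2\ip{\tfrac1m\sum_{p\in X} p - \mean}{\mean - c},
\]
and this vanishes by the definition $\mean = \frac1m\sum_{p\in X}p$. Combining the three terms gives the claim for every $c\in\Xspace$.

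No step here is a genuine obstacle. The only thing to check is that the normalization of $\cost$ (averaging by $1/m$ over $X$ rather than $1/n$) matches on both sides, and it does, since both costs are computed on the same set $X$. The argument uses only the inner-product structure, so it holds in an arbitrary (possibly infinite-dimensional) Hilbert space with no finiteness assumption beyond $X$ being finite.
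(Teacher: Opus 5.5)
Your proof is correct and is essentially the paper's own argument: write $p-c=(p-\mu)+(\mu-c)$, expand the squared norm, average over $X$, and observe that the cross term vanishes because $\mu$ is the centroid of $X$.
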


\begin{proof}
Write $p - c = (p - \mean) + (\mean - c)$. 
Then 
\[
  \cost(c,X)
  = \frac{1}{m} \sum_{p \in X} \norm{p - c}^2
  = \frac{1}{m}\sum_{p \in X}\norm{p-\mean}^2
    + 2\Bigl\langle\frac{1}{m}\sum_{p \in X}(p - \mean),\;\mean - c\Bigr\rangle
    + \norm{c - \mean}^2,
\]
and the middle term vanishes since $\frac{1}{m}\sum_{p \in X}(p - \mean) = 0$.
\end{proof}

It follows immediately that the centroid of $X_i^*$, 
i.e., $\mean_i = \frac{1}{|X_i^*|}\sum_{p \in X_i^*} p$,
is the point minimizing the cost for the cluster $X_i^*$. 

Let $Q$ be a uniform sample of size $s$ drawn i.i.d.\ from $P$.
Define the per-cluster sample quantities
\[
  Q_i \eqdef Q \cap X_i^*,
  \quad
  s_i \eqdef |Q_i|,
  \quad
  \hat{w}_i \eqdef \frac{s_i}{s},
  \quad
  \smean_i \eqdef \frac{1}{s_i}\sum_{q \in Q_i} q,
  \quad
  \Delta_i \eqdef \smean_i - \mean_i,
  \quad
  \hat V_i \eqdef \tfrac1{s_i}\sum_{q\in Q_i}\norm{q-\smean_i}^{2}.
\]

Now denote $\OPTQ \eqdef \sum_i \hat w_i \hat V_i$, and notice that this cost and its underlying partition are not necessarily optimal for $Q$, that is, 
$\min_{C \subseteq \Xspace, |C|=k} \cost(C,Q) \leq \OPTQ$.

A key tool for establishing stable coresets, introduced in~\citep{carmel2025stablecoresetsunleashingpower}, is the following notion, which requires that the cost gap between any candidate center set $C$ and the fixed optimal solution $C^*$ be nearly the same whether measured on $P$ or on the sample $Q$.

\begin{definition}[$\epsilon$-RCDA]\label{def:rcda}
A subset $Q \subseteq P$ is an
\emph{$\epsilon$-relative cost-difference approximation ($\epsilon$-RCDA)} of $P$ in $\Xspace$ if
{
\small
\begin{equation}\label{eq:rcda}
  \forall\, C \subseteq \Xspace,\ |C|=k,
  \qquad
  \Bigl|
    \bigl[\cost(C, P) - \cost(C^*, P)\bigr]
    -
    \bigl[\cost(C, Q) - \cost(C^*, Q)\bigr]
  \Bigr|
  \;\leq\; \epsilon  \cost(C, P).
\end{equation}
}
\end{definition}

The $\epsilon$-RCDA property implies a stable coreset, 
provided that the sample does not excessively overestimate the cost of the fixed optimum $C^*$.
The proof given in~\citep{carmel2025stablecoresetsunleashingpower}  for $1$-median
extends verbatim to $k$-means, for completeness we reproduce the proof in Appendix~\ref{appendix:rcda}.  

\begin{restatable}[\citet{carmel2025stablecoresetsunleashingpower} (Theorem~3.1)]{theorem}{rcdatostable}\label{thm:rcda-to-stable}
Let $\epsilon \in (0,1/5)$ and suppose that
\begin{enumerate}[label=(I\arabic*),nosep]
    \item \label{item:I1} $Q$ is an $\epsilon$-RCDA of $P$ in $\Xspace$, and
    \item \label{item:I2} $\cost(C^*, Q) \leq \alpha \cdot \cost(C^*, P)$ for some $\alpha \geq 1$.
\end{enumerate}
Then $Q$ is a stable $(\epsilon/\alpha,\,4\epsilon)$-coreset of $P$.
\end{restatable}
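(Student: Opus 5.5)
The plan is a direct chain of inequalities. For a center set $C$ write $D_P(C) \eqdef \cost(C,P)-\cost(C^*,P)$ and $D_Q(C) \eqdef \cost(C,Q)-\cost(C^*,Q)$. Then \ref{item:I1} says $\abs{D_P(C)-D_Q(C)}\le\epsilon\cost(C,P)$ for every $C$. Since $C^*$ is optimal for $P$, we also have $D_P(C)\ge 0$.

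Fix $C_1,C_2$ with $\cost(C_1,Q)\le(1+\epsilon/\alpha)\cost(C_2,Q)$. Abbreviate $A\eqdef\cost(C^*,P)$, $B\eqdef\cost(C_2,P)$ and $D\eqdef D_P(C_2)=B-A\ge 0$. The goal is $\cost(C_1,P)\le(1+4\epsilon)B$.

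\textbf{Step 1 (move $C_1$ from $P$ to $Q$).} By RCDA applied to $C_1$,
\[
\cost(C_1,P)=A+D_P(C_1)\le A+D_Q(C_1)+\epsilon\cost(C_1,P).
\]

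\textbf{Step 2 (use the hypothesis on $Q$).} Subtracting $\cost(C^*,Q)$ from both sides of the hypothesis gives
\[
D_Q(C_1)\le D_Q(C_2)+\tfrac{\epsilon}{\alpha}\cost(C_2,Q).
\]
By RCDA applied to $C_2$, $D_Q(C_2)\le D+\epsilon B$.

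\textbf{Step 3 (bound the error term via \ref{item:I2}).} Write $\cost(C_2,Q)=\cost(C^*,Q)+D_Q(C_2)$. By \ref{item:I2} and RCDA,
\[
\cost(C_2,Q)\le \alpha A+D+\epsilon B.
\]
Using $\alpha\ge1$ and $D\ge0$, this yields
\[
\tfrac{\epsilon}{\alpha}\cost(C_2,Q)\le \epsilon A+\epsilon D+\epsilon^2 B=(\epsilon+\epsilon^2)B.
\]
This is the one place where the factor $1/\alpha$ in the stability parameter is needed: it exactly cancels the possible overestimate $\alpha$ of $\cost(C^*,Q)$.

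\textbf{Step 4 (combine).} Collecting Steps 1--3,
\[
(1-\epsilon)\cost(C_1,P)\le A+D+\epsilon B+(\epsilon+\epsilon^2)B=(1+\epsilon)^2B.
\]
It remains to check $(1+\epsilon)^2\le(1+4\epsilon)(1-\epsilon)$. Expanding, this is $1+2\epsilon+\epsilon^2\le 1+3\epsilon-4\epsilon^2$, i.e.\ $5\epsilon^2\le\epsilon$, which holds for $\epsilon\le 1/5$. Hence $\cost(C_1,P)\le(1+4\epsilon)\cost(C_2,P)$, which is the stable $(\epsilon/\alpha,4\epsilon)$ guarantee.

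The main obstacle is the bookkeeping in Step 3. One must keep $D$ separate from $A$ rather than bound $\cost(C_2,Q)$ crudely by $(\alpha+1+\epsilon)B$. A cruder bound produces a constant larger than $4$ in the final factor. The term-by-term grouping above is what makes the final inequality close exactly at $\epsilon\le1/5$.
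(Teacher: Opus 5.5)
Your proof is correct and is essentially the paper's argument. Both combine the RCDA bound for $C_1$ and $C_2$ with the hypothesis on $Q$, and both control $\frac{\epsilon}{\alpha}\cost(C_2,Q)$ through the same estimate $\cost(C_2,Q)\le(\alpha-1)\cost(C^*,P)+(1+\epsilon)\cost(C_2,P)$, which is your $\alpha A+D+\epsilon B$. The paper relaxes this to $(\alpha+\epsilon)\cost(C_2,P)$ before multiplying by $\epsilon/\alpha$, which gives the marginally sharper factor $1+2\epsilon+\epsilon^2/\alpha$ in place of your $(1+\epsilon)^2$; both versions close at $\epsilon\le 1/5$.
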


% ==================================================================
\section{Upper bound for \texorpdfstring{$k=1$}{k=1}}\label{sec:onecenter}

\subsection{RCDA in Hilbert space}

In the $k=1$ case, we denote the centers by lower-case letters and simply write $\cost(x,P) = \frac{1}{n}\sum\norm{x-p}^2$. Also denote $V=\cost(\mu,P)$, $V_Q =\cost(\smean, Q)$, where $\smean\eqdef\frac1s\sum_{q\in Q}q$ is the sample mean of $Q$ and $\Delta\eqdef\smean-\mean$.

Applying \cref{cl:kh} to both $P$ and $Q$, for all $c \in \Xspace$:
\begin{align}
  \cost(c, P) &= \norm{c - \mean}^2 + V, \qquad&
  \cost(\mean, P) &= V, \label{eq:kh-P}\\
  \cost(c, Q) &= \norm{c - \smean}^2 + V_Q, \qquad&
  \cost(\mean, Q) &= \norm{\Delta}^2 + V_Q. \label{eq:kh-Q}
\end{align}

First, we note that the RCDA error can be written as a single inner product.

\begin{claim}\label{cl:rcda-ip}
For every $c \in \Xspace$,
\[
  \bigl[\cost(c,P) - \cost(\mean,P)\bigr] - \bigl[\cost(c,Q) - \cost(\mean,Q)\bigr]
  \;=\; 2\ip{c - \mean}{\Delta}.
\]
\end{claim}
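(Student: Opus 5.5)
The plan is a direct computation that substitutes the four identities \eqref{eq:kh-P} and \eqref{eq:kh-Q}, which come from \cref{cl:kh} applied to $P$ with mean $\mean$ and to $Q$ with mean $\smean$. No estimates are needed; the only work is algebraic bookkeeping.

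First, by \eqref{eq:kh-P}, the population gap is $\cost(c,P)-\cost(\mean,P)=\norm{c-\mean}^2$, since the variance $V$ cancels. Next, by \eqref{eq:kh-Q}, the sample gap is $\cost(c,Q)-\cost(\mean,Q)=\norm{c-\smean}^2-\norm{\Delta}^2$, since $V_Q$ cancels. Here the identity $\cost(\mean,Q)=\norm{\Delta}^2+V_Q$ is just \cref{cl:kh} on $Q$ evaluated at the point $\mean$, using $\norm{\mean-\smean}=\norm{\Delta}$.

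It then remains to show that $\norm{c-\mean}^2-\norm{c-\smean}^2+\norm{\Delta}^2=2\ip{c-\mean}{\Delta}$. To do this, write $c-\smean=(c-\mean)-\Delta$, which holds because $\smean=\mean+\Delta$. Expanding gives $\norm{c-\smean}^2=\norm{c-\mean}^2-2\ip{c-\mean}{\Delta}+\norm{\Delta}^2$. Substituting this, the terms $\norm{c-\mean}^2$ and $\norm{\Delta}^2$ cancel, and exactly $2\ip{c-\mean}{\Delta}$ is left, as claimed.

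There is no real obstacle. The only points needing care are the sign conventions, namely that $\Delta=\smean-\mean$ rather than $\mean-\smean$, and the fact that $\cost(\mean,Q)$ must be computed with $Q$'s own mean $\smean$ as the reference point in \cref{cl:kh}.
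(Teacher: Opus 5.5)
Your proposal is correct and follows the paper's proof essentially line by line. Both substitute the identities \eqref{eq:kh-P} and \eqref{eq:kh-Q}, cancel $V$ and $V_Q$, and expand $\norm{c-\smean}^2=\norm{(c-\mean)-\Delta}^2$ so that only $2\ip{c-\mean}{\Delta}$ remains.
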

\begin{proof}
From \eqref{eq:kh-P} and \eqref{eq:kh-Q}:
\begin{align*}
  \cost(c,P) - \cost(\mean,P) &= \norm{c-\mean}^2 + V - V = \norm{c-\mean}^2,\\
  \cost(c,Q) - \cost(\mean,Q) &= \bigl(\norm{c-\smean}^2 + V_Q\bigr) - \bigl(\norm{\Delta}^2 + V_Q\bigr)
                                = \norm{c-\smean}^2 - \norm{\Delta}^2.
\end{align*}
Expanding $\norm{c - \smean}^2 = \norm{(c-\mean) - \Delta}^2 = \norm{c-\mean}^2 - 2\ip{c-\mean}{\Delta} + \norm{\Delta}^2$, so 
\begin{align*}
    \cost(c,Q)-\cost(\mean,Q) = \norm{c-\mu}^2-2\ip{c-\mu}{\Delta}.
\end{align*}
Consequently,
\begin{align*}
    \bigl[\cost(c,P) - \cost(\mean,P)\bigr] - \bigl[\cost(c,Q) - \cost(\mean,Q)\bigr]
  = \norm{c-\mu}^2 - (\norm{c-\mu}^2-2\ip{c-\mu}{\Delta}) =   
   2\ip{c - \mean}{\Delta}.
\end{align*}
\end{proof}

Second, we bound $\norm{\Delta}^2$ via concentration (see also~\citep{DBLP:conf/compgeom/InabaKI94} for similar concentration analysis).

\begin{claim}\label{cl:concentration_delta}
Let $\epsilon>0$, then $\Pr\bigl[\norm{\Delta}^2 > \epsilon^2 V\bigr] \leq \dfrac{1}{s\epsilon^2}$.
\end{claim}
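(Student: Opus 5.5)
The plan is a second-moment computation for the sample mean followed by Markov's inequality. Write $Q=\{q_1,\dots,q_s\}$, where the $q_j$ are drawn i.i.d.\ uniformly from $P$. Then
\[
  \Delta=\smean-\mean=\frac1s\sum_{j=1}^s Y_j,
  \qquad Y_j\eqdef q_j-\mean .
\]
Each $Y_j$ is a random vector in $\Xspace$ taking finitely many values. Since $\mean$ is the centroid of $P$, we have $\E[Y_j]=\frac1n\sum_{p\in P}(p-\mean)=0$. Moreover, $\E\norm{Y_j}^2=\frac1n\sum_{p\in P}\norm{p-\mean}^2=V$.

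Next, expand $\norm{\Delta}^2$ with the inner product:
\[
  \E\norm{\Delta}^2=\frac1{s^2}\sum_{j,l}\E\ip{Y_j}{Y_l}.
\]
For $j\ne l$, independence and the finite support give $\E\ip{Y_j}{Y_l}=\ip{\E Y_j}{\E Y_l}=0$. To avoid needing Bochner integrals, this can be written as a finite double sum over pairs of points of $P$, which factors because the sum of $p-\mean$ over $P$ vanishes. Only the $s$ diagonal terms survive, so
\[
  \E\norm{\Delta}^2=\frac{s\,V}{s^2}=\frac{V}{s}.
\]

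Finally, apply Markov's inequality to the nonnegative random variable $\norm{\Delta}^2$:
\[
  \Pr\bigl[\norm{\Delta}^2>\epsilon^2V\bigr]\le\frac{\E\norm{\Delta}^2}{\epsilon^2V}=\frac{1}{s\epsilon^2}.
\]
In the degenerate case $V=0$, all points coincide, so $\Delta=0$ and the event is empty.

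There is no real obstacle here. The only point needing care is the vanishing of the cross terms in a possibly infinite-dimensional Hilbert space. This is handled by the finite-support expansion above, so the argument needs no dimension dependence and no tail assumptions.
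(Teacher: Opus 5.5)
Your proposal is correct and matches the paper's proof: both write $\Delta$ as an average of i.i.d.\ mean-zero vectors, use the vanishing cross terms to get $\E\norm{\Delta}^2 = V/s$, and then apply Markov's inequality. Your separate treatment of the case $V=0$ is a small improvement, since the paper's Markov step divides by $\epsilon^2 V$ without comment.
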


\begin{proof}
We can write $\Delta = \frac{1}{s}\sum_{q \in Q}(q - \mean)$, where the summands are i.i.d.\
with mean zero.
By independence,
\[
  \mathbb{E}\bigl[\norm{\Delta}^2\bigr]
  = \frac{1}{s^2}\sum_{q \in Q}\mathbb{E}\bigl[\norm{q - \mean}^2\bigr]
  = \frac{1}{s^2} \cdot s \cdot V = \frac{V}{s}.
\]
Applying Markov's inequality 
\[
  \Pr\bigl[\norm{\Delta}^2 > \epsilon^2 V\bigr]
  \;\leq\; \frac{\mathbb{E}[\norm{\Delta}^2]}{\epsilon^2 V}
  = \frac{1}{s\epsilon^2}. \qedhere
\]
\end{proof}

We combine Claim~\ref{cl:rcda-ip} and Claim~\ref{cl:concentration_delta} to show that a uniformly sampled $Q$ is an $\epsilon$-RCDA of $P$.

\begin{claim}\label{cl:rcda-bound}
If $\norm{\Delta}^2 \leq \epsilon^2 V$, then $Q$ is an $\epsilon$-RCDA of~$P$.
\end{claim}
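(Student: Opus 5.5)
By \Cref{cl:rcda-ip}, for every $c\in\Xspace$ the RCDA error equals exactly $2\ip{c-\mean}{\Delta}$. It therefore suffices to show that $\abs{2\ip{c-\mean}{\Delta}}\le \epsilon\,\cost(c,P)$ for every $c$. Here $k=1$, so the candidate sets $C$ in \Cref{def:rcda} are single points $c$ and $C^*=\mean$.

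The first step is Cauchy--Schwarz together with the hypothesis $\norm{\Delta}\le\epsilon\sqrt V$:
\[
  \abs{2\ip{c-\mean}{\Delta}} \le 2\norm{c-\mean}\,\norm{\Delta} \le 2\epsilon\,\norm{c-\mean}\sqrt{V}.
\]
The second step is AM--GM, $2ab\le a^2+b^2$ with $a=\norm{c-\mean}$ and $b=\sqrt V$:
\[
  2\epsilon\,\norm{c-\mean}\sqrt V \le \epsilon\bigl(\norm{c-\mean}^2+V\bigr) = \epsilon\,\cost(c,P),
\]
where the final equality is the K\"onig--Huygens form \eqref{eq:kh-P}. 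Combining the two displays gives exactly \eqref{eq:rcda} for every $c$, so $Q$ is an $\epsilon$-RCDA of $P$.

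No step here is a real obstacle. The only point needing care is that the right-hand side of \eqref{eq:rcda} is $\epsilon\cost(c,P)$, which includes the additive $V$. That term is what absorbs the cross term when $\norm{c-\mean}$ is small, and AM--GM balances the two regimes without any case split.
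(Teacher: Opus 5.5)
Your proof is correct and is essentially the paper's argument. The paper also starts from Claim~\ref{cl:rcda-ip} and Cauchy--Schwarz, but it first applies the weighted Young inequality $2ab\le\epsilon a^2+b^2/\epsilon$ with $b=\norm{\Delta}$ and only then uses the hypothesis. You instead substitute $\norm{\Delta}\le\epsilon\sqrt V$ first and then use unweighted AM--GM, which gives the same bound $\epsilon(\norm{c-\mean}^2+V)=\epsilon\cost(c,P)$.
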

\begin{proof}
By \cref{cl:rcda-ip}, the RCDA error at center $c$ equals $|2\ip{c-\mean}{\Delta}|$.
By Cauchy--Schwarz and Young's inequality ($2ab \leq \epsilon a^2 + b^2/\epsilon$):
\[
  |2\ip{c-\mean}{\Delta}|
  \;\leq\; 2\norm{c-\mean}\cdot\norm{\Delta}
  \;\leq\; \epsilon\norm{c-\mean}^2 + \frac{1}{\epsilon}\norm{\Delta}^2
  \;=\; \epsilon (\cost(c,P) - V)+ \frac{1}{\epsilon}\norm{\Delta}^2.
\]
By assumption, $\frac{1}{\epsilon}\norm{\Delta}^2 \leq \epsilon V$.
Hence, $|2\ip{c-\mean}{\Delta}| \leq \epsilon\cdot\cost(c,P)$, for all $c \in \Xspace$.
\end{proof}

\subsection{Proof of Theorem~\ref{thm:stable}}

\begin{proof}[Proof of \cref{thm:stable}]
Let $\epsilon \leq 1/5$, $\delta >0$ and set $s \eqdef \lceil 2/(\delta \epsilon^2) \rceil$.
By \cref{cl:concentration_delta}, $\Pr[\norm{\Delta}^2 > \epsilon^2 V] \leq \delta/2$.
Since $\mathbb{E}[\cost(\mean, Q)] = V$, Markov's inequality gives
$\Pr[\cost(\mean, Q) > 2V/\delta] \leq \delta/2$.
By the union bound, both $\norm{\Delta}^2 \leq \epsilon^2 V$ and
$\cost(\mean, Q) \leq 2V/\delta$ hold simultaneously with probability at least $1-\delta$.
Conditioned on these events, \cref{cl:rcda-bound} gives that $Q$ is an
$\epsilon$-RCDA of $P$, and $\cost(\mean, Q) \leq 2V/\delta = 2\cost(\mean, P)/\delta$.
Applying \cref{thm:rcda-to-stable},
$Q$ is a stable $(\delta\epsilon/2,\, 4\epsilon)$-coreset of $P$, and the desired result is obtained by rescaling $\epsilon$.
\end{proof}

\subsection{Proof of Theorem~\ref{thm:weak}}

In contrast, we show that for weak coresets it suffices to
have $\norm{\Delta}^2 \leq \epsilon V$, a single power of $\epsilon$ rather than
$\epsilon^2$ as in \cref{cl:rcda-bound}, thus improving upon the required sample size.
While a slightly weaker result, yielding a weak $(0,\epsilon)$-coreset via a similar technique, appears in~\citep{DBLP:conf/compgeom/InabaKI94, DBLP:conf/icml/JubranTMF20}, we include in Appendix~\ref{appendix:weak} the proof of the more general statement for completeness.

\weakthm*

\section{Experiments}\label{section:experiments}

We evaluate uniform sampling against importance-sampling coreset constructions on
widely-used real-world datasets (summarized in Table~\ref{table:datasets}).
Generally speaking, the accuracy of uniform sampling in our experiments is consistent 
with experiments in prior work~\citep{DBLP:conf/nips/MaromF19,
DBLP:conf/icml/BakerBHJK020, DBLP:conf/icml/JubranTMF20, 
DBLP:conf/nips/BravermanJKW21,
DBLP:conf/icml/HuangJL23}. 
These standard datasets were not chosen to satisfy the requirements of our analysis,
and indeed they do not fully satisfy them.
At every value of $k$, the measured separation is $\hat\rho^2 \ge 0.866$, 
while the measured kurtosis is less than $5$. 

Our results concern the three coreset guarantees: weak, stable and strong. 
For each, we define an empirical error that instantiates the corresponding definition, 
and we report these errors as percentages.
The empirical weak-coreset error instantiates \eqref{eq:weak} with $\epsilon=0$, as 
\begin{equation}\label{eq:weak_error}
  \widehat{E}_{\mathrm{weak}}
  := \frac{\cost(C^Q,P)-\cost(C^P,P)}{\cost(C^P,P)},
\end{equation}
where $C^Q$ and $C^P$ are optimal solutions for $Q$ and for
$P$. 
The empirical stable-coreset error instantiates \eqref{eq:stable}, as
\begin{equation}\label{eq:stable_error}
  \widehat{E}_{\mathrm{stable}}
  := \max\Bigl\{
      \tfrac{\cost(C_1,P)}{\cost(C_2,P)}-1
      \ :\ C_1,C_2\in\mathcal{C},\
      \cost(C_1,Q)\le\cost(C_2,Q)
    \Bigr\}. 
\end{equation}
The empirical strong-coreset error instantiates \eqref{eq:strong}, as
\begin{equation}\label{eq:strong_error}
  \widehat{E}_{\mathrm{strong}}
  := \max_{C\in\mathcal{C}}
    \Bigl|\tfrac{\cost(C,Q)}{\cost(C,P)}-1\Bigr| .
\end{equation}

Whereas Definitions~\eqref{eq:stable} and~\eqref{eq:strong} quantify over every
center set in $\Xspace$, in \eqref{eq:stable_error} and
\eqref{eq:strong_error} the maximum is taken over a finite family
$\mathcal{C}$ of candidate solutions, that we generated once per dataset and
reused across every method and coreset size; both are therefore lower
bounds on the true error. 
In Experiment~2, $\mathcal{C}$ consists of
$5{,}000$ sets of $k$ points drawn uniformly from $P$. 
In Experiment~1,
$\mathcal{C}$ consists of centers along the direction that the proof of \Cref{thm:lowerbounds}\ref{item:stable} identifies as hardest.

\textbf{Experimental setup.} All experiments were conducted on a PC with Intel Core i7-13700 CPU and 32GB RAM running Python 3.12.10 on Windows 10. For each experiment, we report the average results over 20 independent runs with shaded regions representing one standard deviation across the runs. 

\begin{table}[t]
\centering
\footnotesize
\setlength{\tabcolsep}{2pt}
\caption{\footnotesize{Specifications of datasets used in Section~\ref{section:experiments}.}}
\label{table:datasets}
\begin{tabular}{>{\raggedright\arraybackslash}p{4.5cm}rr>{\raggedright\arraybackslash}p{6.5cm}}
\toprule
\textbf{Dataset} & \textbf{Size $n$} & \textbf{Dim.\ $d$} & \textbf{Description} \\
\midrule
Yellow Taxi NYC (YT)~\citep{nyctlc2024} & 2.82M & 16 & New York City taxi trips in Jan.\ 2024 \\
Covertype~\citep{covertype_31} & 581,012 & 54 & Cartographic variables for forest plots in the Roosevelt National Forest of Northern Colorado \\
Census 1990~\citep{Census1990} & 2.45M & 68 & Demographic attributes of individuals from the US Census Public Use Microdata Samples \\
\bottomrule
\end{tabular}
\end{table}

\paragraph{Experiment 1: Comparison with importance sampling.}
We compare uniform sampling against sensitivity
sampling~\citep{feldman2011unified} on the $1$-median problem in the
$\ell_1$ metric over the Yellow Taxi NYC dataset. 
We vary the coreset size over $\{200, 250, \ldots, 1000\}$ 
and report both the empirical weak-coreset error~\eqref{eq:weak_error} and the empirical stable-coreset error~\eqref{eq:stable_error}. 
We find that the stable-coreset error exceeds the weak-coreset error roughly by an order of magnitude at every coreset size, reflecting that preserving the order of costs is a substantially stronger requirement than preserving near-optimality alone.

Figure~\ref{figure:comparison} shows no appreciable difference between
the two sampling methods, under either notion.
The difference in construction time is, in contrast, fundamental:
sensitivity sampling examines the entire dataset before it can draw a single point, 
whereas uniform sampling requires constant time per sample and is oblivious to the dataset. 
And indeed, producing a coreset of $500$
points took $2{,}160$ milliseconds for sensitivity sampling against $0.03$ milliseconds for uniform sampling —
almost five orders of magnitude faster.

\begin{figure}[t]
    \centering
    \includegraphics[width=0.8\linewidth]{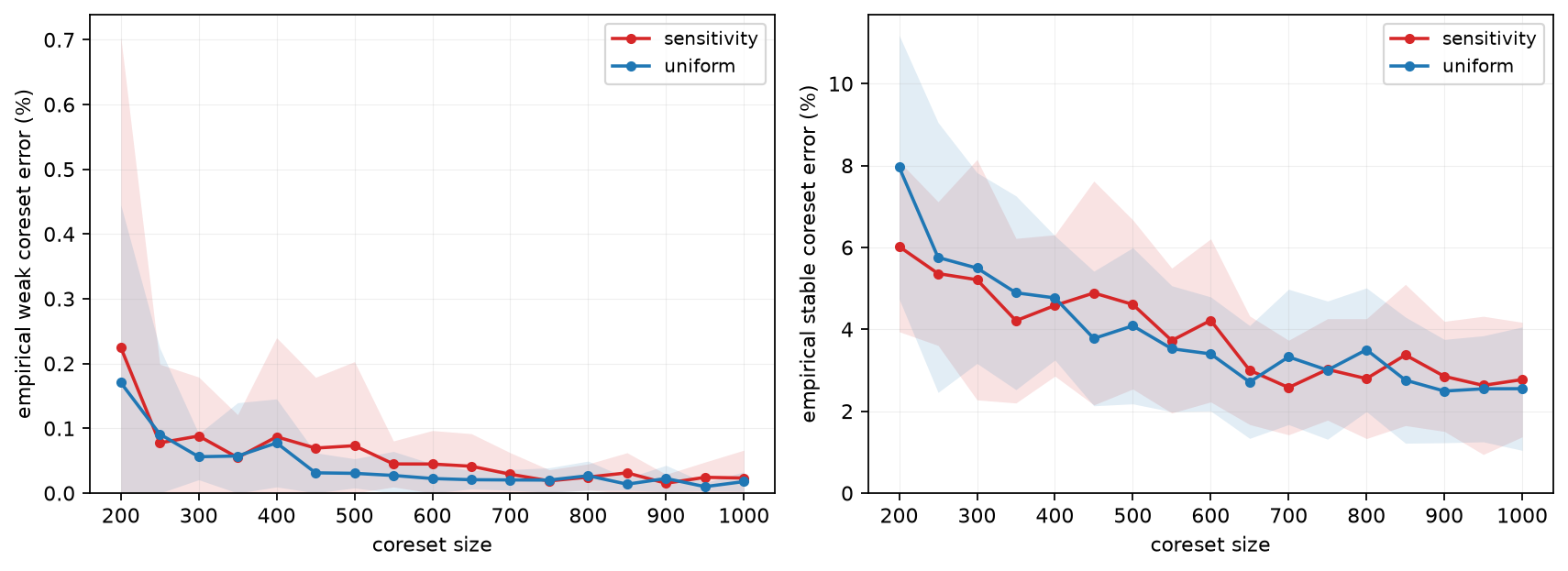}
    \caption{Comparison of uniform and sensitivity sampling on the Yellow Taxi NYC
    dataset: empirical weak-coreset error (left) and empirical stable-coreset error
    (right) as a function of coreset size, for $1$-median in $\ell_1$.}
    \label{figure:comparison}
\end{figure}

\paragraph{Experiment 2: Strong coresets for $k$-means.}
We compare uniform sampling against two importance-sampling
constructions, Lightweight Coresets~\citep{lightweight} and Sensitivity
Sampling~\citep{feldman2011unified}, on the $k$-means problem over the
Census 1990 and Covertype datasets, and report the empirical strong-coreset error~\eqref{eq:strong_error}. We vary the coreset size over $\{200, 250, \ldots, 1000\}$. 
Each dataset is run at two values of $k$: Covertype at its seven cover types and at $35$,
and Census, which carries no label, at $10$ and $50$. 

Figure~\ref{figure:kmeans} shows that uniform sampling is
comparable to the two importance-sampling methods, 
even though the assumptions of \Cref{thm:kmeans-strong} are not satisfied. 
As in Experiment 1, there is a dramatic difference in construction time.
On the larger value of $k$, producing a coreset of $1{,}000$ points took $20{,}000$ and $127{,}000$ milliseconds for sensitivity sampling on Covertype and Census respectively, and $90$ and $900$ milliseconds for lightweight coresets, against $0.05$ milliseconds for uniform sampling —
at least three orders of magnitude faster and sometimes much more. 

\begin{figure}[t]
    \centering
    \includegraphics[width=0.8\linewidth]{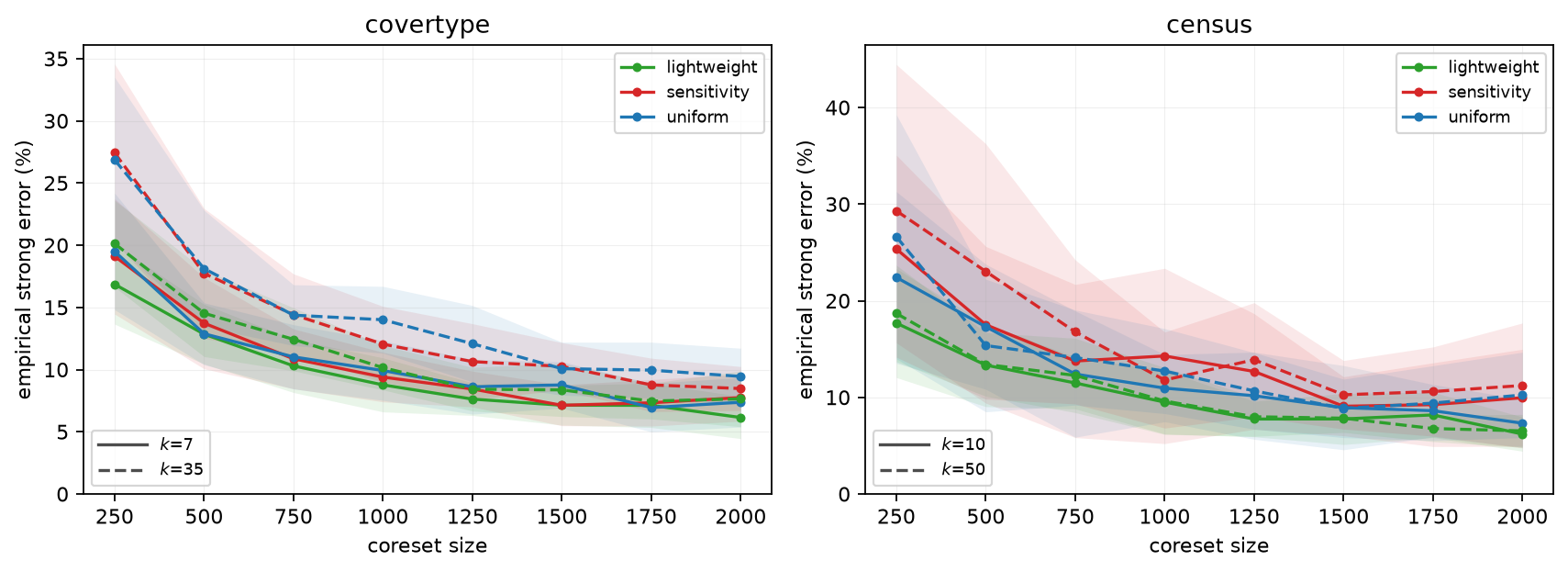}
    \caption{Empirical strong coreset error as a function of coreset
    size, on Covertype (left) and Census 1990 (right). Color denotes the
    sampling method and line style the value of $k$.}
    \label{figure:kmeans}
\end{figure}

\bibliographystyle{plainnat} 
\bibliography{bibliography}

\appendix

\section{Proof of Theorem~\ref{thm:rcda-to-stable}}\label{appendix:rcda}

\rcdatostable*

\begin{proof}
For every $C \subseteq \Xspace$ with $|C|=k$, using~\ref{item:I1} and~\ref{item:I2},
\begin{equation}\label{eq:rcda-cost-bound}
\cost(C,Q) \leq (\alpha-1)\cost(C^*,P) + (1+\epsilon)\cost(C,P) \leq (\alpha+\epsilon)\cost(C,P).
\end{equation}
For every $C_1,C_2 \subseteq \Xspace$ of size $k$,
we have by the triangle inequality and property~\ref{item:I1}, 
\[
 \Bigl|
    \bigl[\cost(C_1, P) - \cost(C_2, P)\bigr]
    -
    \bigl[\cost(C_1, Q) - \cost(C_2, Q)\bigr]
  \Bigr|
  \;\leq\; \epsilon  (\cost(C_1, P)+\cost(C_2, P)).
\]
Now if $\cost(C_1, Q) \leq (1+\epsilon/\alpha)\cost(C_2, Q)$, then
\begin{align*}
\cost(C_1,P) &\leq \cost(C_2,P) + (\cost(C_1, Q) - \cost(C_2, Q))+ \epsilon  (\cost(C_1, P)+\cost(C_2, P)) \\
&\leq \cost(C_2,P) + \frac{\epsilon}{\alpha}\cost(C_2, Q)+ \epsilon  (\cost(C_1, P)+\cost(C_2, P)) \\
&\leq \cost(C_2,P) + \frac{\epsilon}{\alpha}(\alpha+\epsilon)\cost(C_2,P)+ \epsilon  (\cost(C_1, P)+\cost(C_2, P)) \\
&= (1+2\epsilon+\tfrac{\epsilon^2}{\alpha})\cost(C_2,P)+\epsilon\cost(C_1,P),
\end{align*}
where the third line applies~\eqref{eq:rcda-cost-bound} to $C_2$.
By rearranging,
$\cost(C_1,P) \leq \frac{1}{1-\epsilon}(1+2\epsilon+\frac{\epsilon^2}{\alpha})\cost(C_2,P) \leq (1+4\epsilon)\cost(C_2,P)$,
where the last inequality uses $\epsilon \leq 1/5$.
\end{proof}

\section{Proof of Theorem~\ref{thm:weak}}\label{appendix:weak}

\weakthm* 

\begin{proof}
Let $c \in \Xspace$ and let $Q$ be a uniform sample of size $s \geq 2/(\epsilon \delta)$.
If $\cost(c,Q) \le (1+\delta\epsilon/2)V_Q$, then by \eqref{eq:kh-Q},
$\norm{c-\smean}^2 \leq \delta \epsilon V_Q/2$,
and we need to show that $\norm{c-\mean}^2 \leq 4\epsilon V$.
Observe
\begin{align*}
    \norm{c-\mean}^2 
    = \norm{c-\smean+\smean-\mean}^2 
    \leq 2\norm{c-\smean}^2 + 2\norm{\smean-\mean}^2 
    \leq \delta \epsilon V_Q + 2\norm{\Delta}^2.
\end{align*}

It remains to bound $V_Q$ and $\norm{\Delta}^2$.
As in Claim~\ref{cl:concentration_delta}, 
$ \Pr\bigl[\norm{\Delta}^2 > \epsilon V\bigr] \leq \frac{1}{s\epsilon} \leq \frac{\delta}{2} .$
Using Claim~\ref{cl:kh}, $\cost(\mean,Q) = \norm{\Delta}^2 + V_Q$,
and therefore $V_Q \leq \cost(\mean,Q)$. 
Since $\mathbb{E}[\cost(\mean,Q)] = V$, Markov's inequality gives
$
\Pr\bigl[V_Q > \tfrac{2}{\delta} V\bigr]
\leq
\Pr\bigl[\cost(\mean,Q) > \tfrac{2}{\delta} V\bigr]
\leq \frac{\delta}{2}.
$
Conditioned on both events, $
\norm{c-\mean}^2
\leq
\delta\epsilon \cdot \frac{2}{\delta}V + 2 \cdot \epsilon V
=
4\epsilon V$, with probability $1-\delta$. The result is obtained by rescaling $\epsilon$.
\end{proof}

% ==================================================================
\section{Lower bounds for \texorpdfstring{$k=1$}{k=1}}\label{sec:lower_bound_appendix}

The proofs of the lower bounds use a two-point instance, mass $1/n$ at $D\neq 0$ and the rest at $0$, with $n$ tuned to $s$ differently in each proof. We prove the weak bound first, as the stable construction reuses the same instance; its proof also appears in~\citep{DBLP:conf/nips/Cohen-AddadSS21}, and we include it to contrast with the stable guarantee.

\begin{proof}[Proof of \cref{thm:lowerbounds}~\ref{item:weak}]
    Let $n =  \frac{10}{3}s( \epsilon s +1) $, Let $P$ be such that $1/n$ points are at $D \neq 0$ and the rest are at $0$ (w.l.o.g. $n$ divides the size of $P$). Let $X$ be a random variable denoting the number of sampled points at $D$ in $Q$, then, $X \sim \mathrm{Bin}(s,p)$ with $p=\frac{1}{n}$.
    We have,
    \begin{align}\label{eq:weaklb}
        \mean = pD, \quad V=p(1-p)D^2, \quad \Delta = D(\frac{X}{s}-p).
    \end{align}
    If $Q$ is a weak $(0,\epsilon)$-coreset then $\cost(\smean,P) \leq (1+\epsilon)\OPT$, that is, $V + \norm{\Delta}^2 \leq (1+\epsilon)V$.
    By \eqref{eq:weaklb} it follows that $\abs{\frac{X}{s}-p} \leq \sqrt{\epsilon p (1-p)}$.
    Conditioned on the event $X=1$, since $n\geq 10s/3$ we have $\abs{\frac{X}{s}-p} \geq \frac{7}{10s}$, while $\sqrt{\epsilon p (1-p)} \leq \sqrt{\epsilon p}\leq \sqrt{\frac{3\epsilon}{10s(\epsilon s+1)}} \leq \frac{\sqrt{3}}{\sqrt{10}s}<\frac{3}{5s}$. Consequently, conditioned on the event $X=1$, $Q$ is not a weak $(0,\epsilon)$-coreset. As,
    \[
    \delta \geq \Pr\bigl[X=1\bigr] = sp(1-p)^{s-1}\geq sp(1-p)^{n}\geq \frac{1}{3}sp \geq \frac{1}{10\epsilon s+10}.
    \]
    
    We get $10\epsilon s \delta +10\delta \geq 1$, since $\delta \leq 1/20$, it follows that $s \geq \frac{1}{20\epsilon\delta}$.
\end{proof}

The stable lower bound instead controls the cost ratio between two centers symmetric about $\smean$, rather than the displacement $\Delta$ directly.

\begin{proof}[Proof of \cref{thm:lowerbounds}~\ref{item:stable}]

Let $P$ and $X$ be as in the proof of \cref{item:weak} and further denote $t=\sqrt{V+\Delta^2}$.

    Let $c_1=\smean+\mathrm{sign}(\Delta)\,t$ and $c_2=\smean-\mathrm{sign}(\Delta)\,t$. Since $c_1,c_2$ are symmetric around $\smean$, Claim~\ref{cl:kh} gives $\cost(c_1,Q)=\cost(c_2,Q)$. Using $\cost(c,P)=V+(c-\mean)^2$, we get

    \[\frac{\cost(c_1,P)}{\cost(c_2,P)} = \frac{V+(\Delta+t)^2}{V+(\Delta-t)^2}= \frac{2V+2\Delta^2+2\Delta\sqrt{V+\Delta^2}}{2V+2\Delta^2-2\Delta\sqrt{V+\Delta^2}} = \frac{\sqrt{V+\Delta^2}+|\Delta|}{\sqrt{V+\Delta^2}-|\Delta|}.\] 
    
    By setting $x \eqdef \frac{|\Delta|}{\sqrt{V}}$, this ratio equals $\frac{\sqrt{1+x^2}+x}{\sqrt{1+x^2}-x}$, which implies $\frac{\cost(c_1,P)}{\cost(c_2,P)}-1 \ge 2x$.

    \paragraph{Case 1: $s< 4/\epsilon^2$. } Set $n=s$. Conditioned on $X=0$,
    \begin{align*}
        x = \frac{\abs{\Delta}}{\sqrt{V}} = \frac{\abs{(\frac{X}{s} - p)D}}{\sqrt{p(1-p)} \abs{D}} \geq \frac{\frac{1}{n}}{\sqrt{\frac{1}{n}}} = \frac{1}{\sqrt{n}} >\frac{\epsilon}{2}.
    \end{align*}
    It follows that $\frac{\cost(c_1,P)}{\cost(c_2,P)} > 1+\epsilon$, and therefore, conditioned on $X=0$, $Q$ is not a stable $(0,\epsilon)$-coreset.

    As,
    $$
    \delta \ge \Pr\bigl[X=0\bigr] = (1-p)^{s}  = (1-\frac{1}{n})^n \geq \frac{1}{5}
    $$
    Therefore the sample cannot succeed with probability at least $1-\delta > 4/5 $.

    \paragraph{Case 2: $s \geq 4/\epsilon^2$. } Set $n=s^2\epsilon^2$. Conditioned on $X=1$, 
    \begin{align*}
        x = \frac{\abs{\Delta}}{\sqrt{V}} = \frac{\abs{(\frac{X}{s} - p)D}}{\sqrt{p(1-p)} \abs{D}} \geq \frac{\frac{1}{s}-\frac{1}{n}}{\sqrt{\frac{1}{n}}} = \frac{\sqrt{n}}{s} - \frac{1}{\sqrt{n}} = \epsilon - \frac{1}{s\epsilon} >\frac{\epsilon}{2}.
    \end{align*}
    It follows that $\frac{\cost(c_1,P)}{\cost(c_2,P)} > 1+\epsilon$, and therefore, conditioned on $X=1$, $Q$ is not a stable $(0,\epsilon)$-coreset.

    As,
    $
    \delta \ge \Pr\bigl[X=1\bigr] = sp(1-p)^{s-1} \geq \frac{1}{3s\epsilon^2}.
    $
    We get, $s \geq \frac{1}{3\epsilon^2\delta}$.
\end{proof}

\section{Upper bounds for \texorpdfstring{$k>1$}{k>1}}\label{sec:kmean}

In this section we establish our coreset guarantees for $k$-means. We begin in \cref{sec:stage} by introducing the structural assumptions of balancedness and separation, and derive several geometric consequences that are used throughout the analysis.

In \cref{sec:matching}, we parametrize candidate solutions by their \emph{displacement} from the optimal centers. This yields a master identity that splits the difference between the costs on $P$ and on $Q$ into a matched-cost error and a boundary-gain error, reducing the analysis to bounding these two terms separately.

In \cref{sec:event} we define a high-probability event $\event$ on which the sampled cluster weights and sample-mean deviations concentrate, and show that on this event every candidate solution of interest is \emph{localized}: its displacement lies in a fixed set $\Vset$, determined by the input alone. Localization is used only through this deterministic inclusion; all subsequent probabilistic statements are made uniformly over $\Vset$ and require no conditioning.

We then analyze the two error terms in \cref{sec:matched,sec:boundary}. We first bound the matched-cost error, and then establish a uniform bound on the boundary-gain error, which forms the core technical component of the analysis.

Finally, the analysis branches according to the desired guarantee. In \cref{sec:stable}, we combine the matched-cost and boundary-gain bounds with a one-sided estimate on the sampled optimal cost to obtain an RCDA guarantee, which is then converted into a stable coreset. In \cref{sec:strong}, we instead assume bounded kurtosis to strengthen the boundary-gain bound, leading to the strong coreset guarantee.

\subsection{Structural assumptions}\label{sec:stage}

We start by defining the balancedness property, which is inherent to the uniform
sampling approach.

\begin{definition}\label{def:balanced-data-k}
The dataset $P$ is \emph{$\beta$-balanced} for the $k$-means problem if there exists
an optimal $k$-center set $C^*=\{\mu_1,\dots,\mu_k\}$ whose induced partition
$\{X_1^*,\ldots,X_k^*\}$ satisfies $|X_i^*| \geq \beta n/k$ for every $i \in [k]$.
\end{definition}

Fix an optimal $C^*=\{\mu_1,\dots,\mu_k\}$ with induced partition $\{X_1^*,\ldots,X_k^*\}$ and set $w_i\eqdef|X_i^*|/n\ge\beta/k$,
$V_i\eqdef\frac1{|X_i^*|}\sum_{p\in X_i^*}\norm{p-\mu_i}^2$, and $\OPT= \OPT (P) = \sum_i w_iV_i$.
Since $\mu_i$ is the centroid of $X_i^*$, Claim~\ref{cl:kh} gives, for any $c$,
\begin{equation}\label{eq:kh-k}
  \tfrac1{|X_i^*|}\textstyle\sum_{p\in X_i^*}\norm{p-c}^2=\norm{c-\mu_i}^2+V_i .
\end{equation}
For $p\in P$, let $i(p)$ denote the index with $p\in X_{i(p)}^{*}$ and
$a(p)=\norm{p-\mu_{i(p)}}$. For convenience $\dist(p,C)=\min_{1\le j\le k}\norm{p-c_j}$.

The separation condition, which originated in~\citep{rabani}, requires that
the optimal cost drop substantially when passing from $k-1$ to $k$ centers.

\begin{definition}[$\rho$-separated,~\citep{rabani}]\label{def:orss}
    $P \subseteq \Xspace$ is $\rho$-separated for $k$-means if $\OPT_k \leq \rho^2 \OPT_{k-1}$,
    where $\OPT_k =\OPT$ and $\OPT_{k-1}$ is the optimal $(k-1)$-means cost.
\end{definition}

The following lemma allows us to bound the distance between
optimal centers via the $\rho$-separated guarantee (similar ideas appeared in~\citep{rabani}).

\begin{lemma}\label{lemma:rabani}
If $\rho^2\le1/2$ then $D\eqdef\min_{i\ne j}\norm{\mu_i-\mu_j}$ satisfies
$D^2\ge\OPT/\rho^2$.
\end{lemma}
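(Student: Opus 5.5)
The plan is to exhibit an explicit $(k-1)$-center solution whose cost is controlled by $D$, and then compare it with $\OPT_{k-1}\ge \OPT/\rho^2$. Let $i\ne j$ attain $D=\norm{\mu_i-\mu_j}$. Consider the $(k-1)$-center set $C'$ obtained from $C^*$ by deleting $\mu_j$, and assign every point of $X_j^*$ to $\mu_i$, keeping all other points with their original centers. Every $(k-1)$-center set's cost is at most the cost of any assignment, so
\[
\OPT_{k-1}\le \cost(C',P)\le \sum_{l\ne j} w_l V_l + \frac{1}{n}\sum_{p\in X_j^*}\norm{p-\mu_i}^2 .
\]

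By the K\"onig--Huygens identity \eqref{eq:kh-k} applied to cluster $X_j^*$ with $c=\mu_i$, we have $\frac{1}{|X_j^*|}\sum_{p\in X_j^*}\norm{p-\mu_i}^2=\norm{\mu_i-\mu_j}^2+V_j=D^2+V_j$. Hence
\[
\OPT_{k-1}\le \sum_l w_lV_l + w_j D^2 = \OPT + w_j D^2 \le \OPT + D^2,
\]
using $w_j\le 1$. A slightly better option is to move both centers to their weighted midpoint, which gives $\frac{w_iw_j}{w_i+w_j}D^2$, but the crude version suffices.

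Combining this with $\rho$-separation gives $\OPT\le\rho^2\OPT_{k-1}\le \rho^2(\OPT+D^2)$. Rearranging yields $(1-\rho^2)\OPT\le\rho^2D^2$, that is, $D^2\ge \frac{1-\rho^2}{\rho^2}\OPT$.

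This is where the hypothesis $\rho^2\le 1/2$ enters, and it is the only delicate point. We obtain $1-\rho^2\ge 1/2$, hence $D^2\ge \OPT/(2\rho^2)$. This is off by a factor $2$ from the stated bound, so the crude estimate $w_j\le1$ must be sharpened.

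To sharpen it, use the merge at the weighted mean $m=(w_i\mu_i+w_j\mu_j)/(w_i+w_j)$, which costs $\frac{w_iw_j}{w_i+w_j}D^2\le \frac12 D^2$ since $w_i+w_j\le 1$ (indeed $\frac{w_iw_j}{w_i+w_j}\le \frac{w_i+w_j}{4}\le\frac14$). This gives $\OPT_{k-1}\le \OPT+D^2/4$, so $D^2\ge 4\frac{1-\rho^2}{\rho^2}\OPT\ge 2\OPT/\rho^2\ge \OPT/\rho^2$ when $\rho^2\le1/2$. I would present this midpoint version.

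The only genuine check is applying \eqref{eq:kh-k} to the merged cluster $X_i^*\cup X_j^*$ with center $m$. Its cost is $|X_i^*|(\norm{m-\mu_i}^2+V_i)+|X_j^*|(\norm{m-\mu_j}^2+V_j)$, and the bias terms sum to $n\frac{w_iw_j}{w_i+w_j}D^2$, which is a routine computation.
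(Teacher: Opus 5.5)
Your midpoint-merge argument is correct and is essentially the paper's proof: you merge $X_i^*$ and $X_j^*$ at their weighted centroid, use the K\"onig--Huygens identity to get $\OPT_{k-1}\le\OPT+\frac{w_iw_j}{w_i+w_j}D^2$, and compare with $\OPT_{k-1}\ge\OPT/\rho^2$. The paper bounds the merge coefficient by $\frac12$, whereas your AM--GM bound of $\frac14$ yields the slightly stronger $D^2\ge 2\OPT/\rho^2$; in a write-up you can drop the crude detour, which would in any case have sufficed had you deleted the center of the smaller of the two clusters, so that $w_j\le\frac12$.
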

\begin{proof}
Merging $X_i^*,X_j^*$ and keeping the other clusters changes the normalized cost by
$\frac{w_iw_j}{w_i+w_j}\norm{\mu_i-\mu_j}^2$, so by Claim~\ref{cl:kh} applied to the
two optimal centers,
$\OPT_{k-1}\le\OPT+\frac{w_iw_j}{w_i+w_j}\norm{\mu_i-\mu_j}^2$. Since $\OPT_{k-1}\ge\OPT/\rho^2$,
$\OPT(\rho^{-2}-1)\le\frac{w_iw_j}{w_i+w_j}\norm{\mu_i-\mu_j}^2\le\frac12\norm{\mu_i-\mu_j}^2$.
With $\rho^2\le1/2$, $2(\rho^{-2}-1)\ge\rho^{-2}$, giving $\norm{\mu_i-\mu_j}^2\ge\OPT/\rho^2$.
\end{proof}

Let $F=\{p\in P:a(p)\ge D/4\}$ denote the set of \emph{far} points and
$T=\frac1n\sum_{p\in P}a(p)^2\indic{p\in F}$ their cost. Let $Q$ be a uniform
i.i.d.\ sample of size $s$, drawn with replacement and regarded as a multiset;
set $Q_i\eqdef Q\cap X_i^*$, $s_i\eqdef|Q_i|$,
$\hat w_i\eqdef s_i/s$, $\smean_i\eqdef\frac1{s_i}\sum_{q\in Q_i}q$,
$\Delta_i\eqdef\smean_i-\mu_i$,
$\hat V_i\eqdef\frac1{s_i}\sum_{q\in Q_i}\norm{q-\smean_i}^2$, and
$\OPTQ\eqdef\sum_i\hat w_i\hat V_i$. Applying \eqref{eq:kh-k} within each $Q_i$,
\begin{equation}\label{eq:optq-split}
  \cost(C^*,Q)=\tfrac1s\textstyle\sum_{q\in Q}a(q)^2
  =\OPTQ+\sum_i\hat w_i\norm{\Delta_i}^2\ \ge\ \OPTQ .
\end{equation}

Note that $\OPTQ$ is the cost on $Q$ of the fixed partition $\{Q_i\}$ when each $Q_i$
is represented by its empirical centroid $\smean_i$; hence it upper-bounds the
optimal sample cost, $\OPT(Q)=\min_{|C|=k}\cost(C,Q)\le\OPTQ$.

\subsection{Displacements and the master identity}\label{sec:matching}

Our analysis parametrizes candidate solutions by their offset from the optimal
centers. For a \emph{displacement} $v=(v_1,\dots,v_k)\in\Xspace^k$ write
$C(v)\eqdef\{\mu_1+v_1,\dots,\mu_k+v_k\}$, and define the pointwise
\emph{boundary gain}
\begin{equation}\label{eq:bv}
  b_v(p)\eqdef\norm{p-\mu_{i(p)}-v_{i(p)}}^2-\min_{1\le j\le k}\norm{p-\mu_j-v_j}^2\ \ge 0 ,
\end{equation}
which measures how much a point gains by being served by its nearest center in
$C(v)$ rather than by the center displaced from its own optimal center. Set
$B_v(P)\eqdef\frac1n\sum_{p\in P}b_v(p)$ and $B_v(Q)\eqdef\frac1s\sum_{q\in Q}b_v(q)$.
The \emph{matched costs} are obtained by assigning every point to the center
displaced from its optimal center,
\[
  \costm(C(v),P)\eqdef\tfrac1n\textstyle\sum_p\norm{p-\mu_{i(p)}-v_{i(p)}}^2 ,
  \qquad
  \costm(C(v),Q)\eqdef\tfrac1s\textstyle\sum_q\norm{q-\mu_{i(q)}-v_{i(q)}}^2 ,
\]
so that $B_v(P)=\costm(C(v),P)-\cost(C(v),P)$ and
$B_v(Q)=\costm(C(v),Q)-\cost(C(v),Q)$.

Applying \eqref{eq:kh-k} within each $X_i^*$, and likewise within each $Q_i$ (whose
centroid is $\smean_i=\mu_i+\Delta_i$), the matched costs evaluate in closed form,
and we obtain
\begin{equation}\label{eq:excess}
\begin{aligned}
  \cost(C(v),P)&=\OPT+\textstyle\sum_i w_i\norm{v_i}^2-B_v(P),\\
  \cost(C(v),Q)&=\OPTQ+\textstyle\sum_i \hat w_i\norm{v_i-\Delta_i}^2-B_v(Q).
\end{aligned}
\end{equation}
Subtracting the two lines of \eqref{eq:excess} and expanding
$\norm{v_i-\Delta_i}^2$ gives the master identity: for every $v\in\Xspace^k$,
\begin{equation}\label{eq:master}
  \cost(C(v),P)-\cost(C(v),Q)
  =\bigl[\costm(C(v),P)-\costm(C(v),Q)\bigr]+\bigl[B_v(Q)-B_v(P)\bigr],
\end{equation}
\begin{equation}\label{eq:matched-expand}
\begin{aligned}
  & \costm(C(v),P)-\costm(C(v),Q)
  = \\
  & (\OPT-\OPTQ)+\sum_i(w_i-\hat w_i)\norm{v_i}^2
   +2\sum_i\hat w_i\ip{v_i}{\Delta_i}-\sum_i\hat w_i\norm{\Delta_i}^2 .
\end{aligned}
\end{equation}
The first bracket in \eqref{eq:master} is the \emph{matched-cost error} and the
second is the \emph{boundary-gain error}, arising from points whose nearest center
changes under sampling.

Using \eqref{eq:optq-split} to write
$\cost(C^*,Q)-\OPTQ=\sum_i\hat w_i\norm{\Delta_i}^2$, the last term of
\eqref{eq:matched-expand} cancels against the $\OPTQ$ term, and the RCDA error
of $C(v)$ becomes
\begin{equation}\label{eq:rcda-error}
\begin{aligned}
  E(v)&\eqdef\bigl(\cost(C(v),P)-\OPT\bigr)-\bigl(\cost(C(v),Q)-\cost(C^*,Q)\bigr)\\
  &=\sum_i(w_i-\hat w_i)\norm{v_i}^2+2\sum_i\hat w_i\ip{v_i}{\Delta_i}
   +\bigl[B_v(Q)-B_v(P)\bigr].
\end{aligned}
\end{equation}

\subsection{A good event and localization}\label{sec:event}

Throughout, $\epsilon\in(0,1/5)$ and $\delta\in(0,1)$.

\begin{claim}\label{claim:sample}
Let $\event$ denote the event that
\begin{enumerate}[label=\textup{(E\arabic*)},nosep]
    \item \label{item:E1} $|\hat w_i - w_i| \le \epsilon w_i$ for all $i\in[k]$,
    \item \label{item:E2} $\norm{\Delta_i}^2 \le \epsilon^2 V_i$ for all $i\in[k]$.
\end{enumerate}
For a uniform sample $Q$ of (sufficiently large) size $s=O\!\big(\tfrac{k^2}{\beta\epsilon^2\delta}\big)$ it holds that $\Pr[\event]\ge1-\delta/3$.
\end{claim}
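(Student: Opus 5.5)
The plan is to bound the failure probability of each of the $2k$ conditions separately and combine them with a union bound, allotting each one failure probability $\delta/(6k)$. I would use Chebyshev's inequality for (E1) and a conditional version of \cref{cl:concentration_delta} for (E2).

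\textbf{Step 1 (weights).} For a fixed $i$, $s_i\sim\mathrm{Bin}(s,w_i)$, so $\hat w_i$ has mean $w_i$ and variance $w_i(1-w_i)/s\le w_i/s$. Chebyshev gives
\[
\Pr\bigl[|\hat w_i-w_i|>\epsilon w_i\bigr]\le \frac{w_i}{s\epsilon^2w_i^2}=\frac{1}{s\epsilon^2 w_i}\le\frac{k}{s\epsilon^2\beta},
\]
using $w_i\ge\beta/k$ from $\beta$-balancedness. A union bound over $i\in[k]$ makes the total failure probability of (E1) at most $k^2/(s\epsilon^2\beta)$. This is at most $\delta/6$ once $s\ge 6k^2/(\beta\epsilon^2\delta)$, which explains the $k^2$ in the bound.

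\textbf{Step 2 (means).} Condition on $s_i=m$ with $m\ge1$. Given this, $Q_i$ is an i.i.d.\ uniform sample of size $m$ from $X_i^*$. The computation in \cref{cl:concentration_delta} then gives $\E[\norm{\Delta_i}^2\mid s_i=m]=V_i/m$, and Markov yields
\[
\Pr\bigl[\norm{\Delta_i}^2>\epsilon^2V_i\mid s_i=m\bigr]\le \frac{1}{m\epsilon^2}.
\]
On (E1) we have $s_i\ge(1-\epsilon)w_i s\ge s\beta/(2k)$, so the conditional failure probability is at most $2k/(s\beta\epsilon^2)$. Summing over $i$ gives $2k^2/(s\beta\epsilon^2)$.

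\textbf{Step 3 (combining).} To avoid conditioning subtleties, bound $\Pr[\neg\event]\le\Pr[\neg(E1)]+\Pr[(E1)\wedge\neg(E2)]$. Write the second term as a sum over values of the vector $(s_1,\dots,s_k)$ consistent with (E1), and apply the conditional bound from Step 2 inside each term. Both pieces are $O(k^2/(s\beta\epsilon^2))$, so a sufficiently large $s=O(k^2/(\beta\epsilon^2\delta))$ gives $\Pr[\event]\ge1-\delta/3$.

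The main obstacle is the conditioning in Step 2. The justification is that, given $s_i$, the points landing in $X_i^*$ are i.i.d.\ uniform on $X_i^*$. This holds because each draw is uniform on $P$, and conditioned on falling in $X_i^*$ it is uniform on $X_i^*$, independently across draws. The case $s_i=0$, where $\smean_i$ is undefined, is excluded by (E1) since $\epsilon<1$.
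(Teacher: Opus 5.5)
Your proof is correct and follows the paper's argument. You use the same split $\Pr[\lnot\event]\le\Pr[\lnot\text{(E1)}]+\Pr[\text{(E1)}\wedge\lnot\text{(E2)}]$, the same conditional Markov bound $1/(s_i\epsilon^2)$ (using that $Q_i$ is i.i.d.\ uniform on $X_i^*$ given the counts), and a union bound over clusters. The only difference is that you use Chebyshev instead of the paper's multiplicative Chernoff bound for (E1). That costs $O(k^2/(\beta\epsilon^2\delta))$ samples rather than $O(k\log(k/\delta)/(\beta\epsilon^2))$, which is still within the claimed budget. Note, though, that the $k^2$ is really forced by the Markov step for (E2), not by (E1).
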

\begin{proof}
We bound $\Pr[\lnot\ref{item:E1}]\le\delta/6$ and
$\Pr[\ref{item:E1}\wedge\lnot\ref{item:E2}]\le\delta/6$.
For the former, $s_i\sim\mathrm{Bin}(s,w_i)$ with $w_i\ge\beta/k$, so multiplicative
Chernoff and a union bound give
$\le2k\exp(-\epsilon^2 s\beta/(3k))\le\delta/6$ once
$s\ge\frac{3k}{\beta\epsilon^2}\ln(12k/\delta)$.
For the latter, on \ref{item:E1} we have $s_i\ge s(1-\epsilon)\beta/k$; conditionally
on $s_1,\dots,s_k$ the set $Q_i$ is an i.i.d.\ uniform sample of size $s_i$ from
$X_i^*$, so $\E[\norm{\Delta_i}^2\mid s_i]=V_i/s_i$ and Markov gives per-cluster
failure $\le 1/(s_i\epsilon^2)$. A union bound gives
$\le k/(s_i\epsilon^2)\le k^2/(s(1-\epsilon)\beta\epsilon^2)\le\delta/6$.
\end{proof}

Fix a constant $\gamma \geq 1$ and set
$\sol(Q)=\{C:\cost(C,Q)\le\gamma \OPTQ \}$, $\sol(P)=\{C:\cost(C,P)\le\gamma\OPT\}$.
Since $\OPT (Q) \leq \OPTQ$, we have $\{C:\cost(C,Q)\le\gamma \OPT(Q)\} \subseteq \sol (Q)$, so proving a guarantee uniformly over $\sol(Q)$ yields it over the $\OPT(Q)$-class as well.

The next lemma shows that every candidate of interest is pinned near the optimal
centers. It is stated for a generic a-priori bound $\cost(C^*,Q)\le\lambda\OPT$,
where $\lambda\ge1$ will be determined later as it depends on the exact application;
note that it implies $\OPTQ\le\lambda\OPT$ by \eqref{eq:optq-split}. Throughout we write
\begin{equation}\label{eq:Gam-def}
  \Gams^2\eqdef\frac{6\gamma\lambda k}{\beta}\,\OPT ,
  \qquad
  K\eqdef 11\gamma\lambda\OPT ,
\end{equation}
\begin{equation}\label{eq:Vset}
  \Vset\eqdef\Bigl\{v=(v_1,\dots,v_k)\in\Xspace^k: \forall i\in[k] \
    \norm{v_i}\le\Gams \textit{, and }
    \textstyle\sum_i w_i\norm{v_i}^2\le K\Bigr\}.
\end{equation}
We stress that $\Gams$, $K$, and hence the set $\Vset$, are determined by
$P,k,\beta,\gamma$ and $\lambda$ alone, and in particular do not depend on the sample $Q$. 

We start with a geometric claim bounding the boundary gain.

\begin{claim}\label{claim:farpoints}
Assume $\Gams\le D/4$. Then for every $v$ with $\norm{v_i}\le\Gams$ for all $i$, and
every $p\in P$, if $b_v(p)>0$ then $p\in F$, and
$0\le b_v(p)\le 4\Gams\,a(p)\le 4a(p)^2 .$
\end{claim}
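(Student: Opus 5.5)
Fix $v$ with $\norm{v_i}\le\Gams$ for all $i$ and a point $p$; write $i=i(p)$ and let $j$ be a minimizer of $\norm{p-\mu_j-v_j}$. The plan has three steps: show that only far points can gain, bound the gain itself, and derive the last inequality. Nonnegativity of $b_v(p)$ is immediate from \eqref{eq:bv}, since the minimum over $j$ includes $j=i$.

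\emph{Only far points gain.} If $b_v(p)>0$, then $j\ne i$ and $\norm{p-\mu_j-v_j}<\norm{p-\mu_i-v_i}$. We argue by contradiction, assuming $a(p)<D/4$.
\begin{itemize}
\item By the triangle inequality, $\norm{p-\mu_i-v_i}\le a(p)+\Gams< D/4+D/4=D/2$.
\item Also $\norm{p-\mu_j-v_j}\ge\norm{\mu_i-\mu_j}-a(p)-\Gams> D-D/4-D/4=D/2$, using $\norm{\mu_i-\mu_j}\ge D$ from the definition of $D$ in \cref{lemma:rabani}.
\end{itemize}
These two bounds contradict the choice of $j$, so $a(p)\ge D/4$, that is, $p\in F$.

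\emph{Bounding the gain.} Let $c=\mu_j+v_j$ be the nearest candidate center. Then
\[
b_v(p)=\norm{p-\mu_i-v_i}^2-\norm{p-c}^2 .
\]
The intended route is to compare both terms with $a(p)$. The first term is at most $(a(p)+\Gams)^2$ by the triangle inequality. For the second, I would use that $p$ originally prefers $\mu_i$, i.e.\ $\norm{p-\mu_j}\ge a(p)$, since $p$ lies in the Voronoi cell of $\mu_i$ in $C^*$. It follows that $\norm{p-c}\ge a(p)-\Gams$. Combining the two bounds,
\[
b_v(p)\le (a(p)+\Gams)^2-(a(p)-\Gams)_+^2 .
\]
When $a(p)\ge\Gams$, this equals $4\Gams\,a(p)$. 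When $a(p)<\Gams$, it is at most $(a(p)+\Gams)^2$, which could exceed $4\Gams a(p)$. That case does not arise, however, because $p\in F$ gives $a(p)\ge D/4\ge\Gams$. This yields $b_v(p)\le 4\Gams a(p)$. The last inequality, $4\Gams a(p)\le 4a(p)^2$, again follows from $\Gams\le D/4\le a(p)$ for far points.

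The main point needing care is the claim $\norm{p-\mu_j}\ge\norm{p-\mu_i}$. It relies on the induced partition of $C^*$ being the Voronoi partition with ties broken by label, which the preliminaries fix. Everything else is triangle-inequality bookkeeping together with the assumption $\Gams\le D/4$.
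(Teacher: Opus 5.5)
Your proof is correct and matches the paper's argument essentially step for step. The far-point step is the contrapositive of the paper's direct chain $\norm{p-\mu_j}-\Gams\le\norm{p-\mu_j-v_j}\le\norm{p-\mu_i-v_i}\le a(p)+\Gams$, and the gain bound uses the same fact $\min_{l\ne i}\norm{p-\mu_l}\ge a(p)$ (from $p\in X_i^*$) to obtain $(a+\Gams)^2-(a-\Gams)^2=4\Gams a$; as in the paper, $4\Gams a(p)\le 4a(p)^2$ is only established when $b_v(p)>0$, and when $b_v(p)=0$ the bounds $b_v(p)\le 4\Gams a(p)$ and $b_v(p)\le 4a(p)^2$ hold trivially, which is all that is used later.
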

\begin{proof}
Let $i=i(p)$ and let $j$ attain $\min_l\norm{p-\mu_l-v_l}$. If $b_v(p)>0$ then
$j\ne i$, and
$\norm{p-\mu_j}-\Gams\le\norm{p-\mu_j-v_j}\le\norm{p-\mu_i-v_i}\le a(p)+\Gams$, so
$\norm{p-\mu_j}\le a(p)+2\Gams$ and $D\le\norm{\mu_i-\mu_j}\le2a(p)+2\Gams$. Hence
$a(p)\ge D/2-\Gams\ge D/4$ (using $\Gams\le D/4$), i.e.\ $p\in F$ and $a(p)\ge\Gams$.
Writing $a=a(p)$ and $a_2=\min_{l\ne i}\norm{p-\mu_l}\ge a$ (as $p\in X_i^*$),
\[
  b_v(p)\le\norm{p-\mu_i-v_i}^2-\norm{p-\mu_j-v_j}^2
  \le(a+\Gams)^2-(a_2-\Gams)^2\le(a+\Gams)^2-(a-\Gams)^2=4\Gams a \leq 4 a^2,
\]
as $a_2 \geq a \geq \Gams$.
\end{proof}

In particular, averaging \cref{claim:farpoints} over $P$ and over $Q$ gives
\begin{equation}\label{eq:B-trivial}
  B_v(P)\le  4\OPT,\qquad B_v(Q)\le 4\cost(C^*,Q)\le 4\lambda\OPT .
\end{equation}

We next prove the localization lemma.

\begin{lemma}\label{lemma:sigma}
Suppose $\cost(C^*,Q)\le\lambda\OPT$ and $\Gams\le D/4$. Then, conditioned on
$\event$, every $C\in\union$ satisfies $C=C(v)$ for some $v\in\Vset$.
\end{lemma}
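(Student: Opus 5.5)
The plan is to build the displacement $v$ explicitly. First I show that every optimal center $\mu_i$ has a center of $C$ within distance $\Gams$. Then I use $\Gams\le D/4$ to turn this into a bijection between $C$ and $C^*$, which gives $C=C(v)$ with $\norm{v_i}\le\Gams$. Finally I use the closed forms \eqref{eq:excess} together with \eqref{eq:B-trivial} to bound $\sum_i w_i\norm{v_i}^2\le K$.

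\emph{Step 1 (each $\mu_i$ is covered).} For any point $p$ in cluster $i$, the triangle inequality gives $\dist(\mu_i,C)\le\dist(p,C)+a(p)$. Hence $\dist(\mu_i,C)^2\le2\dist(p,C)^2+2a(p)^2$.

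If $C\in\sol(P)$, I average over $p\in X_i^*$ and weight by $w_i$. This gives $w_i\dist(\mu_i,C)^2\le2\cost(C,P)+2\OPT\le4\gamma\OPT$.

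If $C\in\sol(Q)$, I average over $q\in Q_i$ instead. Using $\cost(C,Q)\le\gamma\OPTQ\le\gamma\lambda\OPT$ and $\cost(C^*,Q)\le\lambda\OPT$, this gives $\hat w_i\dist(\mu_i,C)^2\le 2(\gamma+1)\lambda\OPT\le4\gamma\lambda\OPT$. Then \ref{item:E1} gives $\hat w_i\ge(1-\epsilon)w_i\ge(1-\epsilon)\beta/k$.

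In both cases, with $\epsilon<1/5$, we get $\dist(\mu_i,C)^2\le \frac{4\gamma\lambda k}{(1-\epsilon)\beta}\OPT\le\Gams^2$. This is the one place where the sample event is really needed: without \ref{item:E1} a cluster could be nearly absent from $Q$, and then $\mu_i$ need not be covered.

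\emph{Step 2 (bijection).} Since $\Gams\le D/4$, the balls of radius $\Gams$ around distinct $\mu_i$ are disjoint. So the centers $c_{\pi(i)}\in C$ chosen within $\Gams$ of $\mu_i$ are pairwise distinct. As $|C|=k$, $\pi$ is a bijection. Setting $v_i\eqdef c_{\pi(i)}-\mu_i$, we get $C=C(v)$ with $\norm{v_i}\le\Gams$ for every $i$.

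\emph{Step 3 (the weighted bound).} Because $\norm{v_i}\le\Gams\le D/4$, \cref{claim:farpoints} and hence \eqref{eq:B-trivial} apply.

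For $C\in\sol(P)$, the first line of \eqref{eq:excess} gives
$\sum_i w_i\norm{v_i}^2=\cost(C,P)-\OPT+B_v(P)\le(\gamma+3)\OPT\le K$.

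For $C\in\sol(Q)$, the second line gives
$\sum_i\hat w_i\norm{v_i-\Delta_i}^2=\cost(C,Q)-\OPTQ+B_v(Q)\le(\gamma+4)\lambda\OPT\le5\gamma\lambda\OPT$.
To pass to $\norm{v_i}^2$, I use the weighted Young inequality $\norm{v_i}^2\le(1+\eta)\norm{v_i-\Delta_i}^2+(1+1/\eta)\norm{\Delta_i}^2$ with a small constant, say $\eta=1/20$. I then need two further bounds:
\begin{itemize}
\item By \ref{item:E2}, $\sum_i\hat w_i\norm{\Delta_i}^2\le\epsilon^2\sum_i\hat w_iV_i\le\epsilon^2(1+\epsilon)\OPT$.
\item By \ref{item:E1}, $w_i\le\hat w_i/(1-\epsilon)$.
\end{itemize}
Combining these, and using $\epsilon<1/5$ and $\gamma\lambda\ge1$, the total is roughly $\frac{1}{1-\epsilon}\bigl(1.05\cdot5\gamma\lambda+21\epsilon^2(1+\epsilon)\bigr)\OPT<11\gamma\lambda\OPT=K$.

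I expect the constant bookkeeping here to be the only delicate point. The naive factor-$2$ triangle inequality overshoots $11$ (it gives about $12.5$), which is why I use the weighted Young inequality instead. With this, $v\in\Vset$ holds in both cases.
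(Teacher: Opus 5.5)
Your proof is correct and follows the paper's structure: cover each $\mu_i$ within $\Gams$, use disjointness of the balls to get a bijection, then bound $\sum_i w_i\norm{v_i}^2$ via \eqref{eq:excess} and \eqref{eq:B-trivial}. Your Step 1 for $\sol(Q)$ is slightly cleaner than the paper's, which goes through $\smean_i$, $\hat V_i$ and \ref{item:E2}; you instead compare each $q\in Q_i$ directly to $\mu_i$ through $\cost(C^*,Q)\le\lambda\OPT$. The weighted Young step is only needed because you dropped the $-\OPTQ$ term: keeping it gives $\sum_i\hat w_i\norm{v_i-\Delta_i}^2\le(\gamma-1)\OPTQ+4\lambda\OPT\le(\gamma+3)\lambda\OPT$, after which the plain factor-$2$ inequality already lands below $K$, as in the paper.
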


\begin{proof}
Fix $i\in[k]$ and let $R_i = \dist(\mu_i, C)$ and $c(p)\in C$ be a nearest center to $p$. First we show that for every $C \in \union$, for all $i \in [k]$, $R_i \leq \Gams$.
For every $p\in X_i^*$ we have, 
\[
  R_i^2\le\norm{\mu_i-c(p)}^2\le 2\norm{p-\mu_i}^2+2\norm{p-c(p)}^2
  =2a(p)^2+2\dist^2(p,C).
\]
Averaging over $p\in X_i^*$ and using
$\frac1{|X_i^*|}\sum_{p\in X_i^*}\dist^2(p,C)\le\frac1{w_i}\cost(C,P)$ gives
\begin{equation}\label{eq:loc-avg}
  R_i^2\ \le\ 2V_i+\frac{2}{w_i}\cost(C,P).
\end{equation}
If $C\in\sol(P)$, using $V_i\le\OPT/w_i$, $\cost(C,P)\le\gamma\OPT$, $\gamma\ge1$
and $w_i\ge\beta/k$,
\[
  R_i^2\le\frac{2(1+\gamma)\OPT}{w_i}\le\frac{4\gamma k}{\beta}\OPT\le\Gams^2 .
\]
Else, if $C\in\sol(Q)$ the same averaging over $Q_i$, with $\hat w_i,\hat V_i,\smean_i$
in place of $w_i,V_i,\mu_i$, gives
$\widehat R_i^2\le 2\hat V_i+\frac{2}{\hat w_i}\cost(C,Q)$ for
$\widehat R_i = \dist(\hat{\mu}_i, C)$. Using $\hat V_i\le\OPTQ/\hat w_i$,
$\cost(C,Q)\le\gamma\OPTQ$, $\OPTQ\le\lambda\OPT$, and
$\hat w_i\ge(1-\epsilon)\beta/k\ge\frac{4\beta}{5k}$ by \ref{item:E1},
\[
  \widehat R_i^2\le\frac{2(1+\gamma)\lambda\OPT}{\hat w_i}
  \le\frac{5\gamma\lambda k}{\beta}\OPT.
\]

By \ref{item:E2}, $\norm{\Delta_i}\le\epsilon\sqrt{V_i}\le\frac15\sqrt{k\OPT/\beta}$,
so, by the triangle inequality 
$$
R_i^2 \le (\widehat R_i+\norm{\Delta_i})^2 \le(\sqrt{5\gamma\lambda}+\tfrac15)^2 \cdot  \frac{k\OPT}{\beta}
\leq \frac{6\gamma\lambda k}{\beta}\OPT=\Gams^2,$$
where the last inequality follows from $\gamma \lambda \geq 1$.

Second, we show that for every $C \in \union$, $\sum_i w_i\norm{v_i}^2\le K$. As $\Gams\le D/4$, the balls $\{c:\norm{c-\mu_i}\le\Gams\}$, $i\in[k]$, are pairwise
disjoint; each contains at least one center of $C$ by the above, and $|C|=k$, so each
contains exactly one. Letting $\sigma(i)$ be the index of the unique center of $C$
inside the $i$-th ball and $v_i\eqdef c_{\sigma(i)}-\mu_i$ yields $C=C(v)$ with
$\norm{v_i}\le\Gams$ for all $i$.

For $C \in \sol(P)$, $\sum_i w_i\norm{v_i}^2=\cost(C(v),P)-\OPT+B_v(P)\le(\gamma+3)\OPT\le K$, using \eqref{eq:excess} for the first equality and \eqref{eq:B-trivial} for the second inequality.

For $C\in\sol(Q)$, the second line of \eqref{eq:excess} together with
$\cost(C,Q)\le\gamma\OPTQ$ and \eqref{eq:B-trivial} gives
$$\sum_i\hat w_i\norm{v_i-\Delta_i}^2\le(\gamma-1)\OPTQ+4\lambda\OPT\le(\gamma+3)\lambda\OPT.$$
Hence, by $\norm{v_i}^2\le2\norm{v_i-\Delta_i}^2+2\norm{\Delta_i}^2$, \ref{item:E2}
and $\epsilon\le1/5$,
\[
  \textstyle\sum_i\hat w_i\norm{v_i}^2
  \le 2(\gamma+3)\lambda\OPT+2(1+\epsilon)\epsilon^2\OPT
  \le 2(\gamma+3)\lambda\OPT+\tfrac1{10}\OPT ,
\]
and by \ref{item:E1}, $w_i\le\hat w_i/(1-\epsilon)\le\frac54\hat w_i$, so
$\sum_i w_i\norm{v_i}^2\le\frac52(\gamma+3)\lambda\OPT+\frac18\OPT\le 11\gamma\lambda\OPT=K$,
using $\gamma+3\le4\gamma$.
\end{proof}

\subsection{Matched-cost error}\label{sec:matched}

We next bound the terms of \eqref{eq:matched-expand}, uniformly over $\Vset$.

\begin{lemma}\label{lem:disp}
Assume $\Gams\le D/4$ and condition on $\event$. Then for every $v\in\Vset$:
\begin{enumerate}[label=\textup{(\roman*)},nosep]
  \item $\sum_i w_i\norm{v_i}^2\le4\cost(C(v),P)$, and, $\bigl|\sum_i(\hat w_i-w_i)\norm{v_i}^2\bigr|\le4\epsilon\cost(C(v),P)$,
  \item $\bigl|2\sum_i\hat w_i\ip{v_i}{\Delta_i}\bigr|\le5(1+\epsilon)\epsilon\cost(C(v),P)$,
  \item $\bigl|\sum_i\hat w_i\norm{\Delta_i}^2\bigr|\le(1+\epsilon)\epsilon^2\OPT$.
\end{enumerate}
\end{lemma}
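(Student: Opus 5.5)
The whole lemma hinges on a lower bound of $\cost(C(v),P)$ in terms of $\OPT$ and $\sum_i w_i\norm{v_i}^2$, which I will extract from the first line of \eqref{eq:excess} together with the boundary-gain estimate of \cref{claim:farpoints}. Items (ii) and (iii) will then reduce to Cauchy--Schwarz and the event $\event$.

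\textbf{Step 1 (the key inequality (i)).} By \eqref{eq:excess}, $\sum_i w_i\norm{v_i}^2=\cost(C(v),P)-\OPT+B_v(P)$. The crude estimate \eqref{eq:B-trivial}, $B_v(P)\le 4\OPT$, is too weak here, because it leaves an additive $3\OPT$ that cannot be absorbed. Instead I will use the sharper pointwise bound $b_v(p)\le 4\Gams a(p)$ from \cref{claim:farpoints}, which holds since $v\in\Vset$ gives $\norm{v_i}\le\Gams\le D/4$. The term $b_v(p)$ is nonzero only when the nearest displaced center differs from the matched one.

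A cleaner route is to compare pointwise. For $p$ with $b_v(p)>0$, the proof of \cref{claim:farpoints} shows $\dist(p,C(v))\ge a_2-\Gams\ge a(p)-\Gams$. Combined with $a(p)\ge D/4\ge\Gams$, and ideally with $a(p)\ge D/2-\Gams\ge 2\Gams$ when $\Gams\le D/8$, this gives $\dist(p,C(v))^2\ge a(p)^2/4$ on such points. Meanwhile $b_v(p)\le 4\Gams a(p)$ is comparable to $a(p)^2$, so on those points
\[
B_v(P)\lesssim \cost(C(v),P)\text{-contribution}.
\]
I expect the constants to be arranged so that $B_v(P)\le 3\cost(C(v),P)$ up to adjusting $\OPT$. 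Since $\OPT\le\cost(C(v),P)$ (as $C^*$ is optimal), this yields
\[
\sum_i w_i\norm{v_i}^2\le\cost(C(v),P)+B_v(P)\le 4\cost(C(v),P).
\]
If the pointwise comparison proves delicate, the fallback is to use $\OPT\le\cost(C(v),P)$ and $B_v(P)\le 4\OPT$, which gives the bound with constant $4$ if one checks that $-\OPT+B_v(P)\le 3\OPT\le 3\cost$. Indeed $\cost-\OPT+4\OPT\le 4\cost$, which works directly. So the fallback already suffices, and the first part of (i) follows.

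The second part of (i) then follows from \ref{item:E1}:
\[
\Bigl|\sum_i(\hat w_i-w_i)\norm{v_i}^2\Bigr|\le\epsilon\sum_i w_i\norm{v_i}^2\le 4\epsilon\cost(C(v),P).
\]

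\textbf{Step 2 (item (ii)).} I will apply Cauchy--Schwarz, then \ref{item:E2}, then \ref{item:E1}:
\[
\Bigl|2\sum_i\hat w_i\ip{v_i}{\Delta_i}\Bigr|\le 2\sum_i\hat w_i\norm{v_i}\,\epsilon\sqrt{V_i}\le 2(1+\epsilon)\epsilon\sum_i w_i\norm{v_i}\sqrt{V_i}.
\]
Next, Cauchy--Schwarz over the weights $w_i$ gives
\[
\sum_i w_i\norm{v_i}\sqrt{V_i}\le\sqrt{\textstyle\sum_i w_i\norm{v_i}^2}\,\sqrt{\OPT}\le\sqrt{4\cost(C(v),P)\cdot\cost(C(v),P)},
\]
using item (i) and $\OPT\le\cost(C(v),P)$. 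This yields the bound $4(1+\epsilon)\epsilon\cost(C(v),P)$, which is within the claimed $5(1+\epsilon)\epsilon\cost(C(v),P)$.

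\textbf{Step 3 (item (iii)).} Directly from \ref{item:E2} and \ref{item:E1},
\[
\sum_i\hat w_i\norm{\Delta_i}^2\le(1+\epsilon)\sum_i w_i\epsilon^2V_i=(1+\epsilon)\epsilon^2\OPT.
\]

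The only step that needed care is (i), and specifically the choice to absorb $B_v(P)\le 4\OPT$ via $\OPT\le\cost(C(v),P)$ rather than attempting the finer pointwise comparison.
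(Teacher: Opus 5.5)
Your final argument is correct and is essentially the paper's. Item (i) follows from \eqref{eq:excess}, the bound $B_v(P)\le 4\OPT$ of \eqref{eq:B-trivial} and $\OPT\le\cost(C(v),P)$, exactly as in your fallback, and (iii) is immediate from \ref{item:E1}--\ref{item:E2}. For (ii) you use a second, weighted Cauchy--Schwarz where the paper uses Young's inequality $2ab\le\epsilon a^2+b^2/\epsilon$, which gives you the slightly sharper constant $4(1+\epsilon)\epsilon$.

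You should delete the opening assertion that the additive $3\OPT$ ``cannot be absorbed'', since your own fallback refutes it. Also drop the unfinished pointwise comparison: it is not needed, and it invokes $\Gams\le D/8$, which is not assumed.
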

\begin{proof}
(i): by \eqref{eq:excess} and \eqref{eq:B-trivial},
$\sum_i w_i\norm{v_i}^2=\cost(C(v),P)+B_v(P)-\OPT\le\cost(C(v),P)+3\OPT\le4\cost(C(v),P)$,
using $\OPT\le\cost(C(v),P)$. Then, by \ref{item:E1},
$\bigl|\sum_i(\hat w_i-w_i)\norm{v_i}^2\bigr|\le\epsilon\sum_i w_i\norm{v_i}^2\le4\epsilon\cost(C(v),P)$.
(ii): by Cauchy--Schwarz and Young inequalities,
$\bigl|2\sum_i\hat w_i\ip{v_i}{\Delta_i}\bigr| \le\sum_i\hat w_i(\epsilon\norm{v_i}^2+\epsilon^{-1}\norm{\Delta_i}^2)$. By
\ref{item:E1}, the first part is bounded by $\epsilon(1+\epsilon)\sum_i w_i\norm{v_i}^2\le4\epsilon(1+\epsilon)\cost(C(v),P)$. By \ref{item:E1},\ref{item:E2}, the second part is at most
$\epsilon^{-1}(1+\epsilon)\sum_i w_i\epsilon^2V_i=(1+\epsilon)\epsilon\OPT
\le(1+\epsilon)\epsilon\cost(C(v),P)$, so their sum is at most $5(1+\epsilon)\epsilon\cost(C(v),P)$.
(iii): by \ref{item:E1},\ref{item:E2},
$\bigl|\sum_i\hat w_i\norm{\Delta_i}^2\bigr|\le(1+\epsilon)\sum_i w_i\epsilon^2V_i=(1+\epsilon)\epsilon^2\OPT$.
\end{proof}

\subsection{Boundary-gain error}\label{sec:boundary}

We now bound $\abs{B_v(Q)-B_v(P)}$ uniformly over $v\in\Vset$, which is the main
technical difficulty of the proof. The obstacle is that $b_v(p)$ is nonlinear in the
displacements $v$: it is a maximum of functions that are affine in $v$ except for a
quadratic term $\norm{v_i}^2-\norm{v_j}^2$. The proof handles the two features
separately. Maurer's vector contraction inequality removes the maximum, and a
one-coordinate lifting turns the quadratic term into a linear one, after which the
whole process is a single inner product in a Hilbert space and Cauchy--Schwarz
finishes the argument.

\begin{theorem}[Maurer's vector contraction inequality {\citep[Theorem 2]{DBLP:conf/alt/Maurer16}}]
\label{thm:contraction}
Let $\mathcal S$ be a countable set and let $\psi_t:\mathcal S\to\mathbb R$,
$\phi_t:\mathcal S\to\ell_2$ $(1\le t\le m)$ satisfy
$\psi_t(\sigma)-\psi_t(\sigma')\le\norm{\phi_t(\sigma)-\phi_t(\sigma')}$ for all
$\sigma,\sigma'\in\mathcal S$. Let $\xi_t$ and $\eta_{t,j}$ be independent Rademacher
variables. Then
\[
  \E_\xi\sup_{\sigma\in\mathcal S}\sum_t\xi_t\,\psi_t(\sigma)
  \ \le\ \sqrt2\,\E_\eta\sup_{\sigma\in\mathcal S}\sum_t\sum_j\eta_{t,j}\,\phi_t(\sigma)_j ,
\]
where $\phi_t(\sigma)_j$ denotes the $j$-th coordinate of $\phi_t(\sigma)$.
\end{theorem}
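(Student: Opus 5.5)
The plan is the standard one-coordinate-at-a-time replacement argument, run by induction on $t$. Fix $t$ and condition on all the other Rademacher variables. Let $f(\sigma)$ collect every term of the process except the $t$-th. For the coordinates already processed, these terms have the form $\sqrt2\sum_j \eta_{t',j}\phi_{t'}(\sigma)_j$; for the remaining ones, they are $\xi_{t'}\psi_{t'}(\sigma)$. The core step is the single-coordinate inequality
\[
  \E_{\xi_t}\sup_{\sigma}\bigl[f(\sigma)+\xi_t\psi_t(\sigma)\bigr]
  \;\le\;
  \E_{\eta_t}\sup_{\sigma}\Bigl[f(\sigma)+\sqrt2\sum_j\eta_{t,j}\phi_t(\sigma)_j\Bigr],
\]
valid for an arbitrary function $f$. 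Applying it successively for $t=1,\dots,m$, and using Fubini together with the independence of the variables at each step, yields the theorem with the factor $\sqrt2$ pulled outside. I would first assume $\mathcal S$ finite and $\phi_t$ taking values in $\RR^N$, so that every supremum is a finite maximum. The countable case then follows by monotone convergence over an exhausting sequence of finite subsets. The $\ell_2$ case follows by truncating to the first $N$ coordinates and letting $N\to\infty$: the truncated norms increase to the full norm, and the right-hand side converges by $L^2$ convergence of the Rademacher series.

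For the core step I would write the left-hand side as an average over the two signs and merge the two suprema:
\[
  \tfrac12\sup_{\sigma,\sigma'}\bigl[f(\sigma)+f(\sigma')+\psi_t(\sigma)-\psi_t(\sigma')\bigr]
  \le \tfrac12\sup_{\sigma,\sigma'}\bigl[f(\sigma)+f(\sigma')+\norm{\phi_t(\sigma)-\phi_t(\sigma')}\bigr],
\]
where the inequality is the hypothesis. Next, I would bound the norm by a Khintchine-type inequality with the sharp constant, $\norm{x}\le\sqrt2\,\E_\eta\abs{\sum_j\eta_j x_j}$ (Szarek's optimal constant for $p=1$). Pulling the expectation outside the supremum, which can only increase it, gives
\[
  \tfrac12\,\E_\eta\sup_{\sigma,\sigma'}\Bigl[f(\sigma)+f(\sigma')+\sqrt2\,\Bigl|\sum_j\eta_j\bigl(\phi_t(\sigma)_j-\phi_t(\sigma')_j\bigr)\Bigr|\Bigr].
\]
The absolute value can now be dropped, because the expression is symmetric under swapping $\sigma$ and $\sigma'$. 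The supremum then splits as
\[
  \tfrac12\Bigl(\sup_\sigma\bigl[f(\sigma)+\sqrt2\ip{\eta}{\phi_t(\sigma)}\bigr]+\sup_{\sigma'}\bigl[f(\sigma')-\sqrt2\ip{\eta}{\phi_t(\sigma')}\bigr]\Bigr).
\]
Since $\eta$ and $-\eta$ have the same distribution, both halves have the same expectation, which gives exactly the desired right-hand side.

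The main obstacle is the Khintchine step with constant exactly $\sqrt2$. This is a genuinely nontrivial fact (Szarek's theorem), and I would cite it rather than reprove it. A cruder Khintchine constant would still give the result, but with a worse universal constant, and any constant suffices for how the paper uses the theorem. The remaining points need only routine care: measurability and finiteness of the suprema, handled by the reduction to finite $\mathcal S$ and finite $N$, and keeping track of the conditioning during the induction.
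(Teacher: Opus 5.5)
The paper does not prove this statement; it cites it from Maurer (2016). Your argument is essentially Maurer's own proof and is correct: the one-step lemma (merge the two sign-suprema, apply Szarek's bound $\norm{x}\le\sqrt2\,\E\abs{\sum_j\eta_jx_j}$, drop the absolute value via the $\sigma\leftrightarrow\sigma'$ symmetry, use that $\eta$ and $-\eta$ are equidistributed), followed by induction over $t$ with conditioning. One small caveat: the hypothesis does not transfer to the truncated norms $\norm{P_N(\phi_t(\sigma)-\phi_t(\sigma'))}$ for any fixed $N$, so it is cleaner to extend Szarek's inequality directly to $x\in\ell_2$ (by $L^2$, hence $L^1$, convergence of $\sum_{j\le N}\eta_jx_j$), after which your core step runs verbatim in $\ell_2$. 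Alternatively, on finite $\mathcal S$, apply the finite-dimensional case to $(1-\theta)\psi_t$ and $P_N\phi_t$ for $N$ large, then let $\theta\to0$.
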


\begin{lemma}\label{lem:boundary-mean}
Assume $\Gams\le D/4$ and let
$\;Z\eqdef\sup_{v\in\Vset}\bigl|B_v(Q)-B_v(P)\bigr|$. Then
\begin{equation*}
  \E_Q\,Z\;\le\;63\,\sqrt k\,\Gams\sqrt{T/s}.
\end{equation*}
\end{lemma}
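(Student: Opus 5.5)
Since $Q$ consists of $s$ i.i.d.\ uniform draws from $P$, we have $\E_Q B_v(Q)=B_v(P)$ for every fixed $v$, so $Z$ is the supremum of an empirical process indexed by $\Vset$. First, we symmetrize with an independent ghost sample $Q'=\{q'_1,\dots,q'_s\}$ and Rademacher signs $\xi_t$. This gives
\[
\E_Q Z\le \E_{Q,Q'}\sup_{v\in\Vset}\Bigl|\tfrac1s\textstyle\sum_t\bigl(b_v(q_t)-b_v(q'_t)\bigr)\Bigr| \le \tfrac{2}{s}\,\E_Q\E_\xi\sup_{v\in\Vset}\Bigl|\textstyle\sum_t \xi_t b_v(q_t)\Bigr| .
\]
To remove the absolute value, we note that $v=0\in\Vset$ and $b_0\equiv 0$ (every point is nearest to its own optimal center). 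The supremum is therefore nonnegative, and $\E\sup|\cdot|\le 2\E\sup(\cdot)$ by symmetry of $\xi$. By Claim~\ref{claim:farpoints}, $b_v(q)=0$ unless $q\in F$, so only the far sample points contribute. We work with a countable dense subset of $\Vset$, which is legitimate because $b_v$ is continuous in $v$.

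Next, we expose $b_v$ as a maximum of near-affine functions. For $p$ with $i=i(p)$, expanding the squares gives
\[
b_v(p)=\max_{j}\Bigl[\norm{p-\mu_i}^2-\norm{p-\mu_j}^2-2\ip{p-\mu_i}{v_i}+2\ip{p-\mu_j}{v_j}+\norm{v_i}^2-\norm{v_j}^2\Bigr].
\]
We then lift each $v_r$ to $\lift_r=(v_r,\norm{v_r}^2/(2\Gams))\in\Xspace\oplus\RR$, so that $\norm{\lift_r}\le\sqrt2\,\Gams$ on $\Vset$. Each term inside the max becomes a constant plus $\ip{g_{p,j}}{\lift}$ for an explicit vector $g_{p,j}$ built from $p-\mu_i$, $p-\mu_j$ and $\pm2\Gams$ in the extra coordinates. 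We apply Theorem~\ref{thm:contraction} with $\psi_t(v)=b_v(q_t)\indic{q_t\in F}$ and $\phi_t(v)=\bigl(2(p-\mu_r)\text{-weighted } \lift_r\text{-coordinates}\bigr)_r$, i.e.\ a vector-valued map whose coordinates are linear in $\lift$. The hypothesis to check is that $\psi_t$ is $1$-Lipschitz with respect to $\norm{\phi_t(v)-\phi_t(v')}$. A max of affine functions is bounded by the largest difference, and we choose $\phi_t$ to stack the per-center quantities $\ip{p-\mu_r}{v_r}$-type and lifted quadratic coordinates so that each difference is dominated by the $\ell_2$ norm of the stack. For far points, $\norm{p-\mu_j}\le a(p)+\norm{\mu_i-\mu_j}$, and the relevant $j$ are those within reach, where $\norm{p-\mu_j}\lesssim 3a(p)$ by the argument of Claim~\ref{claim:farpoints}. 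It is also safe to cap the contribution by $a(p)$ scaling. The output of the contraction step is a bound of the form
\[
\sqrt2\,\E_\eta\sup_{\lift}\sum_{r}\Bigl\langle \lift_r, \sum_t \eta_{t,r}\, c\,a(q_t)\indic{q_t\in F}\, e_{t,r}\Bigr\rangle,
\]
with unit-type vectors $e_{t,r}$ and a constant $c$.

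Finally, we bound this expression by Cauchy--Schwarz in $\Xspace^k$ and Jensen. The sup is at most $\sqrt{k}\cdot\sqrt2\Gams$ times $\bigl(\E\norm{\sum_t\eta_t c\,a(q_t)\indic{q_t\in F}e_t}^2\bigr)^{1/2}=c\bigl(\sum_t a(q_t)^2\indic{q_t\in F}\bigr)^{1/2}$. Taking $\E_Q$ and using Jensen gives $\le c\sqrt{sT}$. Multiplying by $2\cdot2/s$ then yields $\E_QZ\le C\sqrt k\,\Gams\sqrt{T/s}$, and tracking constants should give $63$. The main obstacle is the Lipschitz domination step: a max over $j$ of differences involving all $k$ centers must be bounded by a single $\ell_2$ norm of linear-in-$\lift$ coordinates, with coefficients scaling like $a(p)$ rather than the unbounded $\norm{p-\mu_j}$. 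I would handle this by restricting the max to the set of indices $j$ that can achieve it (each within $a(p)+2\Gams$ of $p$) and using the bound $\norm{p-\mu_j}\le 3a(p)$ there.
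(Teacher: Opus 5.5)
\textbf{The gap: your bookkeeping of the factor $k$.} Your skeleton matches the paper's: ghost-sample symmetrization, dropping the absolute value via $0\in\Vset$, restricting the maximum to $A(p)=\{j:\norm{p-\mu_j}\le a(p)+2\Gams\}$, the lift $v_r\mapsto(v_r,\norm{v_r}^2/2\Gams)$, Maurer's inequality, then Cauchy--Schwarz and Jensen. But your final step does not give $\sqrt k$.

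After contraction the Rademacher signs carry two indices, so the random vector is $W=\sum_t\sum_r\eta_{t,r}\,c\,a(q_t)e_{t,r}$. Hence
\[
\E_\eta\norm{W}^2=c^2\sum_t a(q_t)^2\sum_r\norm{e_{t,r}}^2 .
\]
When you write $\E\norm{\sum_t\eta_t c\,a(q_t)e_t}^2=c^2\sum_t a(q_t)^2$, you silently drop the sum over $r$.

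This multiplicity is not an artifact of bookkeeping. The domination $\psi_t(v)-\psi_t(v')\le\norm{\phi_t(v)-\phi_t(v')}$ must control $h_j(q_t,\cdot)$ for every index $j$ that can be active. The $j$-dependent parts $(2(q_t-\mu_j),-2\Gams)$ of the vectors $g_{q_t,j}$ lie in pairwise orthogonal blocks of $\Gspace$, each of norm at least $2a(q_t)$. So a $\phi_t$ linear in the lift pays order $a(q_t)^2$ per such index, and there can be up to $k$ of them.

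Tracked correctly, your argument gives $\E\norm{W}\lesssim\sqrt k\,\bigl(\sum_t a(q_t)^2\bigr)^{1/2}$. Your bound $\sup_v\norm{\lift(v)}\le\sqrt{2k}\,\Gams$ comes from the per-block constraint $\norm{v_r}\le\Gams$ alone. Combining the two, you only get $\E_Q Z=O(k\,\Gams\sqrt{T/s})$. This matters: fed into \cref{lem:boundary-stable}, it turns the $k^2$ sample size of \cref{thm:kmeans-stable} into $k^3$.

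\textbf{The fix: use the aggregate constraint in $\Vset$.} From $\sum_r w_r\norm{v_r}^2\le K$ and $w_r\ge\beta/k$ you get $\sum_r\norm{v_r}^2\le kK/\beta=\tfrac{11}{6}\Gams^2$. Since $\norm{\lift_r(v)}^2=\norm{v_r}^2+\norm{v_r}^4/(4\Gams^2)\le\tfrac54\norm{v_r}^2$, this yields $\sup_{v\in\Vset}\norm{\lift(v)}^2\le\tfrac{55}{24}\Gams^2$, with no dependence on $k$.

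The single $\sqrt k$ then comes only from
\[
\E_\eta\norm{W}^2\le 48\sum_t\abs{A(q_t)}\,a(q_t)^2\le 48k\sum_t a(q_t)^2,
\]
using $\norm{g_{q_t,j}}^2\le 48\,a(q_t)^2$ for $j\in A(q_t)$.

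\textbf{A simpler choice of $\phi_t$.} The paper takes $\phi_t(v)=(h_j(q_t,v))_{j\in A(q_t)}$. For this choice the Lipschitz hypothesis is immediate from $\ell_\infty\le\ell_2$. Your per-center stacking also works, but it costs an extra factor (about $\sqrt2$) in the domination, so the constant $63$ has to be checked rather than asserted. With the paper's choice,
\[
\tfrac2s\cdot 2\sqrt2\cdot\sqrt{110k}\,\Gams\sqrt{sT}\le 63\sqrt k\,\Gams\sqrt{T/s}.
\]
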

\begin{proof}
Fix $p\in X_i^*$. By \cref{claim:farpoints}, $b_v(p)>0$ only when $p\in F$, so assume
$p\in F$. With $A(p)=\{j:\norm{p-\mu_j}\le a(p)+2\Gams\}$ we may write
\begin{equation}\label{eq:bv-max}
  b_v(p)=\max_{j\in A(p)}\bigl(r_j(p)+h_j(p,v)\bigr),
\end{equation}
where $r_j(p)=\norm{p-\mu_i}^2-\norm{p-\mu_j}^2\le0$ and
$h_j(p,v)=-2\ip{p-\mu_i}{v_i}+2\ip{p-\mu_j}{v_j}+\norm{v_i}^2-\norm{v_j}^2$. The
restriction to $A(p)$ is valid since for $j\notin A(p)$ the argument of
\cref{claim:farpoints} gives $r_j+h_j<0$, while $j=i\in A(p)$ attains $0$.

\emph{Symmetrization.} Let $Q'=(q_1',\dots,q_s')$ be an independent i.i.d.\ copy and
$\xi_t$ Rademacher. Since $B_v(P)=\E_{Q'}[\frac1s\sum_t b_v(q_t')]$ for every fixed
$v$, and the index set $\Vset$ depends on neither $Q$ nor $Q'$, Jensen applied to
$\sup_{v\in\Vset}$ and exchangeability of $(q_t,q_t')$ give
\begin{align*}
  &\E_Q Z
  =\frac1s \E_Q \sup_{v\in\Vset}\Bigl|\E_{Q'}\bigl[\textstyle\sum_t\big(b_v(q_t)-b_v(q_t')\big)\bigr]\Bigr|
  \ \leq\ \frac1s \E_{Q,Q'} \sup_{v\in\Vset}\Bigl|\sum_t\big(b_v(q_t)-b_v(q_t')\big)\Bigr| \\
  &= \frac1s \E_{Q,Q',\xi} \sup_{v\in\Vset}\Bigl|\sum_t\xi_t\big(b_v(q_t)-b_v(q_t')\big)\Bigr|
  \ \le\ \frac2s\,\E_{Q,\xi}\sup_{v\in\Vset}\Bigl|\sum_t\xi_t\,b_v(q_t)\Bigr| ,
\end{align*}
the sum running over far points $q_t\in Q\cap F$ only, as $b_v$ vanishes off $F$.

\emph{Contraction.} Fix $Q$ with far points $\{q_1,\dots,q_m\}=Q\cap F$, and write
$a_t=a(q_t)\ge D/4$, $i_t=i(q_t)$. By \eqref{eq:bv-max}, $v\mapsto b_v(q_t)$ is
$1$-Lipschitz with respect to the $\ell_\infty$, hence the $\ell_2$, norm of the
vector $\phi_t(v)=(h_j(q_t,v))_{j\in A(q_t)}$. Since $0\in\Vset$ and $b_0\equiv0$ we
have $\sup_{v\in\Vset}\sum_t\zeta_t b_v(q_t)\ge0$ for every sign pattern $\zeta$, so
writing $|x|=\max(x,-x)$ and using the symmetry of $\xi$, \cref{thm:contraction} gives
\begin{equation}\label{eq:after-contraction}
  \E_\xi\sup_{v\in\Vset}\Bigl|\sum_t\xi_t b_v(q_t)\Bigr|
  \le 2\sqrt2\,\E_\eta\sup_{v\in\Vset}\sum_t\sum_{j\in A(q_t)}\eta_{t,j}\,h_j(q_t,v).
\end{equation}

\emph{Lifting.} The functions $h_j(p,\cdot)$ are not linear in $v$, but they all
become linear after appending one coordinate to each displacement. Let
$\Gspace\eqdef\bigoplus_{r=1}^k(\Xspace\oplus\mathbb R)$ with the direct-sum inner
product, and define the \emph{lifted displacement}
\[
  \lift(v)\eqdef\bigl(\lift_1(v),\dots,\lift_k(v)\bigr)\in\Gspace ,
  \qquad
  \lift_r(v)\eqdef\Bigl(v_r,\ \frac{\norm{v_r}^2}{2\Gams}\Bigr) .
\]
For $p\in X_i^*$ and $j\in[k]$, let $g_{p,j}\in\Gspace$ be the vector whose only
nonzero blocks are
\[
  (g_{p,j})_i=\bigl(-2(p-\mu_i),\ 2\Gams\bigr),
  \qquad
  (g_{p,j})_j=\bigl(2(p-\mu_j),\ -2\Gams\bigr) ;
\]
for $j=i$ these two expressions cancel and $g_{p,i}=0$, consistently with
$h_i(p,\cdot)\equiv0$. Expanding the inner products gives the identity
\begin{equation}\label{eq:lift}
  h_j(p,v)=\ip{g_{p,j}}{\lift(v)}_{\Gspace}\qquad\text{for all }j\in[k],\ v\in\Xspace^k .
\end{equation}
Consequently the Rademacher process in \eqref{eq:after-contraction} is a single inner product. 

Setting $W\eqdef\sum_t\sum_{j\in A(q_t)}\eta_{t,j}\,g_{q_t,j}\in\Gspace$,
\begin{equation}\label{eq:CS-lift}
  \sup_{v\in\Vset}\sum_t\sum_{j\in A(q_t)}\eta_{t,j}h_j(q_t,v)
  =\sup_{v\in\Vset}\ip{W}{\lift(v)}
  \ \le\ \norm{W}\cdot U ,
  \qquad U\eqdef\sup_{v\in\Vset}\norm{\lift(v)} .
\end{equation}

\emph{Bounding $U$.} Since $\norm{v_r}\le\Gams$ we have
$\norm{\lift_r(v)}^2=\norm{v_r}^2+\frac{\norm{v_r}^4}{4\Gams^2}\le\frac54\norm{v_r}^2$,
and by the aggregate constraint in \eqref{eq:Vset} together with $w_r\ge\beta/k$,
\[
  \sum_r\norm{v_r}^2\le\frac{k}{\beta}\sum_r w_r\norm{v_r}^2\le\frac{kK}{\beta}
  =\frac{11\gamma\lambda k}{\beta}\OPT=\frac{11}{6}\,\Gams^2 .
\]
Hence $U^2\le\frac54\cdot\frac{11}{6}\Gams^2=\frac{55}{24}\Gams^2$. 

\emph{Bounding $\norm{W}$.} For $j\in A(q_t)$ we have $\norm{q_t-\mu_j}\le a_t+2\Gams\le3a_t$
and $\Gams\le D/4\le a_t$, so
\[
  \norm{g_{q_t,j}}^2=4\norm{q_t-\mu_{i_t}}^2+4\Gams^2+4\norm{q_t-\mu_j}^2+4\Gams^2
  \le 4a_t^2+4a_t^2+36a_t^2+4a_t^2=48\,a_t^2 .
\]
The $\eta_{t,j}$ are independent with zero mean, so
$\E_\eta\norm{W}^2=\sum_t\sum_{j\in A(q_t)}\norm{g_{q_t,j}}^2\le48\sum_t|A(q_t)|\,a_t^2\le48k\sum_t a_t^2$,
and by Jensen $\E_\eta\norm{W}\le(48k\sum_t a_t^2)^{1/2}$. Combining with
\eqref{eq:CS-lift} and $\sqrt{55/24}\cdot\sqrt{48}=\sqrt{110}<11$,
\[
  \E_\eta\sup_{v\in\Vset}\sum_t\sum_{j\in A(q_t)}\eta_{t,j}h_j(q_t,v)
  \ \le\ \sqrt{110\,k}\,\Gams\Bigl(\textstyle\sum_t a_t^2\Bigr)^{1/2}
  \ \le\ 11\sqrt k\,\Gams\Bigl(\textstyle\sum_{q\in Q\cap F}a(q)^2\Bigr)^{1/2}.
\]
Finally, combining with \eqref{eq:after-contraction} and using Jensen (concavity of
$\sqrt{\cdot}$) together with $\E_Q\sum_{q\in Q\cap F}a(q)^2=sT$,
\[
  \E_Q Z\le\frac2s\cdot2\sqrt2\cdot11\sqrt k\,\Gams(sT)^{1/2}
  \le63\,\sqrt k\,\Gams\sqrt{T/s}. \qedhere
\]
\end{proof}

% ============================================================================
\subsection{Stable coreset}\label{sec:stable}
% ============================================================================

We further augment $\event$ with a one-sided control on the sampled optimal cost,
i.e.\ the event $\text{(E3):}\ \cost(C^*,Q)\le(6/\delta)\OPT.$
Since $\E[\cost(C^*,Q)]=\OPT$, Markov gives $\Pr[\lnot\text{(E3)}]\le\delta/6$. In this
section we write $\event$ for $\ref{item:E1}\wedge\ref{item:E2}\wedge\text{(E3)}$,
so that \cref{claim:sample} gives $\Pr[\event]\ge1-\delta/2$.

We instantiate \eqref{eq:Gam-def} with $\lambda=6/\delta$, so
$\Gams^2=36\gamma k\OPT/(\beta\delta)$. Assume $\rho^2\le\beta\delta/(576\gamma k)$;
then $\Gams^2\le\OPT/(16\rho^2)\le D^2/16$ by \cref{lemma:rabani}, so $\Gams\le D/4$
and \cref{claim:farpoints,lemma:sigma,lem:disp,lem:boundary-mean} all apply.

\begin{lemma}\label{lem:boundary-stable}
Let $\rho^2\le\beta\delta/(576\gamma k)$. Then for a uniform sample of (sufficiently
large) size $O\!\big(k^2\gamma/(\beta\epsilon^2\delta^2)\big)$, with probability $\ge1-\delta/2$,
$\ \sup_{v\in\Vset}\abs{B_v(Q)-B_v(P)}\le\epsilon\OPT$.
\end{lemma}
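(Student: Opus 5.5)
The plan is to combine the expectation bound of \cref{lem:boundary-mean} with Markov's inequality, and then tune $s$ using the instantiation $\lambda=6/\delta$, i.e.\ $\Gams^2=36\gamma k\OPT/(\beta\delta)$. As noted just before the lemma, the assumption $\rho^2\le\beta\delta/(576\gamma k)$ together with \cref{lemma:rabani} already gives $\Gams\le D/4$, so \cref{lem:boundary-mean} applies unconditionally. This matters because the set $\Vset$ is fixed and does not depend on $Q$. We therefore obtain
\[
\E_Q Z\le 63\sqrt k\,\Gams\sqrt{T/s},
\qquad Z=\sup_{v\in\Vset}\abs{B_v(Q)-B_v(P)} .
\]

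First, I bound the far-point cost $T$. Trivially $T\le\frac1n\sum_p a(p)^2=\OPT$, and this crude bound is all that is needed. Substituting it gives
\[
\E_Q Z\le 63\sqrt k\cdot\sqrt{\frac{36\gamma k\OPT}{\beta\delta}}\cdot\sqrt{\frac{\OPT}{s}}
=378\,k\,\OPT\sqrt{\frac{\gamma}{\beta\delta s}} .
\]

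Second, Markov's inequality gives $\Pr[Z>\epsilon\OPT]\le\E_Q Z/(\epsilon\OPT)\le\frac{378k}{\epsilon}\sqrt{\gamma/(\beta\delta s)}$. I want this to be at most $\delta/2$. That holds once
\[
s\ge\frac{4\cdot 378^2\,k^2\gamma}{\beta\,\epsilon^2\,\delta^3} .
\]
This is one factor of $\delta$ worse than the claimed $O(k^2\gamma/(\beta\epsilon^2\delta^2))$.

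Recovering the stated $\delta^{-2}$ dependence is the main obstacle. The two factors of $\delta$ come from $\Gams^2\propto 1/\delta$, which enters through $\lambda$, and from Markov, which contributes another $1/\delta^2$ after squaring. I see two ways to remove the extra factor.
\begin{itemize}
\item Sharpen the bound on $T$ using separation. Far points satisfy $a(p)\ge D/4$, but that alone does not make $T$ small. Still, $T\le\OPT$ combined with $\Gams^2\le D^2/16$ lets one write $\Gams^2 T\le \Gams^2\OPT$, so this route gains nothing directly.
\item Use a second-moment or bounded-differences concentration in place of Markov on the first moment. This is where I would push first. Each summand $b_v(q)\le 4\Gams a(q)$ by \cref{claim:farpoints}, so changing one sample point alters $Z$ by at most $\frac{8}{s}\Gams a(q)$. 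An Efron--Stein bound then gives
\[
\Var Z\lesssim \frac{\Gams^2}{s}\,\E a(q)^2\mathbf 1_F\le\frac{\Gams^2 T}{s},
\]
the same order as $(\E Z)^2$. Chebyshev then requires $\E Z+\sqrt{\Var Z/\delta}\le\epsilon\OPT$, which yields $s\gtrsim k\gamma/(\beta\epsilon^2\delta^2)$ from the variance term and $k^2\gamma/(\beta\epsilon^2\delta)$ from the mean term. Both are within the claimed $O(k^2\gamma/(\beta\epsilon^2\delta^2))$.
\end{itemize}

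So the planned proof is: the expectation bound with $T\le\OPT$, then an Efron--Stein variance bound using the Lipschitz-in-one-point estimate from \cref{claim:farpoints}, then Chebyshev with failure probability $\delta/2$, choosing the constant in $s$ large enough. If the Efron--Stein step turns out to be delicate because the supremum is inside, I would fall back to the Markov argument above and accept the resulting $\delta^{-3}$ as a weaker but valid version.
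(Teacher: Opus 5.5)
Your final plan is correct and matches the paper's proof. Both use the expectation bound of \cref{lem:boundary-mean} with the crude estimate $T\le\OPT$, followed by an Efron--Stein bound $\Var(Z)\lesssim\Gams^2\OPT/s$ obtained from $0\le b_v\le 4\Gams a$ in \cref{claim:farpoints}; the only difference is that the paper applies Markov to $Z^2$ via $\E[Z^2]=(\E Z)^2+\Var(Z)$, whereas your centered Chebyshev step gives the slightly sharper $s=O\big(k^2\gamma/(\beta\epsilon^2\delta)+k\gamma/(\beta\epsilon^2\delta^2)\big)$, which is still within the stated bound.
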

\begin{proof}
By \cref{lem:boundary-mean} and the trivial bound $T\le\frac1n\sum_p a(p)^2=\OPT$,
\[
  \E_Q Z\le63\sqrt k\,\Gams\sqrt{\OPT/s},\qquad
  (\E_Q Z)^2\le3969\,k\Gams^2\OPT/s .
\]
For the variance we apply Efron--Stein to $Z$, which is a function of the i.i.d.\
coordinates $q_1,\dots,q_s$ only, the class $\Vset$ being fixed. By
\cref{claim:farpoints}, $0\le b_v(p)\le4\Gams\,a(p)$ for every $v\in\Vset$, so
replacing $q_t$ by an independent copy $q_t'$ changes $B_v(Q)$ by at most
$\frac{4\Gams}{s}(a(q_t)+a(q_t'))$ \emph{uniformly in} $v\in\Vset$, and hence changes
the supremum $Z$ by at most the same amount. By Efron--Stein and $\E[a^2]=\OPT$,
\[
  \Var(Z)\le\frac12\sum_t\E\bigl[(Z-Z^{(t)})^2\bigr]
  \le\frac{8\Gams^2}{s^2}\sum_t\E[(a(q_t)+a(q_t'))^2]\le\frac{32\Gams^2\OPT}{s}.
\]
Hence $\E[Z^2]\le4001\,k\Gams^2\OPT/s$, and by Markov
$\Pr[Z>\epsilon\OPT]\le\frac{4001k\Gams^2}{s\epsilon^2\OPT}\le\delta/2$
once $s\ge8002\,k\Gams^2/(\epsilon^2\OPT\delta)=\Theta\!\big(k^2\gamma/(\beta\epsilon^2\delta^2)\big)$.
\end{proof}

\begin{lemma}\label{lemma:k-rcda}
On $\event$ and the event of \cref{lem:boundary-stable}, for every $C\in\union$,
\[
  \Bigl|(\cost(C,P)-\OPT)-(\cost(C,Q)-\cost(C^*,Q))\Bigr|\le 11\epsilon\cost(C,P).
\]
That is, $Q$ is an $11\epsilon$-RCDA of $P$ over $\union$.
\end{lemma}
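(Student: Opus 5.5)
The plan is to reduce everything to the RCDA error decomposition \eqref{eq:rcda-error} and bound its three terms with results already established. Fix $C\in\union$. On $\event$ (which includes (E3), so $\cost(C^*,Q)\le\lambda\OPT$ with $\lambda=6/\delta$) and under $\Gams\le D/4$, \cref{lemma:sigma} gives $C=C(v)$ for some $v\in\Vset$. Then the quantity on the left-hand side of the claim is exactly $\abs{E(v)}$. By \eqref{eq:rcda-error} we have
\[
  \abs{E(v)}\le\Bigl|\sum_i(w_i-\hat w_i)\norm{v_i}^2\Bigr|+\Bigl|2\sum_i\hat w_i\ip{v_i}{\Delta_i}\Bigr|+\abs{B_v(Q)-B_v(P)}.
\]

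Each term is handled by an earlier result. The first term is at most $4\epsilon\cost(C(v),P)$ by \cref{lem:disp}(i). The second is at most $5(1+\epsilon)\epsilon\cost(C(v),P)$ by \cref{lem:disp}(ii), and $\epsilon\le1/5$ makes this at most $6\epsilon\cost(C(v),P)$. The third is at most $\epsilon\OPT$ on the event of \cref{lem:boundary-stable}, since $v\in\Vset$ and that bound holds uniformly over $\Vset$. Because $C^*$ is optimal, $\OPT\le\cost(C,P)$. Summing gives $(4+6+1)\epsilon\cost(C,P)=11\epsilon\cost(C,P)$.

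The only delicate point is that the decomposition \eqref{eq:rcda-error} uses the true $\cost(C^*,Q)$ rather than $\OPTQ$. This is what cancels the $\sum_i\hat w_i\norm{\Delta_i}^2$ term, so \cref{lem:disp}(iii) is not even needed. One should also confirm that the localization of \cref{lemma:sigma} really applies to both $\sol(P)$ and $\sol(Q)$ with the $\lambda=6/\delta$ instantiation. It does, since (E3) supplies the hypothesis $\cost(C^*,Q)\le\lambda\OPT$, and the separation assumption ensures $\Gams\le D/4$. No real obstacle is expected; the work was done in the earlier lemmas.
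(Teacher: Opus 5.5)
Your proposal is correct and matches the paper's proof essentially line by line. Like the paper, you localize $C=C(v)$ with $v\in\Vset$ via \cref{lemma:sigma}, then bound the three terms of \eqref{eq:rcda-error} by \cref{lem:disp}(i), \cref{lem:disp}(ii), and \cref{lem:boundary-stable}, using $\OPT\le\cost(C,P)$ and $5(1+\epsilon)\le 6$ to reach the constant $11$.
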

\begin{proof}
On $\event$, \cref{lemma:sigma} gives $C=C(v)$ for some $v\in\Vset$, so it suffices to
bound $|E(v)|$ from \eqref{eq:rcda-error} uniformly over $v\in\Vset$. Bounding the
first two terms of \eqref{eq:rcda-error} by items (i) and (ii) of \cref{lem:disp}, and
the third by \cref{lem:boundary-stable} together with $\OPT\le\cost(C(v),P)$, we get,
using $\epsilon\le1/5$,
\[
  \abs{E(v)}\le4\epsilon\cost(C(v),P)+5(1+\epsilon)\epsilon\cost(C(v),P)+\epsilon\cost(C(v),P)
  \le 11\epsilon\cost(C(v),P). \qedhere
\]
\end{proof}

\begin{proof}[Proof of \cref{thm:kmeans-stable}]
By a union bound, $\event$ and the event of \cref{lem:boundary-stable} occur
simultaneously with probability at least $1-\delta$. On $\event$ we have
$\cost(C^*,Q)\le6\cost(C^*,P)/\delta$, so combining \cref{lemma:k-rcda} with
\cref{thm:rcda-to-stable} gives the claim; we use the restricted-class version of
\cref{thm:rcda-to-stable}, whose proof is identical after restricting every candidate
center set appearing in it to $\union$. Conclude by rescaling $\epsilon$.
\end{proof}

\subsection{Strong coreset}\label{sec:strong}

Fix $\epsilon \in (0,1/5)$. To obtain a strong coreset guarantee over $\union$ we require the additional assumption of bounded kurtosis.

\begin{definition}\label{def:kurtosis}
$P\subseteq\Xspace$ has $M$-bounded kurtosis if for every $i$, $\frac1{|X_i^*|}\sum_{p\in X_i^*}\norm{p-\mu_i}^4\le MV_i^2$.
\end{definition}

\begin{claim}\label{lem:fourth}
Under $M$-bounded kurtosis, $\frac1n\sum_{p\in P}a(p)^4\le\frac{Mk}{\beta}\OPT^2$.
\end{claim}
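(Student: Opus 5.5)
The plan is to decompose the sum over $P$ cluster by cluster and then apply the kurtosis bound inside each cluster. By definition $a(p)=\norm{p-\mu_{i(p)}}$, so
\[
  \frac1n\sum_{p\in P}a(p)^4=\sum_{i=1}^k w_i\cdot\frac1{|X_i^*|}\sum_{p\in X_i^*}\norm{p-\mu_i}^4\le M\sum_{i=1}^k w_iV_i^2 ,
\]
where the inequality is \cref{def:kurtosis} applied to each cluster.

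It remains to compare $\sum_i w_iV_i^2$ with $\OPT^2=(\sum_i w_iV_i)^2$. Each $V_i$ is bounded by its share of the optimum: since all terms $w_jV_j$ are nonnegative, $w_iV_i\le\OPT$. Together with $\beta$-balancedness ($w_i\ge\beta/k$), this gives
\[
  V_i\le\frac{\OPT}{w_i}\le\frac{k}{\beta}\OPT .
\]
Hence $\sum_i w_iV_i^2\le \max_i V_i\cdot\sum_i w_iV_i\le\frac{k}{\beta}\OPT\cdot\OPT$.

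Multiplying by $M$ yields $\frac1n\sum_p a(p)^4\le\frac{Mk}{\beta}\OPT^2$, as claimed. There is no real obstacle here. The only point requiring care is to pull out $\max_i V_i$ rather than use a convexity bound; the latter would go the wrong way, since $\sum_i w_iV_i^2\ge\OPT^2$.
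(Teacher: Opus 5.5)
Your proposal is correct and follows essentially the same argument as the paper: you decompose by optimal clusters, apply the kurtosis bound per cluster, and then use $V_i\le\OPT/w_i\le k\OPT/\beta$ to get $\sum_i w_iV_i^2\le\frac{k}{\beta}\OPT^2$. You also spell out the step $\sum_i w_iV_i^2\le\max_i V_i\cdot\OPT$ explicitly, which the paper leaves implicit.
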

\begin{proof}
$\frac1n\sum_p a(p)^4=\sum_i w_i\frac1{|X_i^*|}\sum_{p\in X_i^*}\norm{p-\mu_i}^4\le M\sum_i w_iV_i^2$, and $V_i\le\OPT/w_i\le k\OPT/\beta$ gives $\sum_i w_iV_i^2\le\frac{k}{\beta}\OPT^2$.
\end{proof}

We then bound the cost of the far points.
\begin{claim}\label{lem:tail}
$T\le16\rho^2\frac{Mk}{\beta}\OPT$.
\end{claim}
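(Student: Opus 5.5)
The bound follows from a Markov/Cauchy--Schwarz step on the far points, with Lemma~\ref{lemma:rabani} supplying the threshold. Recall $T=\frac1n\sum_{p\in P}a(p)^2\indic{p\in F}$, where $F=\{p: a(p)\ge D/4\}$. On $F$ we have $a(p)^2\ge D^2/16$, so pointwise
\[
  a(p)^2\indic{p\in F}\le \frac{a(p)^4}{D^2/16}=\frac{16\,a(p)^4}{D^2}.
\]
Averaging over $P$ and applying \cref{lem:fourth} gives
\[
  T\le \frac{16}{D^2}\cdot\frac1n\sum_{p\in P}a(p)^4\le \frac{16}{D^2}\cdot\frac{Mk}{\beta}\OPT^2 .
\]
By \cref{lemma:rabani}, $D^2\ge \OPT/\rho^2$, hence $1/D^2\le \rho^2/\OPT$. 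Substituting yields $T\le 16\rho^2\frac{Mk}{\beta}\OPT$, as claimed.

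The only hypothesis to verify is $\rho^2\le 1/2$, which \cref{lemma:rabani} requires. This holds because \cref{thm:kmeans-strong} assumes $\rho^2\le\beta/(120\gamma k)\le 1/240$. The pointwise inequality $x\le x^2/\theta$ for $x\ge\theta$, applied with $x=a(p)^2$ and $\theta=D^2/16$, is immediate. Nothing in this argument is delicate; the one thing to get right is the choice of threshold $D^2/16$, so that the constant comes out as exactly $16$.

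If one preferred not to divide by the threshold, an alternative is Cauchy--Schwarz, $T\le \bigl(\frac1n\sum_p a(p)^4\bigr)^{1/2}\bigl(|F|/n\bigr)^{1/2}$, combined with a Chebyshev-type bound $|F|/n\le \frac{16}{D^2}\OPT$. This yields a weaker dependence on $\rho$, so I would use the direct pointwise bound above.
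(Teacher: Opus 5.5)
Your proof is correct and is the paper's argument: the pointwise bound $\indic{p\in F}\le(4a(p)/D)^2$, then \cref{lem:fourth}, then $D^2\ge\OPT/\rho^2$ from \cref{lemma:rabani}. Your check that $\rho^2\le 1/2$ holds under the hypotheses of \cref{thm:kmeans-strong} (since $\rho^2\le 1/240$) is a detail the paper leaves implicit.
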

\begin{proof}
$\indic{p\in F}\le(4a(p)/D)^2$ gives $T\le\frac{16}{D^2}\cdot\frac1n\sum_p a(p)^4$. Proof is concluded by applying \cref{lem:fourth} and $D^2\ge\OPT/\rho^2$ (\cref{lemma:rabani}).
\end{proof}

Consider the event $\text{(}\Bar{\text{E}}3\text{)}$: $\bigl|\cost(C^*,Q)-\OPT\bigr|\le\epsilon\OPT$.
\begin{claim}\label{claim:E3strong}
Under $M$-bounded kurtosis, if $s\ge6Mk/(\beta\epsilon^2\delta)$ then $\Pr[\lnot\text{(}\Bar{\text{E}}3\text{)}]\le\delta/6$.
\end{claim}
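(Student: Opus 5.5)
The plan is a second-moment (Chebyshev) argument on the single scalar $\cost(C^*,Q)=\frac1s\sum_{t=1}^s a(q_t)^2$. This is an average of $s$ i.i.d.\ terms $Y_t\eqdef a(q_t)^2$, where each $q_t$ is uniform on $P$. First I will record the mean: $\E[Y_t]=\frac1n\sum_{p\in P}a(p)^2=\OPT$, so $\E[\cost(C^*,Q)]=\OPT$, as already used for (E3).

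Next I bound the variance of a single term by its second moment, $\Var(Y_t)\le\E[Y_t^2]=\frac1n\sum_p a(p)^4$. This is exactly the quantity controlled by \cref{lem:fourth}: under $M$-bounded kurtosis it is at most $\frac{Mk}{\beta}\OPT^2$. By independence of the $q_t$,
\[
  \Var\bigl(\cost(C^*,Q)\bigr)=\frac1s\Var(Y_1)\le\frac{Mk}{\beta s}\OPT^2 .
\]

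Then Chebyshev's inequality gives
\[
  \Pr\bigl[\abs{\cost(C^*,Q)-\OPT}>\epsilon\OPT\bigr]\le\frac{Mk}{\beta s\epsilon^2}.
\]
With $s\ge 6Mk/(\beta\epsilon^2\delta)$ the right-hand side is at most $\delta/6$, as claimed.

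There is no real obstacle here. The only point needing care is that the fourth-moment bound must be taken over the whole mixture $P$ rather than per cluster. \cref{lem:fourth} already handles this, via $V_i\le\OPT/w_i\le k\OPT/\beta$, and that is where the factor $k/\beta$ in the sample size comes from. Note that neither separation nor the event $\event$ is needed for this claim.
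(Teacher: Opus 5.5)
Your proposal is correct and follows the paper's proof essentially verbatim: the mean $\E[\cost(C^*,Q)]=\OPT$, the variance bound $\Var(\cost(C^*,Q))\le \E[a^4]/s\le Mk\OPT^2/(\beta s)$ via \cref{lem:fourth}, and Chebyshev's inequality yielding failure probability at most $\frac{Mk}{s\beta\epsilon^2}\le\delta/6$. Your remark that neither separation nor $\event$ is needed here is also accurate.
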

\begin{proof}
$\E[\cost(C^*,Q)]=\OPT$ and, by \cref{lem:fourth}, $\Var(\cost(C^*,Q))\le\E[a^4]/s\le Mk\OPT^2/(\beta s)$.
Using Chebyshev's inequality the probability of failure is at most $\frac{Mk}{s\beta\epsilon^2}\le\delta/6$.
\end{proof}

Augment $\event$ with $\text{(}\Bar{\text{E}}3\text{)}$, then $\Pr[\event]\ge1-\delta/2$.

\begin{corollary}\label{cor:optbound}
On $\event$, $\abs{\OPTQ-\OPT}\le2\epsilon\OPT$.
\end{corollary}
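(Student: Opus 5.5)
The bound will follow from the identity \eqref{eq:optq-split}, $\cost(C^*,Q)=\OPTQ+\sum_i\hat w_i\norm{\Delta_i}^2$, together with the three events that make up $\event$. Rearranging gives
\[
  \OPTQ-\OPT=\bigl(\cost(C^*,Q)-\OPT\bigr)-\sum_i\hat w_i\norm{\Delta_i}^2 .
\]
We bound the two terms on the right separately and combine them with the triangle inequality.

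For the first term we use (\=E3) directly: $\abs{\cost(C^*,Q)-\OPT}\le\epsilon\OPT$.

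For the second term we use item (iii) of \cref{lem:disp}. That item only uses \ref{item:E1} and \ref{item:E2}, so it needs neither localization nor $\Gams\le D/4$. It gives $0\le\sum_i\hat w_i\norm{\Delta_i}^2\le(1+\epsilon)\epsilon^2\OPT$.

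Since $\epsilon\le1/5$, we have $(1+\epsilon)\epsilon^2\le\frac{6}{5}\cdot\frac{1}{5}\epsilon<\epsilon$. Hence
\[
  \abs{\OPTQ-\OPT}\le\epsilon\OPT+(1+\epsilon)\epsilon^2\OPT\le2\epsilon\OPT .
\]
The one-sided version is even cleaner: $\OPTQ\le\cost(C^*,Q)\le(1+\epsilon)\OPT$.

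There is no real obstacle here. The only point to check is that the augmented $\event$ still contains \ref{item:E1} and \ref{item:E2}, which it does by definition, so no extra probabilistic argument is needed.
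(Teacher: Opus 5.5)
Your proof is correct and follows essentially the same route as the paper. Both use the identity \eqref{eq:optq-split}, the event $(\bar{\mathrm{E}}3)$, and the bound $\sum_i\hat w_i\norm{\Delta_i}^2\le(1+\epsilon)\epsilon^2\OPT$ coming from \ref{item:E1} and \ref{item:E2}; the paper simply writes the two one-sided inequalities separately and computes that sum directly rather than citing \cref{lem:disp}(iii), which avoids the $\Gams\le D/4$ hypothesis you correctly observed is not needed.
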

\begin{proof}
Using \eqref{eq:optq-split}, $\cost(C^*,Q)=\OPTQ+\sum_i\hat w_i\norm{\Delta_i}^2$. For the first direction,
$\OPTQ\le\cost(C^*,Q)\le(1+\epsilon)\OPT$. For the second direction,
\[
\OPTQ = \cost(C^*,Q) - \sum_i\hat w_i\norm{\Delta_i}^2 \geq (1-\epsilon)\OPT - \sum_i (1+\epsilon) w_i \epsilon^2 V_i \geq (1-2\epsilon)\OPT. \qedhere
\]
\end{proof}

On $\event$ we have $\cost(C^*,Q)\le(1+\epsilon)\OPT$, so we instantiate
\eqref{eq:Gam-def} with $\lambda=1.2$, giving $\Gams^2=7.2\gamma k\OPT/\beta$. Assume
$\rho^2\le\beta/(120\gamma k)$; then $\Gams^2\le\OPT/(16\rho^2)\le D^2/16$, so
$\Gams\le D/4$ and \cref{claim:farpoints,lemma:sigma,lem:disp,lem:boundary-mean} apply.
We first bound the matched cost.

\begin{lemma}\label{claim:matched}
On $\event$, for every $v\in\Vset$,
$\bigl|\costm(C(v),P)-\costm(C(v),Q)\bigr|\le 14\epsilon\cost(C(v),P)$.
\end{lemma}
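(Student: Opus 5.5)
We start from the expansion \eqref{eq:matched-expand}, which writes the matched-cost difference as four terms:
\[
(\OPT-\OPTQ)+\sum_i(w_i-\hat w_i)\norm{v_i}^2+2\sum_i\hat w_i\ip{v_i}{\Delta_i}-\sum_i\hat w_i\norm{\Delta_i}^2 .
\]
We bound each term separately on $\event$ and then add the bounds with the triangle inequality. Every bound will be expressed as a multiple of $\cost(C(v),P)$. For this conversion we use $\OPT\le\cost(C(v),P)$, which holds because $\OPT$ is the optimal $k$-means cost and $C(v)$ is a $k$-center set.

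The bounds for the four terms are as follows.
\begin{enumerate}[nosep]
\item The term $\OPT-\OPTQ$ is controlled by \cref{cor:optbound}. It contributes at most $2\epsilon\OPT\le2\epsilon\cost(C(v),P)$.
\item The weight-deviation term $\sum_i(w_i-\hat w_i)\norm{v_i}^2$ is handled by \cref{lem:disp}(i). That lemma applies because $\Gams\le D/4$ under the present choice $\lambda=1.2$ and $\rho^2\le\beta/(120\gamma k)$, and because \ref{item:E1} is part of $\event$. This term contributes at most $4\epsilon\cost(C(v),P)$.
\item The cross term $2\sum_i\hat w_i\ip{v_i}{\Delta_i}$ is handled by \cref{lem:disp}(ii). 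It contributes at most $5(1+\epsilon)\epsilon\cost(C(v),P)\le6\epsilon\cost(C(v),P)$, using $\epsilon\le1/5$.
\item The last term $\sum_i\hat w_i\norm{\Delta_i}^2$ is handled by \cref{lem:disp}(iii). It is at most $(1+\epsilon)\epsilon^2\OPT\le\epsilon\cost(C(v),P)$.
\end{enumerate}

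Adding the four bounds gives
\[
(2+4+6+1)\epsilon\cost(C(v),P)=13\epsilon\cost(C(v),P)\le14\epsilon\cost(C(v),P),
\]
which is the claimed inequality.

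No term presents a real obstacle. The one point to check is that \cref{lem:disp} was proved under the generic event $\event$, i.e.\ using only \ref{item:E1}, \ref{item:E2} and $\Gams\le D/4$, and not the stable-section version with (E3). Its proof indeed uses only these three facts, all of which hold here. Note also that \cref{lem:disp} needs no $\lambda$-dependence beyond $v\in\Vset$; in fact its inequalities (i)--(iii) hold for any $v$ with $\norm{v_i}\le\Gams$.
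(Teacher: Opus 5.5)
Your proposal is correct and follows the paper's proof exactly: you bound the four terms of \eqref{eq:matched-expand} via \cref{cor:optbound} and items (i)--(iii) of \cref{lem:disp}, then convert $\OPT$ to $\cost(C(v),P)$ using $\epsilon\le 1/5$. Your arithmetic, giving the slightly tighter $13\epsilon$, is accurate, and so is your remark that \cref{lem:disp} uses only \ref{item:E1}, \ref{item:E2} and $\Gams\le D/4$.
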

\begin{proof}
Bound the four terms of \eqref{eq:matched-expand}: the first by \cref{cor:optbound},
the remaining three by items (i), (ii), (iii) of \cref{lem:disp}. Together with
$\OPT\le\cost(C(v),P)$ this gives
{\small
\begin{align*}
\bigl|\costm(C(v),P)-\costm(C(v),Q)\bigr| &\le 2\epsilon\OPT+4\epsilon\cost(C(v),P)+5(1+\epsilon)\epsilon\cost(C(v),P)+(1+\epsilon)\epsilon^2\OPT \\ &\le 14 \epsilon\cost(C(v),P),
\end{align*}
}
the last step using $\epsilon\le1/5$.
\end{proof}

We next bound the boundary-gain error.

\begin{lemma}\label{lem:boundary-strong}
Under $M$-bounded kurtosis and $\rho^2\le\beta/(120\gamma k)$, for a sample of (sufficiently large) size
$O (k^2M/(\beta\epsilon^2\delta))$, with probability $\ge1-\delta/2$,
\[\sup_{v\in\Vset}\abs{B_v(Q)-B_v(P)}\le\epsilon\OPT.\]
\end{lemma}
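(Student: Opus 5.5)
Our plan follows the structure of \cref{lem:boundary-stable}: bound $\E_Q Z$ via \cref{lem:boundary-mean}, bound $\Var(Z)$ via Efron--Stein, and conclude with Chebyshev, which gives the $\delta^{-1}$ (rather than $\delta^{-2}$) dependence. The difference from the stable case is that $\lambda=1.2$ is now a constant, so $\Gams^2=7.2\gamma k\OPT/\beta$ carries no $1/\delta$ factor. We will use the kurtosis assumption through \cref{lem:tail} to make the mean term small, and through \cref{lem:fourth} to make the variance term small.

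\emph{Mean bound.} By \cref{lem:boundary-mean} and \cref{lem:tail},
\[
\E_Q Z\le 63\sqrt k\,\Gams\sqrt{T/s}\le 63\sqrt k\,\Gams\sqrt{16\rho^2 Mk\OPT/(\beta s)} .
\]
Substituting $\Gams^2=7.2\gamma k\OPT/\beta$ and $\rho^2\le\beta/(120\gamma k)$ gives $\Gams^2\rho^2\le 0.06\,\OPT$, so $(\E_Q Z)^2=O(Mk\OPT^2/(\beta s))$. Alternatively, the trivial bound $T\le\OPT$ gives $(\E_Q Z)^2=O(k^2\gamma\OPT^2/(\beta s))$. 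Either bound is at most $\epsilon^2\OPT^2/4$ once $s=\Omega(k^2M/(\beta\epsilon^2))$, possibly after absorbing $\gamma$ into the constant. The separation bound lets us avoid $\gamma$ altogether, so we will use the $T$ bound.

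\emph{Variance bound.} Resampling $q_t$ changes $Z$ by at most $\frac{4\Gams}{s}(a(q_t)+a(q_t'))$, exactly as in the proof of \cref{lem:boundary-stable}. Efron--Stein then yields $\Var(Z)\le 32\Gams^2\OPT/s=O(\gamma k\OPT^2/(\beta s))$.

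This variance still carries a $\gamma$, and we expect removing it to be the main obstacle. To avoid paying for $\Gams$, we will use the second inequality in \cref{claim:farpoints}, namely $b_v(p)\le 4a(p)^2\indic{p\in F}$. This gives a replacement bound of $\frac{4}{s}(a(q_t)^2\indic{q_t\in F}+a(q_t')^2\indic{q_t'\in F})$, and hence
\[
\Var(Z)\le \frac{16}{s^2}\sum_t \E\bigl[(a^2\indic{F}+a'^2\indic{F})^2\bigr]\le \frac{64}{s}\cdot\frac1n\sum_p a(p)^4\le \frac{64Mk\OPT^2}{\beta s},
\]
using \cref{lem:fourth}. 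This is the point where the fourth moment enters, and it explains why kurtosis is required. The variance is then independent of $\gamma$ and of order $Mk\OPT^2/(\beta s)$. Taking the minimum of the two variance bounds also suffices.

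\emph{Conclusion.} Chebyshev gives
\[
\Pr\bigl[Z>\E Z+\tfrac{\epsilon}{2}\OPT\bigr]\le\frac{4\Var(Z)}{\epsilon^2\OPT^2}=O\Bigl(\frac{Mk}{\beta\epsilon^2 s}\Bigr)\le\delta/2
\]
once $s=\Omega(Mk/(\beta\epsilon^2\delta))$. Combined with $\E Z\le\epsilon\OPT/2$, this yields $Z\le\epsilon\OPT$ with probability at least $1-\delta/2$ for a sample of size $O(k^2M/(\beta\epsilon^2\delta))$. The extra factor of $k$ comes from the $\sqrt k$ in the mean bound, and since $M\ge1$ the constant-$\gamma$ terms are dominated.
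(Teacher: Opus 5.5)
Your proposal is correct and matches the paper's proof step for step: you bound the mean via \cref{lem:boundary-mean} with \cref{lem:tail}, and the variance via Efron--Stein using $b_v\le 4a^2$ and \cref{lem:fourth}. The paper finishes with Markov on $\E[Z^2]=(\E Z)^2+\Var(Z)$, whereas your Chebyshev step around $\E Z$ attaches $\delta^{-1}$ only to the $kM$ variance term, giving a slightly sharper $O\bigl(\tfrac{kM}{\beta\epsilon^2}(k+\delta^{-1})\bigr)$. One small slip: $(\E_Q Z)^2$ is $O(k^2M\OPT^2/(\beta s))$, not $O(kM\OPT^2/(\beta s))$, but the threshold you state, $s=\Omega(k^2M/(\beta\epsilon^2))$, is the right one, so the conclusion stands.
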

\begin{proof}
By \cref{lem:boundary-mean}, $\E_Q Z\le63\sqrt k\,\Gams\sqrt{T/s}$. With
$\Gams^2=7.2\gamma k\OPT/\beta$, \cref{lem:tail} and $\rho^2\le\beta/(120\gamma k)$ give
$T\le\frac{2M}{15\gamma}\OPT$ and hence
$\Gams^2T\le\frac{7.2\gamma k\OPT}{\beta}\cdot\frac{2M\OPT}{15\gamma}\le\frac{kM}{\beta}\OPT^2$, so
\[
\E_Q Z\le63\sqrt{k\,\Gams^2T/s}\le63\,\OPT\sqrt{\frac{k^2M}{\beta s}},\qquad
(\E_Q Z)^2\le3969\,\frac{k^2M}{\beta s}\OPT^2 .
\]
For the variance, by \cref{claim:farpoints}, $0\le b_v(p)\le4a(p)^2$ for every
$v\in\Vset$, so replacing $q_t$ changes $B_v(Q)$ by at most
$\frac4s(a(q_t)^2+a(q_t')^2)$ uniformly in $v\in\Vset$, and hence changes $Z$ by at
most the same amount. By Efron--Stein and \cref{lem:fourth},
\[
\Var(Z)\le\frac12\sum_t\E[(Z-Z^{(t)})^2]\le\frac{32}{s}\E[a^4]\le\frac{32Mk}{\beta s}\OPT^2 .
\]
Hence $\E[Z^2]=(\E[Z])^2+\Var(Z)\le\frac{4001k^2M}{\beta s}\OPT^2$, and Markov gives
$\Pr[Z>\epsilon\OPT]\le\frac{4001k^2M}{\beta s\epsilon^2}\le\delta/2$.
\end{proof}

\begin{proof}[Proof of \cref{thm:kmeans-strong}]
By a union bound, $\event$ and the event of \cref{lem:boundary-strong} occur
simultaneously with probability at least $1-\delta$. By \cref{lemma:sigma}, every
$C\in\union$ equals $C(v)$ for some $v\in\Vset$, so applying the identity
\eqref{eq:master} together with \cref{claim:matched} and \cref{lem:boundary-strong},
\[\abs{\cost(C,P)-\cost(C,Q)}\le 14\epsilon\cost(C,P)+\epsilon \OPT \le 15 \epsilon\cost(C,P).\]
The theorem statement is obtained by rescaling $\epsilon$.
\end{proof}

\section{Immediate Applications of Theorem~\ref{thm:stable}}\label{sec:applications}

We derive immediate applications of Theorem~\ref{thm:stable} through the transfer-to-submetric property of stable coresets, we collect them here for potential future use.

\paragraph{Kernel Clustering.}
A kernel function $K \colon X \times X \to \RR$ is characterized by the existence of a
Hilbert space $\Xspace$ and a feature map $\varphi \colon X \to \Xspace$ such that
$K(x, y) = \ip{\varphi(x)}{\varphi(y)}_\Xspace$ for all $x, y \in X$.
In \emph{kernel $1$-mean}, the goal is to minimize
$\cost_\varphi(c, P) \eqdef \frac{1}{n}\sum_{x \in P}\norm{\varphi(x) - \varphi(c) }_\Xspace^2$
over all $c \in X$.

\begin{corollary}
Let $P \subseteq X$ be a finite set with kernel $K \colon X \times X \to \RR$. Then for every $\epsilon \in (0, 1/10)$ and $\delta >0$, a uniform sample of (sufficiently large) size $O(\epsilon^{-2}\delta^{-1})$ from $P$ is,
with probability at least $1-\delta$,
a stable $(\delta\epsilon, 8\epsilon)$-coreset for kernel $1$-mean with respect to $K$.
\end{corollary}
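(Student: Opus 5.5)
The plan is to reduce kernel $1$-mean to the $1$-mean problem in the Hilbert space $\Xspace$ given by the kernel, and then use the fact that a stable coreset keeps its guarantee when the centers are restricted to any subset. Let $P' \eqdef \varphi(P) \subseteq \Xspace$, taken as a multiset so that a uniform sample $Q\subseteq P$ maps to a uniform sample $Q'=\varphi(Q)$ of $P'$ of the same size. For every $c\in X$ we have the identity $\cost_\varphi(c,P)=\cost(\varphi(c),P')$, and likewise $\cost_\varphi(c,Q)=\cost(\varphi(c),Q')$. Both follow directly from the definition of $\cost_\varphi$, since $\norm{\varphi(x)-\varphi(c)}_\Xspace^2$ is the squared Hilbert distance.

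Next I apply \Cref{thm:stable} to the finite multiset $P'$ in $\Xspace$. Its proof never uses that $P$ is a set: \cref{cl:kh}, \cref{cl:concentration_delta} and the Markov bound work unchanged for multisets, because the sample is i.i.d.\ uniform over the indices. So with probability at least $1-\delta$, $Q'$ is a stable $(\delta\epsilon,8\epsilon)$-coreset of $P'$ for $1$-mean. By \eqref{eq:stable}, this holds for every pair of centers $c_1',c_2'\in\Xspace$.

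Finally, I restrict to centers in the image $\varphi(X)\subseteq\Xspace$. Take $c_1,c_2\in X$ with $\cost_\varphi(c_1,Q)\le(1+\delta\epsilon)\cost_\varphi(c_2,Q)$. By the identity above this reads $\cost(\varphi(c_1),Q')\le(1+\delta\epsilon)\cost(\varphi(c_2),Q')$. Stability then gives $\cost(\varphi(c_1),P')\le(1+8\epsilon)\cost(\varphi(c_2),P')$, which is $\cost_\varphi(c_1,P)\le(1+8\epsilon)\cost_\varphi(c_2,P)$. This is exactly the stable $(\delta\epsilon,8\epsilon)$-coreset property for kernel $1$-mean. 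Note that $\varphi$ need not be injective and $\Xspace$ may be infinite-dimensional; neither matters, because \Cref{thm:stable} is dimension-free and only needs Hilbert space structure.

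The one step that needs care is checking that \Cref{thm:stable} applies to the multiset $P'$ (repeated images when $\varphi$ is not injective) and that uniform sampling commutes with $\varphi$. Both are immediate from the i.i.d.-index formulation of the sample, so I expect no real obstacle.
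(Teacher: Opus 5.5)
Your proposal is correct and follows the paper's intended argument. You apply \Cref{thm:stable} to the image $\varphi(P)$ in the feature Hilbert space $\Xspace$, and then use that the stable guarantee, which holds for every pair of centers in $\Xspace$, transfers to centers restricted to $\varphi(X)$; your remark that the proof goes through verbatim for the multiset $\varphi(P)$ when $\varphi$ is not injective is a detail the paper leaves implicit.
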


\paragraph{Clustering in $\ell_p$ for $0 < p \le 2$.}
\label{subsec:ellp}

For every $0 < p \le 2$ and every $0 < \alpha \le p/2$, the metric space
$(\ell_p^d, \norm{\cdot}_p^\alpha)$ embeds isometrically into a Hilbert space,
this follows from a classical theorem of \citet{schoenberg1938}
(see also~\citet{BDCK67} and~\citet[Theorem~4.10]{WellsWilliams1975}).
In addition to Corollary~\ref{cor:median}, taking $\alpha=p/2$ yields the following.

\begin{corollary}
Let $p \in (0, 2]$ and $P \subset \RR^d$ finite. Then for every $\epsilon \in (0, 1/10)$ and $\delta >0$, a uniform sample of (sufficiently large) size $O(\epsilon^{-2}\delta^{-1})$ from $P$ is,
with probability at least $1-\delta$, a stable
$(\delta\epsilon, 8\epsilon)$-coreset for the cost
\[
\cost(c, P) \;\eqdef\; \frac{1}{n}\sum_{x \in P} \norm{x - c}_p^{\,p}.
\]
\end{corollary}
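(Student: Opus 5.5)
The plan is a direct reduction to \cref{thm:stable} via the transfer-to-submetric property of stable coresets. Since $0<p\le 2$ and $\alpha=p/2$, the theorem of \citet{schoenberg1938} cited just above gives an isometric embedding $\varphi\colon(\RR^d,\norm{\cdot}_p^{p/2})\to\Xspace$ into some Hilbert space $\Xspace$, i.e.\ $\norm{\varphi(x)-\varphi(y)}_\Xspace=\norm{x-y}_p^{p/2}$ for all $x,y\in\RR^d$. Squaring, $\norm{\varphi(x)-\varphi(c)}_\Xspace^2=\norm{x-c}_p^p$. Hence for every $c\in\RR^d$ and every finite (multi)set $S\subseteq\RR^d$ we have
\[
  \cost(c,S)=\frac1{|S|}\sum_{x\in S}\norm{x-c}_p^p=\cost\bigl(\varphi(c),\varphi(S)\bigr),
\]
where the right-hand side is the $1$-mean cost in $\Xspace$. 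Since only finitely many points of $P$ appear, it suffices to embed the finite set $P\cup\{c_1,c_2\}$ for each pair of candidates; alternatively we can use a single embedding of all of $\RR^d$ into a (nonseparable if needed) Hilbert space, which Schoenberg's theorem provides via the positive-definite kernel $e^{-t\norm{x-y}_p^p}$.

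Next, I would apply \cref{thm:stable} to the finite set $\varphi(P)\subseteq\Xspace$. A uniform sample $Q\subseteq P$ of size $O(\epsilon^{-2}\delta^{-1})$ maps to the uniform sample $\varphi(Q)$ of $\varphi(P)$ of the same size. Multiplicities are preserved, and if $\varphi$ is not injective this does no harm, because the theorem only concerns the multiset of images. So with probability at least $1-\delta$, $\varphi(Q)$ is a stable $(\delta\epsilon,8\epsilon)$-coreset of $\varphi(P)$ for $1$-mean over all centers in $\Xspace$.

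Finally, I would restrict to the submetric. Take any $c_1,c_2\in\RR^d$ with $\cost(c_1,Q)\le(1+\delta\epsilon)\cost(c_2,Q)$. By the identity above, this is the same inequality for $\varphi(c_1),\varphi(c_2)\in\Xspace$ on $\varphi(Q)$. Stability in $\Xspace$ then gives $\cost(\varphi(c_1),\varphi(P))\le(1+8\epsilon)\cost(\varphi(c_2),\varphi(P))$, and translating back yields the claim. The quantifier order is essentially fine, since the good event is defined on $\varphi(Q)$ and covers all centers of $\Xspace$ simultaneously, so in particular all of $\varphi(\RR^d)$.

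The only point needing care is producing one fixed embedding valid for all candidate centers, so that the probabilistic event does not depend on the candidates. I would handle this with the global Schoenberg embedding of $(\RR^d,\norm{\cdot}_p^{p/2})$ into a Hilbert space. \cref{thm:stable} is stated for arbitrary Hilbert spaces, so that setting is allowed.
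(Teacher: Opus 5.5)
Your proposal is correct and matches the paper's argument. Schoenberg's theorem embeds $(\RR^d,\norm{\cdot}_p^{p/2})$ isometrically into a Hilbert space, so squared Hilbert distances equal $\norm{x-c}_p^p$; you then apply \cref{thm:stable} to the image of $P$ and restrict the resulting stable guarantee, which holds simultaneously for all centers in $\Xspace$, to the image of $\RR^d$. Using one global embedding so that the good event does not depend on the candidate centers is the right point to check, and the paper leaves it implicit in its appeal to the transfer-to-submetric property.
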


\paragraph{Approximation algorithms for $k$-median.}
A known framework reduces the design of approximation algorithms for $k$-median to obtaining weak coreset guarantees for the $1$-median problem through uniform sampling~\citep{DBLP:conf/focs/KumarSS04,DBLP:journals/talg/AckermannBS10}.
This framework was recently instantiated by~\citep{carmel2025stablecoresetsunleashingpower} using stable coresets, yielding approximation algorithms for $k$-median in $\ell_1$ and every submetric of $\ell_1$, including Hamming, Kendall--tau, Jaccard, and tree metrics (see Theorem~5.4 in~\citep{carmel2025stablecoresetsunleashingpower}). Since the running time of the framework depends exponentially on the coreset size, our improved stable coreset bounds translate directly into faster approximation algorithms.

\end{document}